\definecolor{darkgreen}{rgb}{0,0.5,0}
\definecolor{darkblue}{rgb}{0,0,0.8}
\newtheorem{theorem}{Theorem}[section]
\newtheorem{lemma}[theorem]{Lemma}
\newtheorem{corollary}[theorem]{Corollary}
\newtheorem{observation}[theorem]{Observation}
\newtheorem{definition}{Definition}[section]
\newtheorem*{remark}{Remark}
\renewcommand{\vec}[1]{\ensuremath{\boldsymbol{#1}}}
\newcommand{\whp}{w.h.p.}
\newcommand{\multicoloring}{multi-coloring}
\newcommand{\nonmulticoloring}{non-multi coloring}
\newcommand{\calT}{\ensuremath{\mathcal{T}}}
\newcommand{\calG}{\ensuremath{\mathcal{G}}}
\newcommand{\calE}{\ensuremath{\mathcal{E}}}
\newcommand{\calV}{\ensuremath{\mathcal{V}}}
\newcommand{\calI}{\ensuremath{\mathcal{I}}}
\newcommand{\calA}{\ensuremath{\mathcal{A}}}
\newcommand{\calP}{\ensuremath{\mathcal{P}}}
\newcommand{\calC}{\ensuremath{\mathcal{C}}}
\newcommand{\instances}{\mathcal{C}}
\newcommand{\ignore}[1]{}
\algnewcommand\algorithmicswitch{\textbf{switch}}
\algnewcommand\algorithmiccase{\textbf{case}}
\newcommand{\LOCAL}{\ensuremath{\mathsf{LOCAL}}\xspace}
\newcommand{\SLOCAL}{\ensuremath{\mathsf{SLOCAL}}\xspace}
\newcommand{\eps}{\varepsilon}
\newcommand{\poly}{\operatorname{\text{{\rm poly}}}}
\newcommand{\set}[1]{\left\{#1\right\}}
\newcommand{\logStar}[1]{\log^{*} #1}
\DeclareMathOperator{\polylog}{\poly\log}
\newcommand{\N}{\mathbb{N}}
\newcommand{\ZZ}{\mathbb{Z}}
\newcommand{\bigO}{O}
\newcommand{\Alg}{\mathcal{A}}
\newcommand{\complexityclass}[2][]{\ensuremath{\mathsf{#2}\ifthenelse{\isempty{#1}}{}{(#1)}}}
\newcommand{\complexClass}{\complexityclass[]{C}}
\newcommand{\loc}[1][]{\complexityclass[#1]{LOCAL}}
\newcommand{\ld}[1][]{\complexityclass[#1]{LD}}
\newcommand{\seqloc}[1][]{\complexityclass[#1]{SLOCAL}}
\newcommand{\randloc}[2][]{\complexityclass[#1]{RLOCAL_{#2}}}
\newcommand{\randseqloc}[2][]{\complexityclass[#1]{RSLOCAL_{#2}}}
\newcommand{\polyloc}{\complexityclass{P}\text{-}\complexityclass{LOCAL}}
\newcommand{\polyseqloc}{\complexityclass{P}\text{-}\complexityclass{SLOCAL}}
\newcommand{\polyrandloc}[1]{\complexityclass{P}\text{-}\complexityclass{RLOCAL_{#1}}}
\newcommand{\polyrandseqloc}[1]{\complexityclass{P}\text{-}\complexityclass{RSLOCAL_{#1}}}
\newcommand{\ym}[1]{\textit{ \textcolor[rgb]{1,0,0}{[YM: #1]}}}
\newcommand{\hide}[1]{}
\newcommand{\FullOrShort}{short}
  \newcommand{\fullOnly}[1]{#1}
  \newcommand{\shortOnly}[1]{}
  \newcommand{\shortOnly}[1]{#1}
  \newcommand{\fullOnly}[1]{}
\begin{document}

\title{On the Complexity of Local Distributed Graph
  Problems}

\author{
Mohsen Ghaffari \\
ETH Zurich, Switzerland \\
\small \tt ghaffari@inf.ethz.ch
\and
Fabian Kuhn\thanks{Supported by ERC
  Grant No.\ 336495 (ACDC)}\\
University of Freiburg, Germany \\
\small\tt kuhn@cs.uni-freiburg.de
\and 
Yannic Maus\footnotemark[1]\\
University of Freiburg, Germany \\
\small\tt yannic.maus@cs.uni-freiburg.de
}

\date{}

\maketitle

\thispagestyle{empty}

\begin{abstract}
  This paper is centered on the complexity of graph problems in the
  well-studied \LOCAL model of distributed computing, introduced by
  Linial [FOCS '87].  It is widely known that for many of the classic
  distributed graph problems (including maximal independent set (MIS)
  and $(\Delta+1)$-vertex coloring), the randomized complexity is at
  most polylogarithmic in the size $n$ of the network, while the best
  deterministic complexity is typically $2^{O(\sqrt{\log n})}$.
  Understanding and potentially narrowing down this exponential gap is
  considered to be one of the central long-standing open questions in
  the area of distributed graph algorithms.

  We investigate the problem by introducing a complexity-theoretic
  framework that allows us to shed some light on the role of
  randomness in the \LOCAL model. We define the \SLOCAL model as a
  sequential version of the \LOCAL model. Our framework allows us to
  prove \emph{completeness} results with respect to the class of
  problems which can be solved efficiently in the \SLOCAL model,
  implying that if any of the complete problems can be solved
  deterministically in $\poly\log n$ rounds in the \LOCAL model, we
  can deterministically solve all efficient \SLOCAL-problems
  (including MIS and $(\Delta+1)$-coloring) in $\poly\log n$ rounds in
  the \LOCAL model.

  Perhaps most surprisingly, we show that a rather rudimentary looking
  graph coloring problem is \emph{complete} in the above sense: Color
  the nodes of a graph with colors red and blue such that each node of
  sufficiently large polylogarithmic degree has at least one neighbor
  of each color. The problem admits a trivial zero-round randomized
  solution.  The result can be viewed as showing that the only
  obstacle to getting efficient determinstic algorithms in the \LOCAL
  model is an efficient algorithm to \emph{approximately round
    fractional values into integer values}.

  In addition, our formal framework also allows us to develop
  polylogarithmic-time randomized distributed algorithms in a simpler
  way. As a result, we provide a polylog-time distributed
  approximation scheme for arbitrary distributed covering and packing
  integer linear programs.
\end{abstract}


\newpage

\hide{
\thispagestyle{empty}
\tableofcontents

\newpage
}

\setcounter{page}{1}

\section{Introduction \& Related Work}
\label{sec:intro}
The question of whether a given distributed problem can be solved
locally has been at the center of the theory of distributed graph
algorithms since the 1980s, especially starting with the seminal
work of Awerbuch, Goldberg, Luby, and Plotkin~\cite{awerbuch89},
Linial~\cite{linial92}, and Naor and Stockmeyer~\cite{naor95}. The
locality of distributed computations is captured by the \LOCAL
model~\cite{linial92,peleg00}, defined as follows: a network is modeled as an
undirected graph $G=(V,E)$, the nodes $V$ are the network devices, and
the edges $E$ are bidirectional communication links. Time is divided
into synchronous communication rounds. In each round, each node can
perform some arbitrary internal computation, send a message of
possibly arbitrary size to each of its neighbors, and receive the
messages sent to it by its neighbors.  A typical objective in this
setting is to solve some given graph problem on the network $G$ by a
distributed algorithm. For example, classic problems include computing a vertex or an edge
coloring with a given number of colors
\cite{awerbuch89,barenboim10,barenboim12,barenboimelkin_book,BEK15,barenboim15,chang16,cole86,fraigniaud16,goldberg88,linial92,disc16_coloring,hsinhao_coloring,szegedy93},
computing a maximal independent set (MIS) or a maximal
matching \cite{alon86,barenboim12,hanckowiak01,kuhn16_jacm,luby86,linial92,ghaffari16_MIS},
or approximating classic optimization problems with
local constraints such as maximum matching, minimum vertex cover, or
minimum dominating set
\cite{czygrinow04,dubhashi05,goeoes14_DISTCOMP,jia02,nearsighted,kuhn16_jacm,suomela_survey}.
In any $r$-round algorithm in the \LOCAL model, the output of a node
$v$ can depend only on the initial states of nodes in the $r$-hop
neighborhood of $v$, but it can be an arbitrary function of this
neighborhood \cite{linial92}. Therefore, the \LOCAL model captures a
core issue of distributed computations in a precise mathematical
sense: What global goals can be achieved based on only local
information.

\medskip

\noindent\textbf{The Role of Randomness:} 
A major challenge in designing fast distributed algorithms in the
\LOCAL model is to break symmetries and coordinate actions among
nearby nodes.
It is maybe not surprising that this has turned out much easier if the
nodes are allowed to use randomization.\footnote{For example when
  computing a coloring with $\Delta+1$ colors (where $\Delta$ is the
  maximum degree of the network graph $G$), with high probability, it
  suffices to iterate the following simple randomized coloring scheme
  $O(\log n)$ times: Given any partial initial coloring, each
  uncolored node $v$ picks a uniformly random color among the colors
  still available to $v$. If $v$ randomly picks a color $x$ not chosen
  by any neighbor in the same iteration, $v$ outputs color $x$ and
  otherwise the color of $v$ remains undecided.} As a result, for many
important problems, there currently is an \emph{exponential gap}
between the time complexity of the best \emph{randomized} and the best
\emph{deterministic} distributed algorithms. Typically, an algorithm
in the \LOCAL model is considered efficient if its time complexity is
polylogarithmic in the number of nodes $n$. For a large number of
fundamental distributed graph problems (including MIS and
$(\Delta+1)$-coloring), there are logarithmic or polylog-time
randomized distributed algorithms (e.g.,
\cite{alon86,ghaffari16_MIS,nearsighted,linial92,linial93,luby86,hsinhao_coloring}),
whereas the best kown deterministic distributed algorithms have time
complexity $2^{O(\sqrt{\log n})}$
\cite{awerbuch89,panconesi95}. Understanding whether this exponential
separation is inherent is considered to be one of the major
long-standing open problems of the area
\cite{barenboimelkin_book,linial92}. Recently, in \cite{chang16, ghaffari17} (see
also \cite{LLL_lowerbound}), it has been shown that in the
\LOCAL model, there are problems --- e.g., $\Delta$-coloring trees or
computing a sinkless orientation --- with a deterministic complexity of
$\Theta(\log_\Delta n)$, while the randomized complexity is $\Theta(\log\log_\Delta n)$. However, the classic open question of
whether such an exponential separation also holds when ignoring
polylogarithmic factors remains open. One of the main objectives of
our work is to shed some light on this long-standing open problem.

\medskip

\noindent\textbf{A Complexity-Theoretic Perspective:} In this paper, we investigate the \emph{role of randomness} in
distributed graph algorithms from a \emph{complexity-theoretic}
viewpoint. In particular, we study the class \polyloc\ of all graph
problems which can be solved deterministically in polylogarithmic time
in the \LOCAL model and we define a much wider class \polyseqloc\ of
problems which informally consists of all problems where the output of
all nodes is determined by sequentially looking at a polylog-radius
neighborhood of each node. In particular, the class \polyseqloc\
contains all the above mentioned classic problems for which
polylog-time randomized distributed algorithms are known and where the
current best deterministic solutions require time
$2^{O(\sqrt{\log n})}$. We prove that a number of natural
distributed graph problems are \polyseqloc-complete: If any of these
problems has a deterministic polylog-time distributed algorithm, all
problems in \polyseqloc\ can be solved deterministically in polylog
time in the \LOCAL model and thus $\polyloc=\polyseqloc$. 

Perhaps most surprisingly, we prove that the following natural and
rudimentary-looking \emph{rounding} problem is \polyseqloc-complete:
We are given a bipartite graph $B=(U\dot{\cup} V, E)$, where the
degree of each node in $U$ is at least $\log^c n$ for a desirably
large constant $c\geq 2$. The objective is to color each node in $V$
red or blue such that for each node in $U$, the degree is
approximately equally split. In fact any coarse but non-trivial
relaxation of `\emph{approximately equal}' suffices, e.g., it is
enough if the neighbors in the two colors have the same size up to
poly-logarithmic factors. Using randomization, this can be done
without any communication---i.e., in zero rounds---via independently
coloring each node in $V$ red or blue with probability $1/2$. The
problem can be seen as a basic rounding problem with linear
constraints. Hence, in a certain sense, we show that the \emph{only
  obstacle to efficient deterministic distributed algorithms} is an
\emph{efficient deterministic algorithm for rounding fractional to
  integer values}.

\medskip

\noindent\textbf{Implications on Randomized Distributed Algorithms:} From our completeness results, it also immediately follows that all
problems in \polyseqloc\ have \emph{polylog-time randomized solutions}
in the \LOCAL model. Thus, in addition to providing a tool to study
the hardness of local symmetry breaking and coordination problems, the
\polyseqloc\ model provides a useful abstraction that simplifies
studying what can be solved efficiently in the \LOCAL model when
allowing randomization. In particular, we show that computing
$(1+\eps)$-approximate solutions for general covering and packing
integer linear programs is in \polyseqloc. This directly implies that
covering and packing integer linear programs (such as e.g., the
minimum dominating set problem or the maximum independent set problem)
can be approximated arbitrarily well in polylogarithmic time in the
\LOCAL model. This significantly improves the best existing algorithms
for these problems
\cite{barenboim15_decomp,podc16_BA,jia02,kuhn16_jacm}.

\medskip

In the following, we discuss our contributions and additional related
work in more detail.

\vspace*{-3mm}

\subsection{Sequential Local Computations}
As argued, one of the main challenges in the \LOCAL model is
to locally coordinate the parallel actions of nearby nodes. Such local
coordination becomes significantly easier if we remove the inherent
parallelism of distributed computations and if the outputs of all the
nodes can be computed sequentially, one node at a time. This can be
well illustrated by the MIS or the $(\Delta+1)$-coloring problem. In
both cases, there is a trivial greedy algorithm which sequentially
processes all the nodes in an arbitrary order. In order to determine
the output value of a node $v$, the sequential MIS and
$(\Delta+1)$-coloring algorithms merely need to inspect the already
computed outputs of the neighbors of $v$.

We generalize the above basic greedy algorithms and define the \SLOCAL
model. In the \SLOCAL model, nodes are processed in an arbitrary
order. When a node $v$ is processed, it can see the current state of
its $r$-hop neighborhood for some $r\geq 0$ and compute its output as
an arbitrary function of this. In addition, $v$ can locally store an
arbitrary amount of information, which can be read by later nodes as
part of $v$'s state. We say that $r$ is the \emph{locality} of an
algorithm in the \SLOCAL model. The model is defined precisely and
discussed more thoroughly in \Cref{sec:SLOCAL}.

The \SLOCAL model is loosely related to other sequential models in
which, when studying a graph problem, the output of a single node has
to be determined by only considering a small part of the graph. In
particular, we would like to mention \emph{Local Computation Algorithms} (LCA)\cite{lca11,lca12}. In LCAs, the focus is on
bounding the local computation and the space for computing the output
of each node to a sublinear or even $\poly\log n$. In contrast, we purposefully do
not bound local computations or space in any way. As we later show
completeness w.r.t.\ complexity classes of \SLOCAL algorithms, we
would like the \SLOCAL model to be as general as possible. Unlike the
\SLOCAL model, LCAs allow some shared randomness and sometimes also
some small amount of global memory. We do not allow any globally
shared state as this would make the model too powerful\footnote{E.g., even one bit
of global memory would allow to solve \emph{leader election}, which clearly cannot be solved locally.}.



\subsection{Complexity Classes}

We introduce two basic complexity classes which are informally defined
as follows. The class $\loc(t)$ consists of all distributed graph
problems which can be solved deterministically in $t$ rounds in the
\LOCAL model. The class $\seqloc(t)$ consists of all distributed graph
problems which can be solved deterministically with locality $t$ in
the \SLOCAL model. For formal definitions of all the complexity
classes, refer to \Cref{sec:classes}. Note that the simple
greedy algorithms show that MIS and $(\Delta+1)$-coloring are in the
class $\seqloc(1)$, whereas we only know that they are in
the class $\loc\big(2^{c\sqrt{\log n}}\big)$ for some constant $c>0$
\cite{panconesi95}. We are mostly interested in \LOCAL and \SLOCAL
algorithms with locality polylogarithmic in the number of nodes $n$. Thus, we define $\polyloc := \loc\big(\log^{O(1)}n\big)$ and
$\polyseqloc := \seqloc\big(\log^{O(1)}n\big)$ to capture algorithms
with polylogarithmic locality.

Our approach can be viewed as an extension of the recent fundamental
work of Fraigniaud, Korman, and Peleg in \cite{fraigniaud13} on the
complexity of \emph{distributed decision problems}. In a distributed
decision problem, every node has to output either \emph{yes} or
\emph{no} such that for yes-instances, all nodes output \emph{yes},
whereas for no-instances, at least one node outputs \emph{no}. In
\cite{fraigniaud13}, the class $\ld(t)$ is defined as the set all
distributed decision problems which can be solved in $t$ rounds in the
\LOCAL model. The class $\loc(t)$ extends $\ld(t)$ to
\emph{distributed search problems} and we thus have
$\ld(t)\subset \loc(t)$. The work started in \cite{fraigniaud13} lead
to series of insightful results
\cite{feuilloley16_decisionhierarchy,fraigniaud14_randdecision,fraigniaud13_decisionIDs,fraigniaud15_decisionLabels}. We
would however like to stress that while in the standard sequential
setting, there are standard techniques for transforming many standard
search problems into decision problems, the situation is very
different in the distributed setting. In fact, most of the standard
distributed search problems cannot be reduced to corresponding decision versions and
studying decision problems is not sufficient to capture some of the
core difficulties when developing algorithms for the \LOCAL model.



\subsection{Problem Definitions and Completeness Results}
\label{sec:problemDefinitionResults}

We will show that all the problems in \polyseqloc\ can be solved in
randomized polylog time and in deterministic $2^{O(\sqrt{\log n})}$
time in the \LOCAL model.  Hence, except for the potential additional
power of using randomization in the \SLOCAL model, the class
(deterministic) \polyseqloc\ exactly captures what can be solved in
polylog randomized time in the \LOCAL model. To understand the
separation between randomized and deterministic distributed
algorithms, we thus need to study the deterministic complexity of the
problems in \polyseqloc\ in the \LOCAL model.

For distributed graph problems $\calP_1$ and $\calP_2$, we say that
$\calP_1$ is \emph{polylog-reducible} to $\calP_2$ if a polylog-time
deterministic distributed algorithm for $\calP_2$ implies a
polylog-time deterministic distributed algorithm for $\calP_1$. We
define a problem $\calP$ to be \emph{\polyseqloc-complete} if
$\calP\in \polyseqloc$ and any problem in $\polyseqloc$ is
polylog-reducible to $\calP$. Hence, if any \polyseqloc-complete
problem can be solved deterministically in polylog time in the \LOCAL
model, we have $\polyloc=\polyseqloc$ and thus all problems in
\polyseqloc\ have deterministic polylog-time \LOCAL algorithms.


The best known deterministic algorithms for MIS and
$(\Delta+1)$-coloring, as well as for many other problems in
\polyseqloc\ are based on a decomposition of the network into clusters
of small diameter, which was defined by Awerbuch et al.\ in
\cite{awerbuch89}.

\begin{definition}[\textbf{Network Decomposition}]\label{def:decomposition}\cite{awerbuch89}
  A weak (strong) \emph{$\big(d(n),c(n)\big)$-decomposition} of an
  $n$-node graph $G=(V,E)$ is a partition of $V$ into clusters such
  that each cluster has weak (strong) diameter at most $d(n)$ and the
  cluster graph is properly colored with colors $1,\dots,c(n)$.
\end{definition}

In \cite{awerbuch89}, it is shown that for
$d(n)=c(n)=2^{O(\sqrt{\log n\log\log n})}$, such a decomposition can
be computed deterministically in $2^{O(\sqrt{\log n\log\log n})}$
rounds in the \LOCAL model. This was later improved by Panconesi and
Srinivasan who managed to get rid of the $\log\log n$ terms in all the
above bounds \cite{panconesi95}. It is not hard to see that given a
$\big(d(n),c(n)\big)$-decomposition, an MIS, a $(\Delta+1)$-coloring,
and in fact many other standard graph problems can be computed
deterministically in time $O\big(d(n)c(n)\big)$ in the \LOCAL
model. Using the decomposition of \cite{panconesi95}, this results in
deterministic distributed algorithms with time complexity
$2^{O(\sqrt{\log n})}$.

In \cite{linial93}, Linial and Saks show that every graph has a
$\big(O(\log n),O(\log n)\big)$-decomposition and that such a
decomposition can be computed by a randomized algorithm in
$O(\log^2 n)$ rounds.\footnote{As pointed out in \cite{linial93}, the
  existence of a $\big(O(\log n),O(\log n)\big)$-decomposition
  essentially already follows implicitly from the work of Awerbuch and
  Peleg \cite{Awerbuch-Peleg1990}.} It has commonly been understood
that the network decomposition problem takes a central role in
understanding the complexity of local distributed computations
\cite{awerbuch96,awerbuch89,barenboim12_decomp,barenboim15_decomp,elkin16_decomp,linial93,panconesi95}. We
make the key significance of network decomposition formal by proving
the following theorem.

\begin{theorem}\label{thm:decomposition}
  The problem of computing a weak or strong
  $(\poly\log n, \poly\log n)$-decomposition of a given
  $n$-node network graph $G$ is \polyseqloc-complete.
\end{theorem}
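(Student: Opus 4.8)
\emph{The statement has two parts, which I would prove separately.} \emph{Membership:} a \emph{strong} $(\polylog n,\polylog n)$-decomposition --- hence also a weak one --- can be computed in the \SLOCAL model with polylogarithmic locality. \emph{Hardness:} it suffices to handle the weak version, since a strong decomposition is in particular a weak one, and for this I would show that a deterministic polylog-time \LOCAL algorithm for weak network decomposition yields, for \emph{every} $\calP\in\polyseqloc$, a deterministic polylog-time \LOCAL algorithm for $\calP$, whence $\polyloc=\polyseqloc$.

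\emph{Membership} is the routine direction: implement the classical recursive low-diameter clustering of Awerbuch--Peleg / Linial--Saks inside the \SLOCAL model. Conceptually one runs $O(\log n)$ phases; within a phase the still-unclustered nodes are scanned in order of ID, and whenever a scanned node $v$ is still unclustered one grows a BFS ball around $v$ inside the subgraph induced by the currently unclustered nodes, stopping at the first radius $\rho$ at which increasing the radius by one enlarges the ball by a factor less than two (such a $\rho\le\log n$ exists by a Moore-type counting bound). That ball becomes a cluster with the current phase's color, and the ball \emph{together with its one-hop boundary shell} is deleted from the unclustered set. One then checks: clusters have strong diameter $O(\log n)$; two clusters of the same phase are non-adjacent, because a node of a later cluster adjacent to an earlier cluster would have been deleted as part of that cluster's shell; and each ball-growing step clusters at least half of the vertices it deletes, so a constant fraction of the remaining vertices is clustered per phase and $O(\log n)$ phases and colors suffice. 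Since each individual ball-growing step only reads and alters an $O(\log n)$-radius neighborhood, the whole construction can be carried out within the \SLOCAL model with total locality $\polylog n$: a node, when processed, determines its own cluster-ID and color from a $\polylog n$-radius view together with the stored states in it.

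For \emph{hardness}, let $\calP\in\polyseqloc$ be witnessed by an \SLOCAL algorithm $\Alg$ of locality $r=\polylog n$, and suppose we are handed a deterministic \LOCAL algorithm $\mathcal D$ that computes a weak $(d,c)$-decomposition with $d,c=\polylog n$ in $T=\polylog n$ rounds. The crucial step is to run $\mathcal D$ not on $G$ but on the power graph $G^{2r+1}$: one round on $G^{2r+1}$ costs $2r+1=\polylog n$ rounds on $G$ and the vertex set is unchanged, so in $T\cdot(2r+1)=\polylog n$ rounds every node learns a cluster (of $G$-diameter $O(dr)$) and a color in $\{1,\dots,c\}$, with the key property that any two clusters of the same color are pairwise at $G$-distance more than $2r$. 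Now iterate over the color classes $i=1,\dots,c$. In iteration $i$, every node of a color-$i$ cluster $C$ spends $O(dr+r)=\polylog n$ rounds collecting $B_G(C,r)$ together with, for every vertex in it, its initial input and --- if that vertex lies in a cluster of color $<i$ --- its already-finalized output and stored \SLOCAL state; since distinct color-$i$ clusters are more than $2r$ apart, these collected regions are pairwise disjoint, so this is well defined. Inside $C$ one then locally simulates $\Alg$ on the vertices of $C$ in increasing ID order: when $\Alg$ processes $v\in C$ it consults the current state of $B_G(v,r)$, which consists solely of (a) initial inputs of not-yet-processed vertices (those in colors $>i$ and the later vertices of $C$), (b) finalized states of vertices in colors $<i$, and (c) in-simulation states of earlier vertices of $C$ --- and, crucially, contains no vertex of a different color-$i$ cluster --- all of which $C$ has already gathered. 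Hence the simulation reproduces exactly the run of $\Alg$ under the vertex order ``all of color $1$, then all of color $2$, \dots, each class in increasing ID order''; since $\Alg$ is correct under every processing order, the resulting labeling solves $\calP$, and the total cost is $T(2r+1)+c\cdot O(dr+r)=\polylog n$.

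The entire difficulty sits in the reduction, and specifically in recognizing that one must decompose a power $G^{\Theta(r)}$ of $G$ rather than $G$ itself: without the distance-$\Theta(r)$ separation between same-color clusters, the $r$-hop views that $\Alg$ inspects for two vertices lying in different color-$i$ clusters would overlap, and no single \SLOCAL processing order would make the parallel iteration-$i$ simulations simultaneously legal. Everything else is routine: a cluster of polylogarithmic weak (or strong) diameter can be collected at each of its vertices in polylogarithmically many \LOCAL rounds --- this is why a weak decomposition already suffices for the reduction, non-adjacency of same-color clusters being a property of the cluster graph and not of how diameter is measured --- and the unbounded message sizes of the \LOCAL model let the (possibly enormous) \SLOCAL states of already-finished clusters be forwarded at no cost.
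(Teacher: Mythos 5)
Your proof is correct and essentially matches the paper's: membership is via the sequential ball-growing of Linial--Saks and Awerbuch--Peleg adapted to the \SLOCAL model (the paper's \Cref{lemma:networkDecomp}), and hardness by running the decomposition on a polylogarithmic power of $G$ and sweeping through the color classes, simulating the \SLOCAL algorithm within same-colored clusters in parallel since they lie at distance greater than the \SLOCAL locality. The only presentational difference is that the paper factors the hardness argument through the intermediate notion of a low-diameter ordering (\Cref{def:ordering}, \Cref{lemma:orderingHard}, \Cref{observation:lowDiameterNetworkDecomp}), whereas you inline that step directly into the simulation.
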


Given the order $\pi$ in which an \SLOCAL-algorithm $\calA$ processes
the nodes of a graph $G$, there is a direct way to execute $\calA$ in
a distributed setting. If the locality of $\calA$ is $r$, a node $v$
can compute its output as soon as all nodes within distance $r$ which
appear before $v$ in $\pi$ have computed their outputs. If the maximum
length of such a dependency chain is $T$, this leads to a $Tr$-round
distributed algorithm for $\calA$. Unfortunately, the maximum
dependency chain cannot be bounded by a small function, e.g., if $G$
is a complete graph, there is always a dependency chain of length
$n$. However, in the \LOCAL model, for a node $v$ to determine its
output in $R$ rounds, it suffices if $v$ can learn all its dependency
chains, i.e., if all the dependency chains of $v$ are contained in the
$R$-neighborhood of $v$ in $G$. A given \SLOCAL-algorithm thus has an
efficient distributed implementation if we can find an order $\pi$ on
the nodes such that any dependency chain is contained in a
small-diameter neighborhood.

\begin{definition}[\textbf{Low Diameter Ordering}]\label{def:ordering}
  Given an $n$-node graph $G=(V,E)$, a \emph{$d(n)$-diameter ordering} of
  $G$ is an assignment of unique labels to all nodes $V$ such that
  for any path $P$ on which the labels are increasing along $P$, any
  two nodes of $P$ are within distance $d(n)$ in $G$.
\end{definition}

Note that on the complete graph, any order $\pi$ is a $1$-diameter
ordering. We will show that every $n$-node graph $G$ has an
$O(\log^2n)$-diameter ordering and that we get the following theorem.

\begin{theorem}\label{thm:ordering}
  There is a constant $c>0$ such that for every function $d(n)$ with
  $c\ln^2n\leq d(n)=\log^{O(1)}n$, computing a $d(n)$-diameter
  ordering of an $n$-node graph $G$ is \polyseqloc-complete.
\end{theorem}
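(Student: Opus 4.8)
I would prove the two halves of \polyseqloc-completeness separately. The constant $c$ will be exactly the one coming from the existence bound in the first half; the hypothesis $d(n)=\log^{O(1)}n$ enters only in the second half, to keep a product polylogarithmic, and since enlarging $d(n)$ only weakens the ordering requirement it suffices to handle the smallest admissible value $d(n)=c\ln^2 n$ together with the generic case $d(n)=\log^{O(1)}n$.

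\medskip
\noindent\emph{Diameter ordering is in \polyseqloc.}
The key observation is that a weak $(O(\log n),O(\log n))$-network decomposition can be turned into an $O(\log^2 n)$-diameter ordering with no extra exploration: give each node $v$ the label $(\chi(v),\mathrm{ID}(v))$, ordered lexicographically, where $\chi(v)\in\{1,\dots,O(\log n)\}$ is the color of $v$'s cluster. These labels are distinct, and along any path $P$ with increasing labels the colors are nondecreasing; the maximal monochromatic sub-paths of $P$ are connected and, since clusters of equal color are non-adjacent, each lies inside a single cluster and hence has $G$-diameter $O(\log n)$; as $P$ meets at most $O(\log n)$ colors, its endpoints are at $G$-distance $O(\log^2 n)$. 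Such a decomposition is in \polyseqloc\ (contained in \Cref{thm:decomposition}; concretely, sequential ball-carving works: over $O(\log n)$ color classes, process the still-unclustered nodes in the given order, and at an unclustered node $v$ grow a BFS ball in the remaining graph until the next layer fails to enlarge it by a factor $3/2$, then make that ball a cluster of the current color and remove it together with its next layer --- the stopping rule caps the radius at $O(\log n)$, makes equal-colored clusters non-adjacent, and clusters a constant fraction of the nodes still active at the start of each phase). Running this \SLOCAL\ algorithm and then relabeling gives an \SLOCAL\ algorithm of polylogarithmic locality producing an $O(\log^2 n)$-diameter ordering; choosing $c$ to be at least the hidden constant settles membership for every admissible $d(n)$.

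\medskip
\noindent\emph{Diameter ordering is \polyseqloc-hard.}
Let $\calP\in\polyseqloc$ be solved by an \SLOCAL\ algorithm $\calA$ of locality $r=r(n)=\log^{O(1)}n$, and assume a deterministic $T(n)=\log^{O(1)}n$-round \LOCAL\ algorithm $B$ for $d(n)$-diameter ordering. Run $B$ on the power graph $G^{r}$, whose vertex set is that of $G$ and one of whose rounds is simulated by $r$ rounds of $G$: in $r\cdot T(n)=\log^{O(1)}n$ rounds this yields a $d(n)$-diameter ordering $\pi$ of $G^{r}$ (which exists because $G^{r}$ is an $n$-node graph and $d(n)\ge c\ln^2 n$). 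Now simulate $\calA$ with the vertex order given by $\pi$. The output of $\calA$ at $v$ depends only on the initial states of the nodes in the ``dependency cone'' $D(v)$ --- the minimal set with $v\in D(v)$ such that $w\in D(v)$, $\mathrm{dist}_G(w,w')\le r$ and $\pi(w')<\pi(w)$ force $w'\in D(v)$ --- together with the $r$-neighborhoods in $G$ of these nodes. Every $x\in D(v)$ is reached from $v$ by a chain $v=w_0,\dots,w_k=x$ with $\mathrm{dist}_G(w_i,w_{i+1})\le r$ and strictly decreasing $\pi$-values, i.e.\ by a path in $G^{r}$ with strictly decreasing labels; hence the defining property of $\pi$ gives $\mathrm{dist}_{G^{r}}(v,x)\le d(n)$, so $\mathrm{dist}_{G}(v,x)\le r\,d(n)$. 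Therefore $v$ can determine its output --- consistently with every other node, since $\pi$ and $\calA$ are globally fixed and deterministic --- after collecting its radius-$r\,(d(n)+1)=\log^{O(1)}n$ ball in $G$. Altogether $\calP$ is solved in $\log^{O(1)}n$ \LOCAL\ rounds, so $\polyloc=\polyseqloc$ whenever diameter ordering has a polylog-time deterministic \LOCAL\ algorithm.

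\medskip
\noindent\emph{Main obstacle.} The one genuine idea is to compute the ordering on the power graph $G^{r}$ rather than on $G$: an \SLOCAL\ dependency chain advances in steps of $G$-distance up to $r$, so it is a path in $G^{r}$, and only a low-diameter ordering \emph{of $G^{r}$} controls the dependency cone. Everything else is bookkeeping, the two points to watch being that the constant $c$ must be the one from the membership bound and that $d(n)=\log^{O(1)}n$ is precisely what keeps $r\,d(n)$ polylogarithmic; on the membership side one needs honestly $(O(\log n),O(\log n))$ decomposition parameters (not merely polylogarithmic ones) so that even the smallest admissible $d(n)=c\ln^2 n$ suffices.
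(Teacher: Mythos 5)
Your proposal is correct and takes essentially the same approach as the paper: for membership you pass through a $(O(\log n),O(\log n))$-network decomposition computed by sequential ball-carving and convert it to an $O(\log^2 n)$-diameter ordering via the lexicographic $(\text{color},\mathrm{ID})$ labeling (this is exactly the paper's \Cref{observation:lowDiameterNetworkDecomp} combined with \Cref{lemma:networkDecomp}); for hardness you run the hypothetical ordering algorithm on the power graph $G^r$ and use the resulting ordering to bound the dependency cone of each node, which is exactly the paper's \Cref{lemma:orderingHard}. The only cosmetic difference is that you spell out the dependency cone $D(v)$ explicitly where the paper argues it informally.
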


Using a network decomposition or a low-diameter ordering, there is a
relatively direct way of turning a given \SLOCAL-algorithm into a
distributed one. In addition, we show \polyseqloc-completeness of the
following extremely rudimentary looking problems.

\begin{definition}[\textbf{Local Splitting}]\label{def:split}
  Given is a bipartite graph $B=(U \dot{\cup} V, E_B)$ where $E_B\subseteq
  U\times V$. For any $\lambda\in [0,1/2]$, we define a $\lambda$-local
  splitting of $B$ to be a $2$-coloring of the nodes in $V$ with
  colors red and blue such that each node $v$ has at least
  $\lfloor\lambda\cdot d(v)\rfloor$ neighbors of each color.
\end{definition}

\begin{definition}[\textbf{Weak Local Splitting}]\label{def:weaksplit}
  Given is a bipartite graph $B=(U \dot{\cup} V, E_B)$ where $E_B\subseteq
  U\times V$. We define a weak local
  splitting of $B$ to be a $2$-coloring of the nodes in $V$ with
  colors red and blue such that each node $v$ has at least $1$ neighbor of
  each color.
\end{definition}

If the minimum degree of any node in $U$ is at least $c\ln n$ for a
sufficiently large constant $c$, then $\lambda$-local splitting (even
for $\lambda$ close to $1/2$) and weak local splitting can be solved
trivially in $0$ rounds by using randomization: Color each node in $V$
independently red or blue with probability $1/2$; this coloring
satisfies the required conditions, with high probability. The
following two theorems are the main technical contribution of our
paper. They show that, in some sense, the above local splitting
problems---even the weak local splitting---already capture the core of
the difficulty in designing polylog-time deterministic \LOCAL
algorithms.

\begin{theorem}\label{thm:localsplit}
  For bipartite graphs $H=(U\dot{\cup} V, E)$ where all nodes in $U$
  have degree at least $c\ln^2 n$ for a large enough constant $c$, the
  $\lambda$-local splitting problem for any
  $\lambda=\frac{1}{\poly\log n}$ is \polyseqloc-complete.
\end{theorem}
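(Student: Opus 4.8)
Two things must be shown: that $\lambda$-local splitting belongs to $\polyseqloc$, and that every problem in $\polyseqloc$ is polylog-reducible to it. The first part is easy; the second is the real work.

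For membership I would run a \emph{local} method of conditional probabilities in the \SLOCAL model with locality $2$. Process the nodes of $V$ in the adversarial order. For a node $v\in V$ about to be processed, consider the experiment in which every node of $V$ not yet coloured is coloured red/blue uniformly and independently, and for each $u\in U$ let $\mathcal{E}_u$ be the event that $u$ violates the splitting condition. Since $\lambda=1/\poly\log n$, the deviation of $u$'s red-count from $d(u)/2$ needed for $\mathcal{E}_u$ is $\Omega(d(u))=\Omega(\log^2 n)$, so a Chernoff bound gives $\prob{\mathcal{E}_u}\le n^{-3}$ once $c$ is large enough, and hence the potential $\Phi=\sum_{u\in U}\prob{\mathcal{E}_u\mid\text{colours fixed so far}}$ is below $1$ at the start. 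As $\mathcal{E}_u$ depends only on the colours of $u$'s neighbours, $\prob{\mathcal{E}_u\mid\cdot}$ only involves the already chosen colours of $u$'s neighbours in $V$, all at distance $2$ from $v$; so $v$ can compute, for each of its two candidate colours $x$, the value $\sum_{u\sim v}\prob{\mathcal{E}_u\mid\text{so far},\,v\mapsto x}$, and pick the colour minimising it. Since the marginal of $v$'s colour was uniform, the average of these two values equals $\sum_{u\sim v}\prob{\mathcal{E}_u\mid\text{so far}}$, so this choice never increases $\Phi$. At the end every $\prob{\mathcal{E}_u\mid\cdot}$ is $0$ or $1$ and $\Phi<1$, so no $u$ is violated, and $\lambda$-local splitting is in $\seqloc[2]\subseteq\polyseqloc$.

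For hardness it suffices, by \Cref{thm:decomposition} and transitivity of polylog-reducibility, to polylog-reduce the computation of a $(\poly\log n,\poly\log n)$ network decomposition to $\lambda$-local splitting. The plan is to \emph{derandomise} the randomised $O(\log^2 n)$-round network-decomposition algorithm of Linial and Saks \cite{linial93} --- in which, over $O(\log n)$ phases, each still-unclustered node draws a random radius in $\{0,\dots,O(\log n)\}$ and a random priority and a constant fraction of the remaining nodes gets clustered in expectation --- using a polylogarithmic number of sequential calls to a local-splitting oracle. One first observes that the phase analysis needs only low-independence among the nodes' random choices, so it is enough to equip each node with $\poly\log n$ random bits and fix them one bit-position at a time. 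For a fixed bit-position $j$, set up a local-splitting instance whose $V$-side is the vertex set of $G$ (each node must decide its $j$-th bit to be $0$ or $1$) and whose $U$-side has one node for every ``test'' in a pessimistic-estimator analysis of the phases (a bad event such as ``fewer than a constant fraction of nodes were clustered in phase $i$''), connected to exactly those $G$-nodes whose $j$-th bit influences that test; a $\lambda$-local splitting of this instance fixes bit-position $j$ network-wide while keeping every test close to its conditional expectation. After all $\poly\log n$ bit-positions are fixed this way, all random choices have been replaced by deterministic ones for which every phase still clusters a constant fraction of the nodes, so the decomposition is obtained; and since each oracle call is sandwiched between polylog-round \LOCAL computations on $G$, a polylog-time deterministic $\lambda$-local splitting algorithm yields a polylog-time deterministic network-decomposition algorithm.

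The step I expect to be the main obstacle is aligning the two sides of this reduction. The oracle only constrains $U$-nodes of degree at least $c\ln^2 n$, so the randomised decomposition algorithm and its estimator must be arranged so that every bad event that has to be avoided is a \emph{wide} function, depending on at least $c\ln^2 n$ of the bits being set in the current phase; the few naturally thin tests (for instance, conditions localised around a single vertex) must be handled by bundling many of them into one $U$-node or by padding with dummy variables. One also has to control error accumulation: each split only guarantees a $\lambda=1/\poly\log n$ fraction on each side rather than an exact half, and this slack is paid once per bit-position, so the analysis must be set up so that the total drift over all $\poly\log n$ bit-positions still certifies success. Finally one has to verify that, when a $U$-node (a test) is mapped back to $G$, it is joined only to nodes within a polylogarithmic radius in $G$, so that the constructed instances can be assembled and the splitting solution read back by a polylog-round \LOCAL computation, keeping the overall locality $\poly\log n$.
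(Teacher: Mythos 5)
Your membership argument is correct and in fact cleaner than the paper's. You show containment in $\seqloc[2]$ via a local method of conditional probabilities: the potential $\Phi=\sum_{u\in U}\Pr[\mathcal{E}_u\mid\text{colors so far}]$ starts below $1$ by Chernoff (the required deviation from $d(u)/2$ is $(1/2-\lambda)d(u)=\Omega(\ln^2 n)$ once $c$ is large enough), and each node $v\in V$ can compute the relevant conditional probabilities from its $2$-hop ball and pick the color that does not increase $\Phi$. The paper's \Cref{lemma:seqlocalsplit} proves the same containment by a two-phase scheme that first computes a $\big(O(\log n),O(\log n)\big)$-decomposition of the ``conflict graph'' on $V$ and then colors each cluster by the probabilistic method; that route has polylog locality rather than your constant locality, but the paper's result is otherwise the same. (In both cases the Chernoff/union-bound margin allows $\lambda$ as large as $1/2-\Theta(\sqrt{\ln n/\delta})$, not just $1/\poly\log n$.)

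Your hardness argument, however, has a genuine gap that the paper's proof is specifically designed to avoid. You propose to derandomize the Linial--Saks clustering algorithm by fixing the nodes' random bits one position at a time, encoding ``tests from a pessimistic-estimator analysis'' as $U$-nodes of a $\lambda$-local splitting instance. But a $\lambda$-local split only guarantees a \emph{count} condition --- at least a $\lambda$ fraction of each hyperedge's variables land on each side --- and this is not a conditional-expectation guarantee. The quantities in the Linial--Saks analysis (e.g.\ ``the geometric radius drawn by $v$ exceeds $r$'', or ``$u$ is captured in the interior rather than the boundary of its closest center'') are threshold/priority functions of the bit strings, and a balanced bit assignment gives essentially no control over such a function's conditional expectation; a pessimistic estimator needs a choice rule that can compare the two continuations and take the better one, which local splitting does not provide. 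Combined with the issues you yourself flag --- globality of the tests, accumulation of the $\lambda$-slack over $\poly\log n$ bit-positions, and making the $U$-nodes degree-$\Omega(\ln^2 n)$ and locally simulable --- I don't see how this derandomization can be carried through.

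The paper's hardness proof sidesteps the derandomization question entirely. It reduces along the chain ``$(\poly\log,\poly\log)$-decomposition $\to$ conflict-free multicoloring (\Cref{lemma:conflictfreeHardness}) $\to$ $\lambda$-local splitting (\Cref{lemma:localsplithardness})''. The first step encodes each ball $B_{r_v}(v),\dots,B_{r_v+q}(v)$ as a hyperedge of an almost-uniform hypergraph and extracts cluster centers by pigeonhole on the conflict-free colors; every hyperedge has polylog radius, so the instance is locally constructible. The second step never does conditional expectations: in each phase it first handles hyperedges of rank at most $\delta=1/\lambda$ by the defective-coloring lemma (\Cref{lemma:lowrank_cfcoloring}), then calls the local-splitting oracle on the remaining (large) hyperedges and \emph{discards the blue nodes}, which shrinks every hyperedge by a $(1-1/(2\delta))$ factor while keeping it nonempty. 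The key mechanism is hyperedge shrinkage by deletion, not potential control --- a count guarantee is exactly what this needs, which is why local splitting fits so naturally. If you want to repair your hardness argument, the shortest path is to reproduce this shrinkage reduction rather than a bit-by-bit derandomization.
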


\begin{theorem}\label{thm:weaklocalsplit}
  For bipartite graphs $H=(U\dot{\cup} V, E)$ where all nodes in $U$
  have degree $\delta/2< d(u)\leq \delta$, for any $\delta$ such that
  $c\ln^2 n \leq \delta = \log^{O(1)} n$ for a sufficiently large
  constant $c$, the weak local splitting problem is
  \polyseqloc-complete.
\end{theorem}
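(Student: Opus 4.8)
\emph{Overview.} The statement combines a membership claim---weak local splitting on the stated family of bipartite graphs lies in \polyseqloc---with a hardness claim---every problem in \polyseqloc\ polylog-reduces to it. For the hardness part I would not argue from an abstract \polyseqloc\ problem but instead reduce the already \polyseqloc-complete $\lambda$-local splitting problem of \Cref{thm:localsplit} to weak local splitting, for a value $\lambda=1/\poly\log n$ chosen to fit the degree window $(\delta/2,\delta]$. Since every $\calP\in\polyseqloc$ polylog-reduces to $\lambda$-local splitting by \Cref{thm:localsplit}, transitivity of polylog-reducibility then gives that every $\calP\in\polyseqloc$ polylog-reduces to weak local splitting.

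\emph{Membership via conditional expectations.} I claim weak local splitting is in $\seqloc[2]$. Process the nodes of $V$ in whatever order the \SLOCAL\ adversary picks, and let $\Phi$ be the conditional expected number of monochromatic nodes of $U$ when every not-yet-colored node of $V$ is colored independently and uniformly at random; initially $\Phi=\sum_{u\in U}2^{1-d(u)}\le |U|\cdot 2^{1-\delta/2}<1$, using $d(u)>\delta/2\ge\frac{c}{2}\ln^2 n$ and $|U|<n$. When $v\in V$ is processed it inspects its $2$-hop neighborhood, so for each neighbour $u\in U$ it knows the colors of the already-colored neighbours of $u$ and the count of the uncolored ones; hence it can compute, for each candidate color $x\in\{\text{red},\text{blue}\}$, the quantity $\sum_{u\in N(v)}\Pr[u\text{ monochromatic}\mid v\mapsto x]$, which is exactly the part of $\Phi$ that $v$'s choice can change. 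Picking the color that minimizes this sum does not increase $\Phi$, since a uniformly random choice would leave it unchanged in expectation. After all of $V$ is processed, $\Phi$ equals the true number of monochromatic $U$-nodes, an integer that is still $<1$ and hence $0$; constraint-free $V$-nodes are colored red. Thus the problem is in $\seqloc[2]\subseteq\polyseqloc$.

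\emph{Hardness via bucketing.} From a $\lambda$-local-splitting instance $B=(U\dot{\cup} V,E_B)$ I build a weak-local-splitting instance $B'$ as follows. Put $\delta:=1/\lambda$ and keep $V(B'):=V$. Discard every $u\in U$ with $d_B(u)<\delta$ (for such $u$ one has $\lfloor\lambda d_B(u)\rfloor=0$, so $B$ imposes no constraint on it). For every remaining $u$, partition $N_B(u)$ into $\lceil d_B(u)/\delta\rceil$ groups of sizes in $(\delta/2,\delta]$---possible precisely because $d_B(u)\ge\delta$, which is where the factor-$2$ slack in the theorem statement is used. Each group becomes a node of $U(B')$ adjacent exactly to the $V$-nodes it contains, so all nodes of $U(B')$ have degree in $(\delta/2,\delta]$ as required. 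In any weak local splitting of $B'$, every group has a red and a blue neighbour, so each surviving $u$ sees $\ge\lceil d_B(u)/\delta\rceil\ge\lambda d_B(u)\ge\lfloor\lambda d_B(u)\rfloor$ nodes of each color; that is, the coloring is a valid $\lambda$-local splitting of $B$. Hosting each group-node at its $u$ makes every edge of $B'$ run between $B$-neighbours, so $B'$ is built in $O(1)$ rounds and one \LOCAL\ round on $B'$ costs $O(1)$ rounds on $B$. Since $|B'|\le|V|+|E_B|=n^{O(1)}$, a deterministic $\poly\log$-round algorithm for weak local splitting (on the parameter $\delta=\delta(|B'|)$) yields a $\poly\log n$-round algorithm for $\lambda$-local splitting with $\lambda=1/\delta(|B'|)=1/\poly\log n$, and hence, through \Cref{thm:localsplit}, for every problem in \polyseqloc.

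\emph{Main obstacle.} The delicate point is the self-referential choice of parameters: the bucketing granularity must equal the weak-splitting parameter evaluated at $|B'|$, while $|B'|$ itself depends on that granularity through the cutoff $d_B(u)\ge\delta$. I would resolve this by first bounding $|B'|$ between $|V|$ and $n^{2}$ independently of $\delta$, so that $\ln|B'|=\Theta(\ln n)$ and every $\poly\log$ bound is preserved up to constants, then fixing $\delta$ by a fixed-point argument (or, if the function $\delta(\cdot)$ is taken as part of the input, directly), and finally, if needed, padding $B'$ with $O(1)$ disjoint ``star'' gadgets of the appropriate degree so that its size lands on a value where $\delta$ takes the intended value. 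One also has to check that discarding the low-degree $u$'s is harmless, which holds because their $\lambda$-constraint is vacuous. Everything else---the conditional-expectation invariant and the group-counting inequality---is routine.
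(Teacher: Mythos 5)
Your proof is correct, and the membership half takes a genuinely different route from the paper. For hardness you use exactly the bucketing reduction from $\lambda$-local splitting that appears in the paper's \Cref{lemma:weaksplithardness}; your explicit care about matching the bucket size $\delta$ to the parameter evaluated at $|B'|$ spells out a point the paper leaves implicit, and it is indeed harmless because $|B'|=n^{O(1)}$ gives $\log|B'|=\Theta(\log n)$. For membership, the paper goes through \Cref{lemma:seqlocalsplit}: a two-phase \SLOCAL\ algorithm that computes an $(O(\log n),O(\log n))$-network decomposition of the auxiliary graph on $V$ in which two nodes are adjacent iff they share a $U$-neighbor, and then $2$-colors each cluster via the probabilistic method, establishing the stronger fact that $\lambda$-local splitting is in \polyseqloc\ even for $\lambda$ up to $1/2-O(\ln n/\sqrt{\delta})$, at the cost of $\poly\log n$ locality. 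You instead derandomize the zero-round random coloring directly by the method of conditional expectations on the pessimistic estimator $\Phi=\sum_{u\in U}\Pr[N(u)\text{ monochromatic}]$: one has $\Phi_0=\sum_u 2^{1-d(u)}<1$ since $d(u)>\delta/2\ge(c/2)\ln^2 n$ and $|U|\le n$, and when a $V$-node $v$ is processed it can determine, from $B_2(v)$ alone, which color choice leaves $\Phi$ non-increasing, because $v$'s color affects only the summands with $u\in N(v)$ and $v$ sees all of $N(u)$ for each such $u$; since the final $\Phi$ is the integer count of monochromatic $u$'s and is still below $1$, it is zero. This places weak local splitting in $\seqloc[2]$ -- a much tighter locality bound -- with a self-contained argument that needs none of the network-decomposition machinery. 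The trade-off is that it yields only the weak (one-of-each-color) guarantee and does not obviously extend to the near-balanced splittings needed for \Cref{thm:localsplit}; but for the present theorem it is exactly the right tool and is arguably the cleaner proof.
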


The local splitting problem can be viewed as a very special case of
\emph{rounding}, i.e., turning fractional values to integral values
while respecting some linear constraints: Associate a variable $x_v$
with each vertex $v\in V$ and think of each vertex $u\in U$ as two
linear constraints,
$\lambda\Delta \leq \sum_{v\in N(u)}x_v \leq (1-\lambda)\Delta$.
Setting each $x_b=1/2$ satisfies the constraints for
$\lambda=1/2$. The objective is to round these $1/2$ values to
integral values in $\{0,1\}$ while respecting much weaker constraints,
which are given by $\lambda$-values as small as
$\lambda = 1/\poly\log n$. \Cref{thm:localsplit} can therefore
intuitively be interpreted as follows:

\begin{center}
\begin{minipage}{0.95\textwidth}
\begin{mdframed}[hidealllines=true, backgroundcolor=gray!20]
\emph{Coarsely rounding fractional numbers is essentially all that we do not know how to perform in $\poly\log n$ deterministic rounds of the \LOCAL model. If one can could do even coarse rounding in \polyloc, we could solve all the classic problems of the \LOCAL model in \polyloc.}
\end{mdframed}
\end{minipage}
\end{center}

As an intermediate step to prove the \polyseqloc-completeness of the local
splitting problems, we consider distributed algorithms for the
\emph{conflict-free multicoloring} problem. This is a natural relaxation of the \emph{conflict-free coloring} problem which was introduced in \cite{even03_conflictfree} in the context of frequency assignment in
cellular networks. Note that the relaxation only strengthens the completeness result.

\begin{definition}[\textbf{Conflict-Free
    Multicoloring}]\cite{even03_conflictfree}\label{def:conflictfree}
  \hspace*{5mm}A $q$-color multicoloring of a hypergraph \mbox{$H=(V,E)$} is a function
  $\phi: V\rightarrow 2^{[q]}\setminus \emptyset$ which assigns a
  nonempty subset $\phi(v)$ of the colors $[q]$ to each node $v$. A
  multicoloring $\phi$ is called \emph{conflict-free} if for each
  hyperedge $e\in E$, there exists at least one color $c$ such that
  $|\{v\in e| c\in \phi(v)\}|=1$, i.e., exactly one node in $e$ has
  color $c\in \phi(v)$.
\end{definition}

If each node is assigned exactly one color, such a coloring is called
a conflict-free coloring. Note that the conflict-free coloring
problem is a generalization of the standard graph coloring
problem. For a survey on various work related to conflict-free
coloring, we refer to \cite{smorodinsky2013conflict}.

\begin{theorem}\label{thm:conflictfree}
  Conflict-free multicoloring with $\poly\log n$ colors in almost
  uniform hypergraphs with $\poly n$ hyperedges is
  \polyseqloc-complete.
\end{theorem}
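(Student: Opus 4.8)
I would prove both halves of the completeness claim, \emph{membership} in $\polyseqloc$ and \emph{hardness}, and I expect the hardness direction to carry essentially all the work.

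\emph{Membership.} The starting point is the trivial $0$-round randomized algorithm. Let $H=(V,E)$ be near-uniform, say with all hyperedges of size within a constant factor of a value $\delta$, and $|E|\le\poly(n)$. Fix colors $[q]$ with $q=\Theta(\log n)$ together with one reserved color $q+1$: each vertex $v$ independently puts each $c\in[q]$ into $\phi(v)$ with probability $p:=1/\delta$ and always puts $q+1$ into $\phi(v)$, so $\phi(v)\neq\emptyset$ with certainty. For a hyperedge $e$ and a color $c\in[q]$ the probability that exactly one vertex of $e$ holds $c$ is $|e|\,p\,(1-p)^{|e|-1}=\Omega(1)$ (this is where near-uniformity enters: $|e|p=\Theta(1)$ and $|e|-1=\Theta(1/p)$), so the probability that no color of $[q]$ is a singleton in $e$ is at most $(1-\Omega(1))^{q}\le 1/\poly(n)$ for a suitable $q=\Theta(\log n)$; a union bound over the $\poly(n)$ hyperedges makes the coloring conflict-free \whp. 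I then derandomize in the \SLOCAL model by the method of conditional expectations with the pessimistic estimator $\Phi:=\sum_{e\in E}\prob{e\text{ has no singleton color in }[q]\mid\text{choices so far}}$, which starts strictly below $1$. When the \SLOCAL process reaches a vertex $v$, an $O(1)$-radius view of the incidence graph of $H$ reveals every hyperedge $e\ni v$, its size, which of its vertices are already processed, and their color sets; this is exactly the information needed to evaluate $\prob{e\text{ has no singleton}\mid\cdot}$ for each $e\ni v$, and $v$ fixes its membership in the colors $[q]$ one color at a time so as not to increase $\Phi$ (only the terms with $e\ni v$ are affected, and \SLOCAL places no bound on $v$'s local computation). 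Once every vertex is processed, $\Phi<1$ forces every hyperedge to contain a singleton color while every $\phi(v)$ remains nonempty, so conflict-free multicoloring with $\poly\log n$ colors lies in $\seqloc(O(1))\subseteq\polyseqloc$.

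\emph{Hardness.} By \Cref{thm:decomposition} and transitivity of polylog-reducibility, it suffices to polylog-reduce the computation of a $(\poly\log n,\poly\log n)$-network decomposition to conflict-free multicoloring, so assume a deterministic $\poly\log n$-round \LOCAL algorithm $\mathcal B$ for the latter. I would build the decomposition in $O(\log n)$ phases; each phase takes the set $S$ of still-unclustered vertices, removes a constant fraction of $S$ into clusters of $\poly\log n$ diameter, and assigns all these clusters one fresh color, so that $O(\log n)$ colors and $\poly\log n$ diameter result. Inside a phase the relevant neighborhoods are the balls $B_{G[S]}(v,\rho)$ with $\rho=\Theta(\log n)$; since these are not uniform, I would partition them into $O(\log n)$ scale classes and, in class $j$, truncate each neighborhood to its $2^{j}$ nearest vertices (ties broken by identifier), obtaining a uniform hypergraph on $S$ with $\poly(n)$ hyperedges to which $\mathcal B$ applies. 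Running $\mathcal B$ on each class with a private palette, the unique colors guaranteed inside each truncated ball are used, at the appropriate scale, to select a set of cluster centers that are simultaneously mutually far and dominating, around which radius-$\Theta(\rho)$ balls are grown into the clusters of the current phase.

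\emph{Main obstacle.} The crux — and the step I expect to be hardest — is turning the purely existential conflict-free guarantee (\emph{some} color is a singleton inside each truncated ball, even though a vertex may carry many colors) into a center set whose grown balls (a) have $\poly\log n$ diameter and a cluster graph that is properly $O(1)$-colorable, and (b) together cover a constant fraction of $S$, so that $O(\log n)$ phases suffice. Choosing the scales and the center-selection rule so that both the diameter bound and the constant-fraction-progress bound hold simultaneously is the technical heart of the reduction. Granting it, composing the reduction with the \SLOCAL-to-\LOCAL simulation behind \Cref{thm:decomposition} shows that a $\poly\log n$-round deterministic \LOCAL algorithm for conflict-free multicoloring would give $\polyloc=\polyseqloc$, which is the completeness claim; passing from conflict-free coloring to the \emph{multi}coloring relaxation only strengthens the result, since it weakens what the assumed algorithm is required to produce while, as the construction above shows, the unique-color guarantee it still provides is all the reduction uses.
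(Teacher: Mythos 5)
Your membership argument is correct but takes a genuinely different route from the paper, and it buys something: the paper's proof of \Cref{lemma:CFcontained} builds a $\Theta(\log n)$-phase ball-growing \SLOCAL algorithm (radius $O(\log n)$ per phase, then collapsed to a single phase via \Cref{lemma:phaseReduction}), giving $\poly\log n$ locality, whereas your derandomization by conditional expectations achieves locality $O(1)$. The estimator $\Phi=\sum_e\prob{e\text{ has no singleton}\mid\cdot}$ starts below $1$ after a union bound, decomposes as a sum in which node $v$'s choice touches only the summands with $e\ni v$, and all the data needed to evaluate those summands (membership and size of each $e\ni v$, which members are already processed, and their chosen colors) lives in $v$'s $1$-hop neighborhood of the incidence graph used by the paper's hypergraph-to-graph encoding. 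That is a clean and tighter membership bound.

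Your hardness sketch, however, has a genuine gap, and you name it yourself: you do not give the mechanism by which the conflict-free guarantee yields cluster centers that are simultaneously well-separated and dominating. The conflict-free property only promises that \emph{some} color is a singleton inside each hyperedge; it does not give you a canonical representative, a ruling-set structure, or any handle with which to argue that neighboring nodes settle on the same center, so neither the diameter bound nor the constant-fraction-progress bound in your phase scheme is established. The paper's reduction (\Cref{lemma:conflictfreeHardness}) avoids this obstacle entirely with a pigeonhole trick on \emph{nested} balls: for each $v$ it picks a starting radius $r_v$ so that the $q+1$ balls $B_{r_v}(v),\dots,B_{r_v+q}(v)$ have nearly the same size, puts all $q+1$ of them (as hyperedges) into the same almost-uniform hypergraph $H_i$, and colors $H_i$ with only $q$ colors. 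Pigeonhole then forces two nested radii $r_1<r_2$ whose balls share the same singleton color $c$; because they are nested, the singleton is literally the same node $u$, which is therefore the \emph{unique} node of color $c$ within distance $r_1+1$ of $v$. That uniqueness is precisely what makes any neighbor of $v$ in a $c$-colored cluster choose the same center $u$, yielding weak diameter $2R=\poly\log n$ with $q\ell=\poly\log n$ cluster colors in a single shot (no constant-fraction-per-phase bookkeeping, no dominating-set claim). Your scale-class truncation is in the same spirit as the paper's partition into $\ell=O(\log n/\eps)$ near-uniform hypergraphs by ball size, but without the nested $q+1$ hyperedges and the pigeonhole step there is nothing to anchor the centers, and the ``technical heart'' you flag is exactly the part that remains unproved.
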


\subsection{Implications on Randomized Distributed Computations}

Because using randomization, an
$\big(O(\log n),O(\log n)\big)$-decomposition can be computed in
$O(\log^2 n)$ time in the \LOCAL model
\cite{awerbuch96,linial93,elkin16_decomp}, the \polyseqloc-completeness of
the decomposition problem (\Cref{thm:decomposition}) directly implies
that all problems in \polyseqloc\ have randomized polylog-time solutions in
the \LOCAL model. In fact, something slightly stronger holds. Let
$\randloc[t]{\eps}$ be the problems which can be solved by a
randomized Monte Carlo algorithm with error probability at most $\eps$
in the \LOCAL model in at most $t$ rounds. Further,
$\randseqloc[t]{\eps}$ is the corresponding randomized class for the
\SLOCAL model and we use $\polyrandloc{\eps}$ and
$\polyrandseqloc{\eps}$ to denote the corresponding randomized classes
of problems with polylogarithmic complexity.

\begin{theorem}\label{thm:randomizedequality} $\polyrandseqloc{\eps(n)} \subseteq \polyrandloc{\eps(n)+1/n^c}$
  for all $\eps(n)\geq 0$ and every constant $c>0$.
\end{theorem}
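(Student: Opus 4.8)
The plan is to show that any randomized \SLOCAL-algorithm $\Alg$ solving a problem with locality $r=\log^{O(1)}n$ and error probability at most $\eps(n)$ can be simulated by a randomized \LOCAL-algorithm that runs in $\log^{O(1)}n$ rounds and whose error probability is at most $\eps(n)+1/n^{c}$. This is essentially the randomized analogue of the reduction behind \Cref{thm:decomposition}: in place of a hypothetical deterministic network-decomposition algorithm we feed in the randomized $\big(O(\log n),O(\log n)\big)$-decomposition algorithm of \cite{awerbuch96,linial93,elkin16_decomp}, and the only genuinely new ingredient is bookkeeping of its failure probability.

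First, I would fix the locality $r=\log^{O(1)}n$ of $\Alg$ and work in the power graph $G^r$, which has the same $n$ vertices and an edge between $u$ and $w$ exactly when $\mathrm{dist}_G(u,w)\le r$. Using the algorithm of \cite{awerbuch96,linial93,elkin16_decomp} one computes a weak $\big(O(\log n),O(\log n)\big)$-decomposition of $G^r$; since one communication round in $G^r$ is simulated by $r$ rounds in $G$, this costs $O(r\log^2 n)=\log^{O(1)}n$ rounds in $G$, and by taking the first valid output among $O(\log n)$ independent trials (validity of a decomposition is locally checkable) we may assume it fails with probability at most $1/n^{c}$. Let $1,\dots,C$ with $C=O(\log n)$ be the color classes. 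The crucial property is that clusters of the same color are \emph{non-adjacent in the cluster graph of $G^r$}, so any two same-color clusters are at $G$-distance strictly more than $r$; this $r$-separation is exactly what lets all clusters of one color be processed in parallel, since then none of them can need not-yet-computed information from another.

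Then, I would fix the \SLOCAL processing order $\pi$ that lists all vertices of color-$1$ clusters first (ordered within a color class by cluster identifier, then by vertex identifier), then all vertices of color-$2$ clusters, and so on, and run the \LOCAL algorithm in $C$ phases. In phase $i$, each cluster $\calC$ of color $i$ acts in parallel: a leader of $\calC$ floods for $O(r\log n)$ rounds to collect the topology, the inputs, and the private random strings of all vertices within $G$-distance $r$ of $\calC$, together with the states already stored by the vertices processed in phases $1,\dots,i-1$; the leader then internally simulates $\Alg$ on the vertices of $\calC$ in the order given by $\pi$ and floods the resulting outputs and stored states back. This is faithful: when $\Alg$ processes a vertex $v\in\calC$ it reads the current state of each vertex within distance $r$ of $v$, and such a vertex is either already processed and lies in a color class $<i$ (its state was just collected), or lies in $\calC$ with smaller rank (known, since the leader simulates sequentially), or is still unprocessed so that only its initial state is needed; by the $r$-separation it cannot lie in another color-$i$ cluster. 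A cluster has $G^r$-diameter $O(\log n)$ and hence $G$-diameter $O(r\log n)$, so each phase takes $O(r\log n)=\log^{O(1)}n$ rounds and the whole algorithm runs in $O(C\cdot r\log n)=\log^{O(1)}n$ rounds.

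Finally, for the error bound: conditioned on the decomposition being valid, the \LOCAL algorithm outputs exactly what $\Alg$ outputs on $G$ under the processing order $\pi$ and the realized private random strings, and since a randomized \SLOCAL-algorithm with error at most $\eps(n)$ is by definition correct with probability at least $1-\eps(n)$ irrespective of the processing order, the conditional error is at most $\eps(n)$; a union bound with the $1/n^{c}$ failure probability of the decomposition step yields total error at most $\eps(n)+1/n^{c}$. I expect the main obstacle to be the point already flagged above: one has to recognize that same-color clusters must be $r$-separated rather than merely non-adjacent --- otherwise, in the parallel simulation, a cluster might need the stored state of a same-color cluster being processed simultaneously --- which forces working with $G^r$ in place of $G$, and then one must check that this inflates the round complexity only by the harmless $\poly\log n$ factor $r$.
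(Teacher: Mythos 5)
Your proof follows essentially the same approach as the paper's: compute a randomized $\big(O(\log n),O(\log n)\big)$-decomposition of the power graph $G^r$ (the paper uses $G^{r+1}$, but both give the required $>r$ separation between same-color clusters), derive a low-diameter ordering $\pi$ from the color classes, and simulate $\Alg$ in $O(\log n)$ phases in which each same-color cluster gathers its $r$-neighborhood and locally replays $\Alg$ on its own vertices, with the $\eps(n)+1/n^c$ bound following by a union bound over $\Alg$'s error under $\pi$ and the decomposition's failure probability.

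One step would fail as stated, though it is easily repaired. You propose to amplify the decomposition's success probability to $1-1/n^c$ by running $O(\log n)$ independent trials and ``taking the first valid output,'' citing that validity is locally checkable. Local checkability lets each node detect a violation in its own neighborhood, but it does not let all nodes \emph{agree} on which trial is the first globally valid one: a node that sees no violation in trial $j$ cannot know whether some distant node does, so different nodes may commit to different trials, and then the resulting cluster/color assignment is inconsistent and the phase-by-phase simulation breaks. There is no need for amplification here: the Linial--Saks randomized decomposition algorithm (as used in the paper) already achieves failure probability $1/n^c$ for any fixed constant $c$ simply by tuning its parameters, within the same $O(\log^2 n)$ round budget on the overlay graph. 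Replacing your amplification sentence with that observation makes the argument fully align with the paper's.
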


Hence, in particular, $\polyseqloc\subseteq \polyrandloc{1/\poly(n)}$.
In \Cref{sec:ILP}, we show that as long as all the constraints are
local, arbitrarily good approximations of general distributed covering
and packing integer linear programs can be computed efficiently in the
\SLOCAL model. This includes many important classic optimization
problems, such as e.g., minimum (weighted) dominating set, minimum
(weighted) vertex cover, maximum (weighted) independent set, maximum
(weighted) matching.

\begin{theorem}
  The problem of computing a $(1+1/\poly\log n)$-approximation of a
  general distributed covering or packing integer linear program (with
  polynomially bounded weights) is in \polyseqloc\ and hence also in
  $\polyrandloc{1/n^c}$ for every constant $c>0$.
\end{theorem}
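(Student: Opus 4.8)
The plan is to separate the problem into two parts: (i) computing a near-optimal \emph{fractional} solution of the LP relaxation, and (ii) \emph{rounding} it to an integral solution while losing at most a factor $1+1/\polylog n$ in the objective. If both parts lie in \polyseqloc, then so does the whole problem, and the remaining assertion that it is in $\polyrandloc{1/n^c}$ follows from $\polyseqloc\subseteq\polyrandloc{1/n^c}$, itself a consequence of \Cref{thm:randomizedequality}. Throughout, $n$ is the number of nodes of the distributed ILP (variables plus constraints), a variable node is adjacent to every constraint node in which its variable appears, and ``polynomially bounded weights'' means every nonzero entry of the constraint matrix, the right-hand side, and the objective lies in $[1/\poly n,\poly n]$.

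Part (i) is comparatively easy in the \SLOCAL setting: a $(1+1/\polylog n)$-approximate fractional solution of a distributed covering or packing LP is in \polyseqloc. One may either invoke the known deterministic distributed approximation algorithms for positive LPs --- the primal--dual / multiplicative-weights machinery, run with accuracy $\eps=1/\polylog n$, costs $\poly(1/\eps,\log n)=\polylog n$ rounds in the \LOCAL model, and $\loc(\polylog n)\subseteq\seqloc(\polylog n)$ --- or, more directly, give an \SLOCAL procedure of polylog locality that repeatedly applies a local rescaling/augmentation step; in either case we land in $\seqloc(\polylog n)=\polyseqloc$. It thus remains to treat the rounding, which is the technical core.

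For part (ii), consider a covering ILP $\min\{c^\top x : Ax\ge b,\ x\in\ZZ_{\ge 0}\}$ with fractional solution $x^\ast$; the packing case is handled by an analogous \emph{rescale--bucket--round} scheme. First rescale $x':=(1+\gamma)x^\ast$ with $\gamma=1/\polylog n$, so that $Ax'\ge(1+\gamma)b$ and every constraint $u$ has a slack of $\gamma b_u$ to spend. Keep $\lfloor x'_v\rfloor$ for each variable $v$ and only decide whether the remaining fractional part is rounded up to $1$ or down to $0$. Bucket the fractional remainders by their binary scale, and likewise bucket the coefficients $A_{uv}$ by scale; remainders below $1/\poly n$ are discarded outright, costing at most $1/\poly n$ per constraint. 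This leaves $O(\log^2 n)$ sub-instances in each of which every surviving variable contributes essentially the same amount $\approx 2^{-i-j}$ to each constraint it touches, so that choosing which of them to round up is exactly a (coefficient-bucketed, hence essentially unweighted) \emph{local splitting}: choose a subset so that each constraint retains a $\Theta(2^{-i})$ fraction of its mass in that bucket. A constraint with fewer than $c\ln^2 n$ neighbours in a sub-instance relies on that bucket for only $O(c\ln^2 n\cdot 2^{-i-j})$ in total, which --- after spreading the $\gamma b_u$ slack over the $O(\log^2 n)$ buckets --- it can afford to forfeit; so we may restrict to constraints of degree at least $c\ln^2 n$ and then apply \Cref{thm:localsplit} to place the sub-instance in $\seqloc(\polylog n)$. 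Performing the $O(\log^2 n)$ splitting steps in sequence and using that \polyseqloc\ is closed under sequential composition of polylog-many \SLOCAL algorithms of polylog locality, the whole rounding --- hence the whole approximation problem --- is in \polyseqloc.

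The main obstacle is the accounting in part (ii): one must check that the slack created by the $(1+\gamma)$-rescaling suffices simultaneously to (a) discard the $1/\poly n$-small remainders, (b) discard, in each of the $O(\log^2 n)$ sub-instances, the contributions at constraints of insufficient degree, and (c) absorb the $\Theta(2^{-i})$ per-bucket loss the local splitting incurs at every constraint at once. The delicate case is a constraint whose right-hand side $b_u$ is itself as small as $1/\poly n$: a single surviving neighbour then already contributes more than $b_u$, so one must argue that such a constraint has only $O(1)$ surviving neighbours per bucket and is handled trivially. Secondary points are upgrading \Cref{thm:localsplit} from its unweighted minimum-degree form to the coefficient-bucketed version used here (the source of the extra $O(\log n)$ factor in the sub-instance count), and, for the packing direction, verifying that a $(1-\gamma)$-rescaled fractional solution can be rounded while preserving $Ax\le b$ exactly and losing only an $O(\gamma+1/\polylog n)$ fraction of the objective.
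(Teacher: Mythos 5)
Your approach is genuinely different from the paper's, and it also has unresolved gaps that you yourself flag.

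The paper does not go through the LP relaxation at all. It gives a direct \SLOCAL ball-growing argument, exemplified for maximum independent set and minimum dominating set. When a node $v$ is processed, it finds (by ball growing) a radius $r=O(\log n/\eps)$ at which the local optimum stabilizes --- e.g., $\alpha(G[B_{r+1}(v)])\le(1+\eps)\alpha(G[B_r(v)])$ for independent set, or $g(v,r+2)\le(1+\eps)g(v,r)$ for dominating set --- then uses the \emph{unbounded computation} available in \SLOCAL to solve the subproblem \emph{optimally} inside $B_r(v)$, adds that solution to the global one, and removes a slightly larger ball (a buffer). The charging argument partitions the global optimum by which step removed each of its elements and shows the cumulative loss is a $(1+\eps)$ factor. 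That argument is short, elementary, requires no rounding machinery, and extends readily to general covering/packing ILPs; crucially, it exploits exactly what \SLOCAL grants you and the \LOCAL model does not, namely exact local optimization. Your LP-relaxation-plus-rounding plan, by contrast, is the sort of route one would need if one were trying to get a deterministic \emph{\LOCAL} algorithm from a splitting oracle; it is overkill for showing membership in \polyseqloc.

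Beyond the mismatch in strategy, your part (ii) is not actually a proof. The three items you list as the ``main obstacle'' are exactly the points a proof would need to close and none of them are closed: (a) the slack budget $\gamma b_u$, divided over $O(\log^2 n)$ scale buckets, must simultaneously absorb the discarded low-degree buckets and the per-bucket splitting error, and since $b_u$ may be as small as $1/\poly n$ while surviving coefficients can be as large as $\poly n$, it is not clear the arithmetic works; (b) the claim that discarding remainders below $1/\poly n$ costs ``at most $1/\poly n$ per constraint'' ignores that a constraint can touch $\poly n$ variables with coefficients up to $\poly n$, so the aggregate lost mass is not obviously negligible; (c) \Cref{thm:localsplit} is stated for unweighted degree conditions, and the ``coefficient-bucketed'' variant you invoke, as well as the packing-side rounding that must preserve $A x \le b$ \emph{exactly}, are asserted rather than established. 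If you want to pursue the rounding route, you would need to carry out this bookkeeping; but the cleaner path --- and the one the paper takes --- is to skip the LP entirely and solve optimally inside grown balls.
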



\section{Computational Models and Complexity Classes}

\subsection{Distributed Graph Problems}
\begin{definition}[Distributed Graph Problem]\label{def:graphproblem}
  A \emph{distributed graph problem} $\calT$ is given by a set of triples of the form
  $(G,\vec{x},\vec{y})$, where $G=(V,E)$ is a simple, undirected graph
  and $\vec{x}$ and $\vec{y}$ are $|V|$-dimensional vectors with
  entries $x_v$ and $y_v$ for each node $v\in V$. We call $\vec{x}$
  the \emph{input vector} and $\vec{y}$ the \emph{output vector}. 
	 A tuple $(G,\vec{x})$ is called an $\emph{instance}$ of a graph problem $\calT$ if there is an output vector $\vec{y}$ such that $(G,\vec{x},\vec{y})\in \calT$. Then $\vec{y}$ is called an \emph{admissible} output for instance $(G,\vec{x})$.
	
Whether a triple belongs to $\calT$ or not depends only on the topology of
 the graph $G$. Hence, if there is an isomorphism mapping $G$ to $\tilde{G}$, then $(G,\vec{x},\vec{y})\in\calT$ holds if and only if  $(\tilde{G},\tilde{\vec{x}},\tilde{\vec{y}})\in\calT$ holds, where
  $\tilde{\vec{x}}$ and $\tilde{\vec{y}}$ are obtained from $\vec{x}$
  and $\vec{y}$ by applying the graph isomorphism
  from $G$ to $\tilde{G}$.
	\end{definition}
	Given an instance $\calI=(G,\vec{x})$ of a graph problem $\calT$, initially each node $v$ knows $x_v$. We always assume $x_v$ includes a unique ID for $v$ and a global polynomial upper bound on $n=|V|$. In a distributed algorithm, the nodes need to compute an admissible output vector $\vec{y}$, where each node $v\in V$ outputs $y_v$. For instance, consider $(\Delta+1)$-vertex coloring, where
$\Delta$ denotes the maximum degree of $G$. The problem
consists of all triples $(G,\vec{x},\vec{y})$, where $G=(V,E)$ is a
simple, undirected graph, \vec{x} contains  unique IDs, and
$y_v\in\set{1,\dots,\Delta+1}$ such that for each $\set{u,v}\in E$, we have  $y_u\neq y_v$.

\smallskip\noindent\textbf{Remark:}
For simplicity, we define inputs and outputs only for nodes. Edge related problems --- e.g., edge
coloring --- can be easily modeled as inputs and outputs to the incident nodes. Similarly, hypergraph problems can be modeled as graph
problems, where the locality is captured by a simple graph in which two nodes $u$ and $v$ are adjacent iff $u$ and $v$ are
in a common hyperedge.

\subsection{Distributed Local Algorithms}


In a distributed graph problem $\calT$ in the \LOCAL model, each node $v\in V$ of an instance $\calI=(G,\vec{x})$ initially learns its input $x_v$, and must output $y_v$ by the end of the algorithm. The \emph{time complexity}  of a $\LOCAL$ algorithm $\Alg$ on $\calI$ is the number of rounds until all nodes have completed the algorithm. Formally, the time complexity is a function $T_{\Alg}: \instances\rightarrow \N$, where $\instances$ is the set of all possible instances.

In the case of randomized \LOCAL\ algorithms, each node can produce an arbitrarily long private random bit string before it starts its computation. We focus on Monte Carlo randomized algorithms, which have fixed time complexity but may have some probability to err and produce an inadmissible output.
Let the random vector $\vec{y}$ denote the output vector of a randomized \LOCAL\ algorithm $\Alg$ on an instance $\calI=(G,\vec{x})$ of $\calT$. The error probability $\eps_{\Alg}(\calI)$ of $\Alg$ on $\calI$ is the probability that $(G,\vec{x}, \vec{y})\notin \calT$.

\subsection{Sequential Local Algorithms}
\label{sec:SLOCAL}

We define the sequential local model (\SLOCAL) as follows: Assume a problem instance \mbox{$\calI=(G,\vec{x})$} for $G=(V,E)$ is given. For each node $v\in V$, there is an unbounded local memory $S_v$ to
store the local state of $v$. Initially, $S_v$ contains only the private input $x_v$ of $v$. 
Then an algorithm $\Alg$ in the \SLOCAL model processes the nodes
sequentially in an order $p=v_1$, $v_2$, $\dots$, $v_n$ provided to $\Alg$. The algorithm must work for any given order $p$. When processing node
$v$, the algorithm can query
$r$-hop neighborhoods of node $v$ for different values of $r$, that is, $\Alg$ can read the values of
$S_u$ for all nodes $u$ in the $r$-neighborhood of $v$. 
Based on this information, node $v$ updates its state $S_{v}$ and
computes its output $y_{v}$. In doing so, node $v$ can perform unbounded computation, i.e., the new state of $S_{v}$ can be an arbitrary
function of the queried $r$-neighborhood of $v$. The output $y_v$ can be remembered as a part of the new value of
$S_{v}$. In randomized algorithms, each node $v$ produces an arbitrarily long private random bit string at the start of the execution (independent of $p$), which is stored in its initial state $S_v$.

The \emph{time complexity $T_{\Alg,p}(\calI)$} of the algorithm on $\calI$ with respect to order $p$ is defined as the maximum $r$ over all nodes $v$ for which the algorithm queries an $r$-hop neighborhood of node $v$. The algorithm's \emph{time complexity $T_{\Alg}(\calI)$} on instance $\calI$ is the maximum of all $T_{\Alg,p}(\calI)$ over all orders $p$.

Let the random vector $\vec{y}_p$ denote the output of a randomized \SLOCAL\ algorithm $\Alg$ on an instance $\calI=(G,\vec{x})$ of $\calT$ on node order $p$. The error probability $\eps_{\Alg}(\calI)$ of $\Alg$ on $\calI$ is $\max_{p} \Pr((G,\vec{x}, \vec{y}_p)\notin \calT)$.

\medskip\noindent\textbf{Remarks:} Many of the classic problems---e.g., maximal independent set, $(\Delta+1)$-vertex coloring, $(2\Delta-1)$-edge coloring, or maximal matching---can be solved in the \SLOCAL model with locality $O(1)$.  Roughly speaking, we can say that any problem in which any correct partial solution can be extended to a global solution using only local knowledge has a small locality in the \SLOCAL model. 



In studying \SLOCAL algorithms, it is convenient to allow nodes to write in the local memory of other nearby nodes. It is easy to see that this does not change the locality significantly. Concretely:
\begin{observation}\label{observation:memoryWriting}
Any \SLOCAL algorithm $\mathcal{A}$ with locality $R$ in which each node $v$ can write into the local memory $S_u$ of other nodes $u$ within its radius $r\leq R$ can be transferred into an $\SLOCAL$ algorithm $\mathcal{B}$ with locality $r+R$ in which $v$ writes only in its own memory $S_v$. 
\end{observation}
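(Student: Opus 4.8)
The plan is to have $\mathcal{B}$ simulate $\mathcal{A}$ by, whenever it processes a node, recomputing the relevant portion of $\mathcal{A}$'s execution from scratch out of information that every node stores about itself. Write $\mathrm{dist}$ for graph distance in $G$. The key observation is locality of causality: in $\mathcal{A}$, the state $S_u$ of a node $u$ at any point is a function of $u$'s initial state together with the writes performed into $S_u$ by nodes $w$ with $\mathrm{dist}(w,u)\le r$ that have already been processed; and when $\mathcal{A}$ processes a node $v$, it only reads states $S_u$ with $\mathrm{dist}(v,u)\le R$. Hence, by the triangle inequality, every write that can possibly affect the processing of $v$ originates at a node within distance $r+R$ of $v$, so $v$ can in principle reconstruct everything $\mathcal{A}$ would do at $v$ by inspecting only its $(r+R)$-hop neighborhood.

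Concretely, in $\mathcal{B}$ each node keeps in its own memory a \emph{processing record} consisting of: its private input (and, in the randomized case, its private random string, drawn exactly as in $\mathcal{A}$), its position in the given order $p$, and---once it has been processed---the $R$-hop view it read under $\mathcal{A}$, the resulting update to its own state, its output $y_v$, and the list of pairs (target node $u$ with $\mathrm{dist}(v,u)\le r$, content) describing the writes $\mathcal{A}$ performs into neighbors. We maintain the invariant that after $\mathcal{B}$ has processed a prefix of $p$, the records of the already-processed nodes describe exactly the corresponding steps of $\mathcal{A}$ run on the same order $p$. The invariant holds vacuously for the empty prefix.

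For the inductive step, when $\mathcal{B}$ processes the next node $v$, it queries the $(r+R)$-hop neighborhood of $v$ and reads all records there. For each $u$ with $\mathrm{dist}(v,u)\le R$ it reconstructs $\mathcal{A}$'s state $S_u$ as follows: start from $u$'s initial state (read directly), and replay, in the order dictated by the positions stored in $p$ and stopping just before $v$'s turn, the self-update of $u$ (from $u$'s record, if $u$ precedes $v$) and every write into $S_u$ made by an already-processed node $w$ with $\mathrm{dist}(w,u)\le r$ (from $w$'s record); all such $w$ satisfy $\mathrm{dist}(v,w)\le r+R$ and are therefore visible. With these states in hand, $v$ simulates $\mathcal{A}$'s processing of $v$, writes the resulting record into its own memory, and outputs the same $y_v$. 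By induction this reproduces $\mathcal{A}$ exactly for every order $p$; in the randomized case the random bits are consumed identically, so the output distribution and thus the error probability are unchanged. The locality of $\mathcal{B}$ is $r+R$, as claimed.

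I do not expect a real obstacle here: the argument is the standard ``recompute the local history from scratch'' trick. The only point requiring a little care is replaying the writes into a node in the correct relative order, which is why each node records its own position in $p$, so that the relevant partial order of processing events among the visible nodes can be recovered locally.
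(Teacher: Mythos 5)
Your proof is correct and takes essentially the same route as the (commented-out) proof in the paper: store what a node does and what it would have written into neighbors in the writer's own memory, then have each subsequently processed node query a slightly larger ball and replay the relevant history to reconstruct the states it needs. You are a bit more careful than the paper's sketch in two respects that are worth the effort: you track the $r+R$ radius (rather than the coarser $2R$) so the bound matches the observation exactly, and you explicitly record each node's position in the order $p$ so that overlapping writes into the same $S_u$ can be replayed in the correct relative order---a point the paper's sketch glosses over.
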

Furthermore, as explained above, the \SLOCAL model assumes a \emph{single-phase} of processing vertices in an order $p=v_1, v_2, \dots, v_n$. One can envision a generalization to \emph{$k$-phase} algorithms, which can go through the order $k$ times. However, perhaps somewhat surprisingly, for any $k\leq \poly\log n$, this generalization does not significantly increase the power of the model, as we prove in \Cref{lemma:phaseReduction}. Its proof, which is deferred to \Cref{sec:powerOfSequentialModel}, uses some techniques that are similar to those of \Cref{sec:networkDecomp}. 

\begin{lemma}
\label{lemma:phaseReduction}
\emph{Any  $k$-phase \SLOCAL\ algorithm $\mathcal{A}$ with locality $r_i$ in phase $i=1,\ldots,k$ can be transformed into a single-phase \SLOCAL\ algorithm $\mathcal{B}$ with locality $r_1+2\sum_{i=2}^kr_i$.}
\end{lemma}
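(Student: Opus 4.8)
The plan is to have $\mathcal{B}$ process the nodes in exactly the order $p=v_1,\dots,v_n$ it is handed and, while processing a node $v_t$, to \emph{locally replay} the execution of $\mathcal{A}$ on the \emph{same} order $p$ in order to recover the whole transcript $\sigma_{v_t}^{(1)},\dots,\sigma_{v_t}^{(k)}$ of states that $v_t$ passes through at the ends of the $k$ phases; $\mathcal{B}$ stores this transcript in $S_{v_t}$ and outputs the value encoded in $\sigma_{v_t}^{(k)}$. Since $\mathcal{A}$ is by definition correct for \emph{every} processing order, the vector that $\mathcal{B}$ produces is an admissible output vector of $\mathcal{A}$ run on $p$, so correctness of $\mathcal{B}$ is immediate and the whole content of the argument is the locality bound. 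The invariant that every node $v_s$ with $s<t$ already carries its full transcript $\sigma_{v_s}^{(1)},\dots,\sigma_{v_s}^{(k)}$ in $S_{v_s}$ is what makes the replay finite: whenever a branch of the replay reaches a node that precedes $v_t$ in $p$, the needed state is simply read off and that branch stops.

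Concretely, when $v_t$ is processed, $\mathcal{B}$ first reconstructs $\sigma_{v_t}^{(1)}$, which is $\mathcal{A}$'s phase-$1$ update at $v_t$: a function of the states $\mathcal{A}$ sees in $B_{r_1}(v_t)$ at that instant, i.e.\ of $\sigma_u^{(1)}$ for the $u\in B_{r_1}(v_t)$ already processed by $\mathcal{B}$ and of the raw inputs $x_u$ for those not yet processed; both kinds of data are present, so $\sigma_{v_t}^{(1)}$ is determined by $B_{r_1}(v_t)$. For $i\ge 2$, having already recovered $\sigma_{v_t}^{(1)},\dots,\sigma_{v_t}^{(i-1)}$, node $\mathcal{B}$ needs $\mathcal{A}$'s phase-$i$ view of $v_t$ inside $B_{r_i}(v_t)$: the phase-$i$ states of the $B_{r_i}(v_t)$-nodes processed before $v_t$ (stored, hence available) together with the phase-$(i-1)$ states of the $B_{r_i}(v_t)$-nodes not yet processed. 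Only the latter are missing, and recovering the phase-$(i-1)$ state of such a node $u$ is itself a replay of $\mathcal{A}$'s phase-$(i-1)$ update at $u$, which reads $B_{r_{i-1}}(u)$ and produces (i)~already-stored phase-$(i-1)$ states, (ii)~phase-$(i-1)$ states of still-unprocessed nodes ordered before $u$, which must themselves be replayed, and (iii)~phase-$(i-2)$ states, dropping the recursion one phase lower. So reconstructing a not-yet-processed neighbour's previous-phase state is the \emph{only} source of extra locality.

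Unrolling this, $\mathcal{B}$ needs phase-$j$ states (and phase-$j$ replays) only within distance $\tau_j$ of $v_t$, where $\tau_k=0$ and, for $j$ running down from $k$ to $2$, $\tau_{j-1}=\tau_j+2r_j$; finally the phase-$1$ replays at distance $\tau_1=2\sum_{i=2}^{k}r_i$ read raw inputs at distance $\tau_1+r_1$. The factor $2r_j$ per phase accounts for one $r_j$ used to walk out to the not-yet-processed neighbour whose phase-$(j-1)$ state is wanted, and one further $r_j$ of slack inside which the \emph{within-phase-$j$} part of that replay (the phase-$j$ states of earlier unprocessed nodes) is absorbed before every remaining dependency lands on a node preceding $v_t$ in $p$ (transcript stored) or on $v_t$ itself (state already computed in this step). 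Hence the maximal query radius is $\tau_1+r_1=r_1+2\sum_{i=2}^{k}r_i$, which is the claimed locality; for $k=1$ this degenerates to $r_1$ with $\mathcal{B}=\mathcal{A}$. (If convenient, one can instead let $\mathcal{B}$ write each reconstructed state into the memory of its owner and invoke \Cref{observation:memoryWriting} to convert back to own-memory writes, changing the bound only by a constant factor.)

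The step I expect to be the real obstacle is precisely this radius accounting: establishing that the cascade ``reconstruct the previous-phase state of a not-yet-processed neighbour'' is confined to a ball whose radius grows only \emph{additively}, by $2r_i$ per phase, rather than multiplying by the length of a dependency chain—which can be $\Theta(n)$ even on trivial graphs, as already noted for complete graphs. The key point to argue carefully is that because $\mathcal{B}$ processes nodes in the very order $p$ it simulates, each branch of the replay is cut off as soon as it meets a node $\mathcal{B}$ has already processed, and one must bound how far such a branch can travel before this happens; this is the same flavour of bookkeeping used in the network-decomposition / low-diameter-ordering constructions of \Cref{sec:networkDecomp}, and the factor $2$ in front of $r_2,\dots,r_k$ (as opposed to $1$) is exactly the price of having to both \emph{locate} and \emph{re-derive} those previous-phase states rather than merely read them.
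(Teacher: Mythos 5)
Your plan has the right shape — replay $\mathcal{A}$ locally, store transcripts, appeal to Observation~\ref{observation:memoryWriting} at the end — and the final numerical bound $r_1+2\sum_{i\ge2}r_i$ is the one the paper proves. But the central step that you flag as ``the real obstacle'' is in fact not resolvable in the form you propose, and this is precisely where your proof diverges from the paper's.

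The gap: you commit $\mathcal{B}$ to replaying $\mathcal{A}$ \emph{on the same order $p$}. With that commitment, the cascade in item~(ii) of your recursion — reconstructing phase-$(j-1)$ states of still-unprocessed nodes that precede each other in $p$ — is \emph{not} confined to a ball of radius $\tau_{j-1}+O(r_{j-1})$. Take $k=2$, $r_1=r_2=1$, $G$ a path $v_1-v_2-\cdots-v_n$, and the order $p=(v_1,v_n,v_{n-1},\dots,v_2)$. When $\mathcal{B}$ processes $v_1$ (the first node, so nothing is stored anywhere), the phase-$2$ update at $v_1$ needs $\sigma_{v_2}^{(1)}$; computing $\sigma_{v_2}^{(1)}$ under order $p$ requires $\sigma_{v_3}^{(1)}$ (since $v_3$ precedes $v_2$ in $p$ and lies in $B_{r_1}(v_2)$); computing that needs $\sigma_{v_4}^{(1)}$; and so on all the way to $v_n$. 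Every node in the chain comes \emph{after} $v_t=v_1$ in $p$, so no branch ever ``lands on a node preceding $v_t$.'' The replay then has locality $\Theta(n)$, not $3$. Your assertion that ``one further $r_j$ of slack'' absorbs the within-phase part of the replay is exactly the unproved (and false, under your set-up) claim.

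What makes the paper's proof work, and what you are missing, is that the reconstruction need \emph{not} respect the original order $p$. The paper explicitly remarks that ``this simulation might use different orders for the two phases of $\mathcal{A}'$.'' When $u$ reconstructs phase-$1$ states for the unstored nodes of $B_{r_2}(u)$, it is free to invent an order in which those unstored nodes come \emph{last} (after everything already stored, and with far-away unstored nodes pushed even later), because $\mathcal{A}$ is by assumption correct for \emph{every} order — including one different from the order used for phase $2$. Under such a custom order the phase-$1$ dependency chain at each $w\in B_{r_2}(u)$ immediately terminates within $B_{r_1}(w)$: it hits only stored states or other nodes $u$ has just decided about, never an arbitrarily long chain of mutually preceding unprocessed nodes. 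Each reconstructed phase-$1$ state is then written into the owner's memory so that later replays read it rather than re-derive it, which is why consecutive replays remain globally consistent with a \emph{single} (just different from $p$) phase-$1$ order. Without this order-changing trick, the recursion you set up is genuinely unbounded, so your proof as written does not establish the lemma.
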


\subsection{Complexity Classes}
\label{sec:classes}

We next define the complexity classes. Let $\instances$ be the collection of all instances $(G,\vec{x})$. 
A \emph{runtime function} is a function $t: \instances \rightarrow \ZZ^{+}$ and an \emph{error function } is a function $\eps: \instances \rightarrow [0,1]$. We say that an algorithm $\Alg$ has \emph{locality} $t$ if $T_{\Alg}(\calI)\leq t(\calI)$. We focus on (upper bound) runtime and error functions which depend only on the number of the graph vertices in $\calI$. Hence, we simply write $t(n)$ and $\eps(n)$. 

\begin{definition} For any runtime function $t$ and error function $\eps$ define:
\begin{description}
\item[{\boldmath\loc[t]:}] All graph problems $\calT$ for which there
  exists a \emph{deterministic distributed \LOCAL\ algorithm} $\calA$ such that for
  every instance $\calI$ of $\calT$, we have $T_{\Alg}(\calI)\leq t(\calI)$.
	
\item[{\boldmath\randloc[t]{\eps}}:] All graph problems $\calT$ for which there
  exists a \emph{randomized distributed \LOCAL\ algorithm} $\calA$ such that for
  every instance $\calI$ of $\calT$, we have $T_{\Alg}(\calI)\leq t(\calI)$ and $\eps_{\Alg}(\calI)\leq \eps(\calI)$.

\item[{\boldmath\seqloc[t]}:] All graph problems $\calT$ for which there
  exists a \emph{deterministic distributed \SLOCAL\ algorithm} $\calA$ such that for
  every instance $\calI$ of $\calT$, we have  $T_{\Alg}(\calI)\leq t(\calI)$.

\item[{\boldmath\randseqloc[t]{\eps}}:]All graph problems $\calT$ for which there
  exists a \emph{randomized distributed \SLOCAL\ algorithm} $\calA$ such that for
  every instance $\calI$ of $\calT$, we have  $T_{\Alg}(\calI)\leq t(\calI)$ and $\eps_{\Alg}(\calI)\leq \eps(\calI)$.
\end{description}
\end{definition}
Each deterministic class is trivially contained in its randomized counterpart. Moreover, the classes related to the $\LOCAL$ model are contained in their $\SLOCAL$ model counterparts. Concretely, if in a $\LOCAL$-algorithm, the nodes know an upper bound $r$ on the runtime, this can be transferred into an algorithm which first collects the $r$-hop neighborhood and then computes the output. Thus, we have:

\begin{lemma}
 $\loc[t(n)] \subseteq \seqloc[t(n)]$ and $\randloc[t(n)]{\eps} \subseteq \randseqloc[t(n)]{\eps}$, for every $t(n)$.
\end{lemma}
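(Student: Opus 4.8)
The plan is to follow the hint stated just before the lemma: a \LOCAL\ algorithm with runtime (upper bound) $t(n)$ is simulated in the \SLOCAL\ model by letting each node, at the moment it is processed, gather its entire $t(n)$-hop neighborhood and evaluate the \LOCAL\ algorithm's local output function on it. The only things to check are that this gathering is possible within locality $t(n)$ in the \SLOCAL\ model, and that the simulation produces exactly the same output vector (with the same randomness, in the randomized case), so that correctness and error probability carry over verbatim.

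For the deterministic inclusion, fix a deterministic \LOCAL\ algorithm $\mathcal{A}$ witnessing $\calT\in\loc[t(n)]$, so $T_{\mathcal{A}}(\calI)\le t(\calI)$ for every instance $\calI=(G,\vec{x})$. By the standard characterization of the \LOCAL\ model, the output $y_v$ assigned by $\mathcal{A}$ to a node $v$ is a fixed function of the isomorphism type of the $t(n)$-hop ball $B_{t(n)}(v)$ together with the inputs $x_u$ of all $u\in B_{t(n)}(v)$; since each node knows a polynomial upper bound on $n$, it can also determine an upper bound on $t(n)$ and hence the radius to query (here we may assume $t$ is monotone without loss of generality). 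Define an \SLOCAL\ algorithm $\mathcal{B}$ as follows: when a node $v$ is processed under an arbitrary order $p$, it queries its $t(n)$-hop neighborhood, reading $S_u$ for all $u\in B_{t(n)}(v)$. Each such $S_u$ contains at least the input $x_u$, whether or not $u$ has already been processed, so $v$ has available exactly the data on which $\mathcal{A}$'s output at $v$ depends; node $v$ then sets $y_v$ to the value $\mathcal{A}$ would output at $v$ and stores it in $S_v$. For every order $p$ the only query $\mathcal{B}$ makes when processing a node is to its $t(n)$-hop neighborhood, so $T_{\mathcal{B}}(\calI)\le t(\calI)$; and since $v$'s output never depends on states written by other processed nodes (only on the always-present inputs), the vector $\vec{y}$ computed by $\mathcal{B}$ equals the vector computed by $\mathcal{A}$ on $\calI$, regardless of $p$. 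As $\mathcal{A}$ is correct, $(G,\vec{x},\vec{y})\in\calT$, hence $\calT\in\seqloc[t(n)]$.

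For the randomized inclusion, the argument is identical with one extra observation. A randomized \LOCAL\ algorithm's output $y_v$ is a fixed function of $B_{t(n)}(v)$, the inputs $x_u$, and the private random strings of the nodes $u\in B_{t(n)}(v)$; in the \SLOCAL\ model each node likewise draws its private random string at the start of the execution (independent of $p$) and stores it in its initial state, so when $v$ is processed the queried states $S_u$ still contain exactly this data. Hence the simulating algorithm $\mathcal{B}$ produces, for every order $p$, the same random output vector $\vec{y}_p=\vec{y}$ that $\mathcal{A}$ produces under the same randomness, and therefore $\eps_{\mathcal{B}}(\calI)=\max_p\Pr\big((G,\vec{x},\vec{y}_p)\notin\calT\big)=\eps_{\mathcal{A}}(\calI)\le\eps(\calI)$, giving $\calT\in\randseqloc[t(n)]{\eps}$.

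I do not expect any genuine obstacle here. The only points requiring a bit of care are purely a matter of bookkeeping: that the state $S_u$ of a not-yet-processed node in the \SLOCAL\ model is exactly its input (plus, in the randomized case, its private random string), which is precisely the information a \LOCAL\ algorithm has access to; and that the node being processed must be able to fix the query radius, which it can because a global polynomial bound on $n$ is part of every node's input.
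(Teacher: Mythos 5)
Your proof is correct and follows essentially the same approach the paper sketches in the single sentence preceding the lemma: simulate the \LOCAL\ algorithm in \SLOCAL\ by having each processed node gather its $t(n)$-hop neighborhood and evaluate the \LOCAL\ output function, noting that the states of unprocessed nodes still contain the inputs (and private random strings in the randomized case). Your write-up just supplies the bookkeeping details the paper omits.
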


We use the $\bigO$-notation for runtime in the natural way: the class of all graph problems for which there is a sequential $O(t(n))$-local algorithm is denoted by \mbox{$\seqloc[\bigO(t(n))]:=\bigcup_{c>0}\seqloc[ct(n)]$}.
Our focus is on algorithms with polylogarithmic locality. We thus
introduce short notations for the above classes when the locality is
polylogarithmic in the number of nodes $n$:
\begin{align*}
  \boldsymbol\polyloc & :=  \bigcup_{c>0}\loc[\log^{c}n], 
	&  \text{\boldmath$\polyrandloc{\eps}$} & :=  \bigcup_{c>0}\randloc[\log^{c}n]{\eps}, \\
  \boldsymbol\polyseqloc & := \bigcup_{c>0}\seqloc[\log^{c}n], 
	&  \text{\boldmath$\polyrandseqloc{\eps}$} & :=  \bigcup_{c>0}\randseqloc[\log^{c}n]{\eps}.
\end{align*}

\subsection{Locality Preserving Reductions}
We now define reductions for distributed algorithms. An \emph{overlay graph} of a graph $G=(V,E)$ is a graph $\calG = (\calV, \calE)$, where each node $x\in \calV$ is mapped to a node $v(x)\in V$. An overlay graph $\calG$ is called $r$-simulatable if for every edge \mbox{$\set{x,y}\in \calE$}, we have \mbox{$d_G(v(x),v(y))\leq r$}. 
In our reductions, \LOCAL\ algorithms are augmented with oracles for a given graph problem $\calT$. After calling a $\calT$-oracle on overlay graph $\calG = (\calV, \calE)$, for each $x\in\calV$, node $v(x)\in V$ is provided with the output of the oracle for node $x$.

\begin{definition}[Reduction]
A \emph{(randomized) reduction} from a graph problem
$\calT_1$ to a graph problem $\calT_2$ is a (randomized) \LOCAL\ algorithm for $\calT_1$ which can use calls to a $\calT_2$-oracle with instances on overlay graphs of $G$. The \emph{cost of a reduction} is the cost of the \LOCAL\ algorithm where each oracle call on an $r$-simulatable overlay graph contributes $r$ rounds.
In the case of a randomized reduction, the randomness of all oracle instances and the reduction algorithm are independent. 
\end{definition}
As standard, a reduction from a graph problem $\calT_1$ to a graph problem $\calT_2$ transfers a $\LOCAL$ algorithm for $\calT_2$ to a $\LOCAL$ algorithm of $\calT_1$:

\begin{observation}
If there is a reduction from $\calT_1$ to $\calT_2$ and a $t_2(n')$ round \LOCAL\ algorithm for $\calT_2$ then there is a $t(n)\cdot t_2(n')$ round \LOCAL\ algorithm for $\calT_1$, where $t(n)$ is the cost of the reduction and $n'$ is the size of the largest overlay graph used in the reduction.
\end{observation}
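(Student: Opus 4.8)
The plan is to compile the given reduction and the assumed algorithm for $\calT_2$ into a single oracle-free \LOCAL algorithm for $\calT_1$ on $G$: run the reduction verbatim, except that every time it would invoke the $\calT_2$-oracle on an overlay graph $\calG=(\calV,\calE)$, have the nodes of $G$ internally simulate the assumed $t_2$-round \LOCAL algorithm $\calA_2$ for $\calT_2$ on $\calG$. Since $\calA_2$ is correct, each such simulation hands back to the reduction exactly the output the oracle would have returned, so the combined algorithm outputs an admissible solution for $\calT_1$; the entire content of the observation is the round bound.

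The first step is to explain how the real nodes simulate one oracle call on an $r$-simulatable overlay graph $\calG$. At the moment of the call, for every overlay node $x$ its host $v(x)$ already holds $x$'s input to the $\calT_2$-instance, and we may assume it also holds: a unique overlay identifier for $x$ (e.g.\ the pair consisting of the ID of $v(x)$ and a local index), a polynomial bound on $|\calV|\le n'$, and, for each overlay neighbor $y$ of $x$, the identifier and host of $y$ together with the bound $r$ --- all of this being part of how the reduction specifies the overlay instance. One communication round of $\calA_2$ on $\calG$ requires each $x$ to send a message to each overlay neighbor $y$; since $d_G(v(x),v(y))\le r$ by $r$-simulatability, $v(x)$ can deliver this message by flooding it through $G$ for $r$ rounds, tagged with the source and destination overlay identifiers so that $v(y)$ picks it up and the other nodes merely relay it. Thus $r$ rounds of $G$ faithfully simulate one round of $\calA_2$ on $\calG$, and the whole oracle call is simulated in at most $r\cdot t_2(n')$ rounds of $G$ (the overlay instance has at most $n'$ nodes). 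A real node hosting several overlay nodes just carries out all of their local computations, which is allowed since the \LOCAL model bounds only communication, not computation.

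The second step is to add up the rounds. In the cost accounting of the reduction, each ordinary round contributes $1$ and each oracle call on an $r_i$-simulatable overlay graph contributes $r_i$, for a total of at most $t(n)$. In the combined algorithm, each ordinary round still contributes $1\le t_2(n')$ and the $i$-th oracle call now contributes $r_i\cdot t_2(n')$; since every term has been multiplied by at most $t_2(n')$ (using $t_2(n')\ge 1$, as $t_2$ is a runtime function), the new total is at most $t_2(n')\cdot t(n)$. Hence the combined algorithm is a $t(n)\cdot t_2(n')$-round \LOCAL algorithm for $\calT_1$, and correctness follows because it reproduces the oracle's outputs exactly.

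The only genuinely delicate point --- and the one I would spend the most care on --- is the faithful simulation of communication along overlay edges: fixing the convention that, whenever the reduction builds an overlay instance, each real node is equipped with the routing data (hosts of its overlay neighbors and the distance bound $r$) needed for the $r$-round floods, and checking that distinct overlay messages do not interfere, which they do not since each carries its source and destination overlay identifiers and every real node simply relays every tagged packet it sees for $r$ hops. Everything else is bookkeeping, and none of it inflates the round count beyond the $r_i$ already charged by the cost of the reduction.
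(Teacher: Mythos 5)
The paper presents this as an unproved observation, treating it as immediate from the definitions of $r$-simulatable overlay graphs and of reduction cost. Your proof correctly spells out the intended argument: replacing each oracle call on an $r$-simulatable overlay by an $r$-rounds-per-simulated-round execution of the $\calT_2$-algorithm costs $r\cdot t_2(n')$ rounds, and since the reduction's cost already charges $r$ per such call, the total is bounded by $t(n)\cdot t_2(n')$ (using $t_2\geq 1$ and monotonicity of $t_2$, both standard for runtime bounds).
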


\begin{definition}(Hardness and Completeness) We say that a graph problem $\calT$ is \emph{\complexClass-hard}, for a complexity class \complexClass\ , with respect to $t^{\bigO(1)}(n)$-cost reductions if every graph problem in \complexClass\ reduces to $\calT$ and the cost of each reduction is in $t^{\bigO(1)}(n)$. 
We say that $\calT$ is \emph{\complexClass-complete} with respect to $t^{\bigO(1)}(n)$-cost reductions if $\calT$ is \complexClass-hard with respect to $t^{\bigO(1)}(n)$-cost reductions and $\calT \in \complexClass$.
\end{definition}

Throughout the paper we are mostly interested in $\poly\log n$ cost reductions.  If a problem $\calT_1$ can be reduced to a problem $\calT_2$ with a polylog-cost reduction we say that $\calT_1$ is \emph{polylog-reducible} to $\calT_2$.


\section{Low Diameter Ordering \& Network Decomposition}
\label{sec:networkDecomp}
In this section we prove that the problems of computing a \emph{low diameter ordering} and \emph{network decomposition} are \polyseqloc-hard. See \Cref{def:ordering} and \Cref{def:decomposition} for the definitions, respectively.\\

\noindent \textbf{Notation:} For a graph $G=(V, E)$, we use $G^{r}$ to denote the graph on vertex set $V$ obtained by putting an edge between each two vertices of $G$ with distance at most $r$. 

\begin{lemma}\label{lemma:orderingHard}
  For any $d(n)=\log^{O(1)}n$, computing a $d(n)$-diameter ordering is \polyseqloc-hard.
\end{lemma}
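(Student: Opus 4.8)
The goal is to show that any problem $\calT \in \polyseqloc$ is polylog-reducible to the problem of computing a $d(n)$-diameter ordering. So let $\calT \in \polyseqloc$, witnessed by a deterministic \SLOCAL\ algorithm $\calA$ with locality $r = \log^{O(1)} n$. The plan is: given a $d(n)$-diameter ordering oracle, I want to produce a deterministic \LOCAL\ algorithm for $\calT$ running in $\log^{O(1)} n$ rounds (counting each oracle call on an $r'$-simulatable overlay graph as $r'$ rounds).

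**Key steps.**

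First, I would invoke the $d(n)$-diameter ordering oracle on a suitable overlay graph of $G$. The natural choice is to call it on $G^{r}$ — or rather on an overlay graph $\calG$ whose vertex set is $V$ and whose edges connect pairs at distance $\le r$ in $G$, which is $r$-simulatable. The oracle returns unique labels $\ell: V \to \ZZ^+$ such that along any label-increasing path in $G^{r}$, all nodes are within distance $d(n)$ in $G^{r}$, hence within distance $r \cdot d(n)$ in $G$. Now I use $\ell$ as the processing order $p$ for the \SLOCAL\ algorithm $\calA$. The point of \Cref{def:ordering} is exactly to bound dependency chains: when $\calA$ processes a node $v$, its output depends (through the $r$-neighborhood) on the already-computed states of nodes $u$ with $\ell(u) < \ell(v)$ and $d_G(u,v) \le r$, i.e. $u$ adjacent to $v$ in $G^{r}$; recursing, the entire set of nodes whose state can influence $y_v$ lies on label-increasing paths in $G^{r}$ ending at $v$, and by the diameter-ordering guarantee these all lie within distance $d(n)$ of $v$ in $G^{r}$, hence within distance $r \cdot d(n)$ of $v$ in $G$.

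Second, I would turn this bounded-dependency observation into an actual \LOCAL\ algorithm: each node $v$ collects its $(r\cdot d(n))$-hop neighborhood in $G$ — including all IDs, inputs, and the labels $\ell$ — and then \emph{locally simulates} the execution of $\calA$ under order $\ell$ on that neighborhood. Because every node whose state $\calA$ reads when computing $y_v$ lies within distance $r\cdot d(n)$ of $v$, and the same recursively holds for those nodes (their relevant ancestors are also on label-increasing $G^r$-paths to $v$, so also within distance $d(n)$ in $G^r$ of $v$... one needs that label-increasing paths compose, which they do), node $v$ has enough information to reproduce all states it needs. So $v$ outputs $y_v$ correctly. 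The total cost is $r \cdot d(n)$ rounds for the neighborhood collection plus the cost of the single oracle call on an $r$-simulatable graph, which contributes $r$ rounds; total $O(r \cdot d(n)) = \log^{O(1)} n$ since $r = \log^{O(1)} n$ and $d(n) = \log^{O(1)} n$. This is a valid polylog-cost reduction, establishing \polyseqloc-hardness.

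**Main obstacle.** The routine parts are the \LOCAL\ simulation and the round-counting; the one genuinely careful point is the "composition of label-increasing paths" argument: I must make sure that the set of nodes influencing $y_v$ is really contained in a single distance-$d(n)$ ball around $v$ in $G^r$, not just that each individual read is within distance $r$. The clean way is to argue by induction on the rank of $v$ in the order $\ell$ that the "influence set" of $v$ consists precisely of endpoints of label-monotone paths in $G^r$ terminating at $v$; then \Cref{def:ordering} applied to each such path (with $v$ as the maximum-label endpoint) gives that every such node is within distance $d(n)$ of $v$ in $G^r$. A secondary subtlety is bookkeeping about $n$ versus the overlay graph size $n'$ — here $\calG$ has the same vertex set $V$, so $n' = n$ and no loss occurs. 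I would also remark that exactly the same reduction, composed with \Cref{thm:ordering}'s forthcoming $O(\log^2 n)$-diameter ordering construction, is what will later yield the randomized and $2^{O(\sqrt{\log n})}$-deterministic upper bounds for all of \polyseqloc.
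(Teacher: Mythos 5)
Your proposal is correct and follows essentially the same route as the paper's proof: you invoke the ordering oracle on $G^r$, use the returned labels as the processing order for the \SLOCAL\ algorithm, observe that the influence set of each node $v$ consists of endpoints of label-monotone paths in $G^r$ ending at $v$ (so it lies in a ball of radius $d(n)$ in $G^r$, hence $O(r\cdot d(n))$ in $G$), and simulate locally. The composition subtlety you flag is real but resolves exactly as you say, and your account even spells out the dependency-chain argument a bit more explicitly than the paper does.
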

\begin{proof}
  Consider an \SLOCAL\ algorithm $\mathcal{A}$ with locality $r=\poly\log n$ and consider an $n$-node network graph $G=(V, E)$. Assume that an $\ell$-low diameter ordering $\pi$ of the graph $G^{r}$ is provided by an oracle where $\ell=\poly\log n$. We consider $\mathcal{A}$ when it operates on the order $\pi$.

When processing the nodes according to the order $\pi$, a node $v$ can collect its $r$-neighborhood and compute its output, as soon as all nodes within distance $r$ of $v$ which appear before $v$ in order $\pi$ are processed. Hence, every path on $G^r$ which is monotonically increasing w.r.t.\ $\pi$ induces a dependency chain for executing $\calA$. Given that $\pi$ is an $\ell$-diameter ordering of $G^r$, each such dependency chain, which is relevant for processing $v$, is completely contained in the $\ell$ neighborhood of $v$ in $G^r$. After collecting the $r$-neighborhoods in $G$ of every node in the $\ell$-neighborhood in $G^r$, node $v$ therefore has enough information to locally simulate the part of the sequential execution of $\calA$ which is relevant for processing node $v$. Thus, given an $\ell$-diameter ordering of $G^r$, algorithm\ $\calA$ can be executed in $O(\ell r)$ deterministic rounds in the \LOCAL model.
 \end{proof}

The best known deterministic algorithms for many problems in \polyseqloc\ are based on network decompositions (cf.\ \Cref{def:decomposition}). In fact network decompositions directly imply low-diameter orderings and thus are sufficient to simulate polylogarithmic \SLOCAL algorithms.

\begin{observation}\label{observation:lowDiameterNetworkDecomp}
If we are given a $\big(d(n), c(n)\big)$-decomposition and
assign to each vertex $v\in G$  a label $(q_v, \mathsf{ID}_v)$ where $q_v$ is the color of $v$'s cluster, then the lexicographically increasing order of the node labels $(q_v, ID_v)$ defines a $O\big(d(n)\cdot c(n)\big)$-diameter ordering.
\end{observation}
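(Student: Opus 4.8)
The plan is to verify directly that the lexicographic order on the labels $(q_v,\mathsf{ID}_v)$ meets Definition~\ref{def:ordering} with diameter bound $O\big(d(n)\cdot c(n)\big)$. First, the labels are unique since the $\mathsf{ID}_v$ are, so this is a legitimate labeling of $V$. Now fix any path $P=(u_0,u_1,\dots,u_k)$ in $G$ along which the labels strictly increase, i.e.\ $(q_{u_0},\mathsf{ID}_{u_0})<(q_{u_1},\mathsf{ID}_{u_1})<\cdots<(q_{u_k},\mathsf{ID}_{u_k})$ in lexicographic order. We must bound $d_G(u_i,u_j)$ for all $i<j$; since every subpath of $P$ again has strictly increasing labels, it suffices to bound $d_G(u_0,u_k)$.

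The structural heart of the argument is that two consecutive vertices of $P$ with the same cluster color lie in the same cluster. Indeed, if $\{u_i,u_{i+1}\}\in E$ but $u_i,u_{i+1}$ belong to distinct clusters $C\neq C'$, then $C$ and $C'$ are adjacent in the cluster graph, and since that graph is properly colored they get distinct colors, so $q_{u_i}\neq q_{u_{i+1}}$. Because the label sequence increases lexicographically, the color sequence $q_{u_0}\le q_{u_1}\le\cdots\le q_{u_k}$ is non-decreasing; hence $P$ splits into maximal monochromatic runs $R_1,\dots,R_m$, the colors strictly increase from one run to the next (so $m\le c(n)$), and by transitivity along a run all vertices of a single run lie in one common cluster.

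It remains to sum distances. Within a run $R_j$, its first and last vertices lie in a common cluster of (weak or strong) diameter at most $d(n)$, so they are at $G$-distance at most $d(n)$ — for weak diameter this is immediate, and for strong diameter the distance in the induced subgraph upper-bounds the distance in $G$. Consecutive runs are joined by a single edge of $P$. Therefore $d_G(u_0,u_k)\le m\cdot d(n)+(m-1)\le c(n)\cdot d(n)+c(n)=O\big(d(n)c(n)\big)$, which is the claimed bound.

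I do not anticipate a real obstacle here, as befits an observation; the only points needing a touch of care are the weak-versus-strong diameter distinction, resolved as above, and the claim that a monochromatic run stays inside a single cluster, which is exactly where the proper coloring of the cluster graph is used.
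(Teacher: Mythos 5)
Your proof is correct, and the paper states this observation without giving a proof, so there is nothing to compare against; yours supplies the argument that the paper leaves implicit. The three ingredients you identify are exactly the right ones: proper coloring of the cluster graph forces any monochromatic edge of the path to stay within a single cluster, lexicographic order forces the color sequence along the path to be non-decreasing (hence at most $c(n)$ monochromatic runs), and each run is contained in one cluster of (weak) diameter at most $d(n)$, giving $d_G(u_0,u_k)\le c(n)\,d(n)+(c(n)-1)=O(d(n)\,c(n))$. The explicit note on weak versus strong diameter is a useful touch, as the bound must hold for the weak variant too.
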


\begin{lemma}
\label{lemma:networkdecompositionhard} Computing a
$(\poly\log n, \poly\log n)$-decompositon is 
$\polyseqloc$-hard.
\end{lemma}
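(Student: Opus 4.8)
The plan is to reuse the reduction of \Cref{lemma:orderingHard} almost verbatim, only replacing its use of a low-diameter ordering oracle by a call to the network decomposition oracle followed by the free post-processing of \Cref{observation:lowDiameterNetworkDecomp}. Concretely, I would start from an arbitrary problem $\calT\in\polyseqloc$ and fix an $\SLOCAL$ algorithm $\calA$ solving it with locality $r=\poly\log n$. Recall that on an $n$-node input graph $G$, the reduction of \Cref{lemma:orderingHard} builds the ($r$-simulatable) overlay graph $G^r$, asks an oracle for an $\ell$-diameter ordering of $G^r$ for some $\ell=\poly\log n$, and then simulates $\calA$ on that ordering in $O(\ell r)=\poly\log n$ \LOCAL\ rounds. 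The only external ingredient is this single ordering oracle call, made on the overlay graph $G^r$, which has exactly $n$ nodes.

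The reduction to network decomposition then replaces that oracle call: invoke the $(\poly\log n,\poly\log n)$-decomposition oracle on the same overlay graph $G^r$, and apply the purely local transformation of \Cref{observation:lowDiameterNetworkDecomp} --- label each node $v$ by $(q_v,\mathrm{ID}_v)$, where $q_v$ is the color of $v$'s cluster in the returned decomposition --- to obtain an $O(d(n)c(n))$-diameter ordering of $G^r$ with no further communication. Since $d(n),c(n)=\poly\log n$ for the $n$-node graph $G^r$, this ordering has diameter $\ell'=O(d(n)c(n))=\log^{O(1)}n$, which lies in the range required by \Cref{lemma:orderingHard}, so the simulation of $\calA$ on this ordering goes through unchanged. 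For the cost: the decomposition oracle is called once on an $r$-simulatable overlay graph, contributing $r=\poly\log n$ rounds; the relabeling is free; and the simulation costs $O(\ell' r)=\poly\log n$ rounds. Hence $\calT$ polylog-reduces to the $(\poly\log n,\poly\log n)$-decomposition problem, and since $\calT\in\polyseqloc$ was arbitrary, the latter is $\polyseqloc$-hard. The argument is insensitive to whether the oracle returns a weak or a strong decomposition, as \Cref{observation:lowDiameterNetworkDecomp} only relies on the cluster colors and the (weak) diameter bound.

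I do not anticipate a real obstacle: all of the substance has already been carried out in \Cref{lemma:orderingHard} and \Cref{observation:lowDiameterNetworkDecomp}, and what remains is essentially the bookkeeping observation that splicing the communication-free ``decomposition $\Rightarrow$ ordering'' step in front of the ordering oracle call preserves polylogarithmic cost --- which it does precisely because the decomposition oracle is invoked on the very same overlay graph on which the ordering oracle was invoked. The only point worth an explicit sentence is checking that the diameter parameter $O(d(n)c(n))$ produced by the post-processing stays in $\log^{O(1)}n$ (so that \Cref{lemma:orderingHard} applies), which holds because $G^r$ has $n$ nodes and both $d$ and $c$ are $\poly\log n$.
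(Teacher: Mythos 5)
Your proposal is correct and follows the paper's own argument: the paper proves this lemma by exactly the combination you describe, namely invoking \Cref{observation:lowDiameterNetworkDecomp} to turn a $(\poly\log n,\poly\log n)$-decomposition of $G^r$ into a $\poly\log n$-diameter ordering of $G^r$, and then plugging that into the simulation of \Cref{lemma:orderingHard}. The only difference is that you spell out the bookkeeping (the overlay graph, the cost accounting, and the check that $O(d(n)c(n))$ stays polylogarithmic) which the paper leaves implicit in a one-line proof.
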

\begin{proof}
The result follows with \Cref{lemma:orderingHard} and \Cref{observation:lowDiameterNetworkDecomp}.
\end{proof}

The completeness of low diameter orderings and network decompositions (\Cref{thm:decomposition} and \Cref{thm:ordering}) follows by an adaption of the deterministic sequential $\big(\bigO(\log n),\bigO(\log n)\big)$-decomposition algorithm from \cite{linial93} to the \SLOCAL model (cf.\ \Cref{appsec:inPSLOCAL}).

The network decomposition algorithm of
Awerbuch et al.~\cite{awerbuch89} computes a $\big(2^{\bigO(\sqrt{\log n})}, 2^{\bigO(\sqrt{\log n})}\big)$-decomposition deterministically in the \LOCAL model. This algorithm combined with \Cref{observation:lowDiameterNetworkDecomp} and the same simulation as in the proof of \Cref{lemma:orderingHard} yields the following lemma.
\begin{lemma}
$\seqloc\big(2^{\bigO(\sqrt{\log n})}\big) = \loc\big(2^{\bigO(\sqrt{\log n})}\big)$.
\end{lemma}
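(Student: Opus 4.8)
The plan is to prove the two inclusions separately. The inclusion $\loc\big(2^{\bigO(\sqrt{\log n})}\big)\subseteq\seqloc\big(2^{\bigO(\sqrt{\log n})}\big)$ is immediate from the general fact $\loc[t(n)]\subseteq\seqloc[t(n)]$ established earlier (a \LOCAL\ algorithm with known round bound $t$ is run in the \SLOCAL\ model by letting each processed node first query its $t$-hop neighborhood), applied with $t(n)=2^{c\sqrt{\log n}}$ for every constant $c$. So the content is in the forward inclusion $\seqloc\big(2^{\bigO(\sqrt{\log n})}\big)\subseteq\loc\big(2^{\bigO(\sqrt{\log n})}\big)$.

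First I would fix a problem $\calT\in\seqloc\big(2^{\bigO(\sqrt{\log n})}\big)$, witnessed by an \SLOCAL\ algorithm $\calA$ of locality $r=2^{\bigO(\sqrt{\log n})}$, and build a deterministic \LOCAL\ algorithm for $\calT$ on the input graph $G$. The key device is that a node can simulate a \LOCAL\ computation on the power graph $G^{r}$ at a cost of $r$ rounds of $G$ per simulated round, since one hop in $G^{r}$ corresponds to $r$ hops in $G$. The first step runs the deterministic $\big(2^{\bigO(\sqrt{\log n})},2^{\bigO(\sqrt{\log n})}\big)$-decomposition algorithm in the \LOCAL\ model of Awerbuch et al.~\cite{awerbuch89} on $G^{r}$. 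Since $G^{r}$ still has $n$ vertices, this yields a $\big(d(n),c(n)\big)$-decomposition of $G^{r}$ with $d(n)=c(n)=2^{\bigO(\sqrt{\log n})}$ using $2^{\bigO(\sqrt{\log n})}$ rounds on $G^{r}$, hence $2^{\bigO(\sqrt{\log n})}\cdot r=2^{\bigO(\sqrt{\log n})}$ rounds on $G$ (using that $\bigO(\sqrt{\log n})+\bigO(\sqrt{\log n})=\bigO(\sqrt{\log n})$). By \Cref{observation:lowDiameterNetworkDecomp}, ordering the nodes lexicographically by the pair (color of $v$'s cluster, $\mathsf{ID}_v$) gives an $\ell$-diameter ordering $\pi$ of $G^{r}$ with $\ell=\bigO\big(d(n)c(n)\big)=2^{\bigO(\sqrt{\log n})}$.

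The second step is exactly the simulation used in the proof of \Cref{lemma:orderingHard}. Running $\calA$ on the order $\pi$, every dependency chain relevant for a node $v$ is a $\pi$-monotone path in $G^{r}$, and since $\pi$ is an $\ell$-diameter ordering of $G^{r}$ such a chain is contained in the $\ell$-hop ball around $v$ in $G^{r}$, i.e., in the $\ell r$-hop ball around $v$ in $G$. Thus each node collects its $\ell r$-hop neighborhood in $G$ in $\ell r=2^{\bigO(\sqrt{\log n})}$ rounds and then locally replays the portion of the sequential execution of $\calA$ on $\pi$ that determines its output; here we use that an \SLOCAL\ algorithm is correct for every processing order, so in particular for $\pi$. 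Summing the round complexities of the two steps gives a deterministic \LOCAL\ algorithm for $\calT$ of complexity $2^{\bigO(\sqrt{\log n})}$, proving $\calT\in\loc\big(2^{\bigO(\sqrt{\log n})}\big)$.

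The step I expect to require the most care is the bookkeeping: one must check that every factor that arises — the $r$-per-round simulation overhead on $G^{r}$, the decomposition parameters $d(n),c(n)$, and the diameter $\ell$ of the induced ordering — is itself $2^{\bigO(\sqrt{\log n})}$, and that these compose into a single $2^{\bigO(\sqrt{\log n})}$ because the relevant $\sqrt{\log n}$-exponents add; and one must note that the vertex count is preserved when passing from $G$ to $G^{r}$, so that the guarantees of the Awerbuch et al.\ decomposition do not degrade. Beyond this there is no genuine obstacle.
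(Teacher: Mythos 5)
Your proof is correct and follows exactly the route the paper indicates: run the Awerbuch et al.\ deterministic $\big(2^{\bigO(\sqrt{\log n})},2^{\bigO(\sqrt{\log n})}\big)$-decomposition on $G^r$, convert it to a low-diameter ordering via \Cref{observation:lowDiameterNetworkDecomp}, and then reuse the simulation from \Cref{lemma:orderingHard}. The paper gives this only as a one-sentence pointer to those three ingredients; you have filled in the bookkeeping, and it checks out since $2^{\bigO(\sqrt{\log n})}$ is closed under products.
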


\begin{remark}
In general, for $t(n)\geq \log n$,
$\seqloc\big(t^{\bigO(1)}(n)\big) = \loc\big(t^{\bigO(1)}(n)\big)$ holds if and only if a
$\big(t^c(n),t^c(n)\big)$-network decomposition can be computed
deterministically in $O\big(t^c(n)\big)$ rounds in the \LOCAL model for some
constant $c>0$.  
\end{remark}



\section{Overview of Local Splitting Completeness Proof}
\label{sec:LocalSplitSummary}

In the present section, we provide an outline over the proof that the
\emph{local splitting} problems defined in
\Cref{def:split,def:weaksplit} are \polyseqloc-complete. The formal
proof appears in \Cref{sec:conflictfree,sec:localsplit}. We need to
show that local splitting is in the class \polyseqloc\ and that local
splitting is \polyseqloc-hard, i.e., that there is a
polylog-reduction, reducing one of the problems we have already shown
to be \polyseqloc-complete to local splitting. We do this reduction in
two steps. We first reduce the conflict-free multicoloring problem (cf.\ \Cref{def:conflictfree}) to local splitting
and we then reduce the problem of computing a
$(\poly\log n,\poly\log n)$-decomposition to the conflict-free
multicoloring problem (cf.\ \Cref{def:decomposition} and
\Cref{thm:decomposition}).

\subsection{Reducing Conflict-Free Multicoloring to Local Splitting}

We next sketch how to use a $\lambda$-local splitting blackbox
algorithm (for $\lambda=1/\poly\log n$) to compute a $\poly\log n$-color conflict-free multicoloring
of a given $n$-node hypergraph $H=(V,E)$. A reduction to weak local
splitting then follows by applying a simple reduction from
$\lambda$-local splitting which we describe in
\Cref{lemma:weaksplithardness}.

By using a distributed \emph{defective coloring} algorithm from
\cite{Kuhn2009WeakColoring}, we first show in
\Cref{lemma:lowrank_cfcoloring} that for hypergraphs of at most
$\poly\log n$ rank, a $\poly\log n$-color conflict-free multicoloring
can be computed in deterministic polylog time in the \LOCAL model. The
reduction then works in phases, where in each phase, we remove some
hyperedges and nodes from $H$. We define $\delta:=1/\lambda$, note
that this implies that $\delta=\poly\log n$. In each phase, we first
apply \Cref{lemma:lowrank_cfcoloring} and assign a new set of
$\poly\log n$ colors to make sure that for all hyperedges $e$ of rank
at most $\delta$ of the current graph $H$, there exists a color $x$
such that exactly one node in $e$ has color $x$. This allows to remove
all hyperedges of rank at most $\delta$. We then interpret the
resulting hypergraph $H$ as a bipartite graph in the obvious way and
we apply our $\lambda$-local splitting oracle to this bipartite graph
so that all nodes of $H$ are either colored red or blue and so that
each hyperedge $e$ has at least $\lfloor\lambda|e|\rfloor$ nodes of
each color. We then remove all blue nodes from the graph $H$. Because
after the removal of the low-rank hyperedges, all hyperedges have rank
$>\delta$, the $\lambda$-local splitting guarantees that each
hyperedge has at least one red node. Therefore a conflict-free
multi-coloring of the hypergraph after removing the blue nodes is also
a conflict-free multi-coloring of the hypergraph before removing the
blue nodes. Because each hyperedge $e$ has at least
$\lfloor\lambda|e|\rfloor$ blue nodes which are removed, the removal
of the blue nodes reduces the maximum rank of the hypergraph $H$ by a
factor $1-1/\Theta(\lambda)$. Because the maximum rank at the
beginning is at most $n$, the number of phases is
at most $O(\log(n)/\lambda)$ and thus $O(\poly\log n)$. Thus, the
number of colors that we use for the conflict-free multicoloring is
also $O(\poly\log n)$.

\subsection{Reducing Network Decomposition to Conflict-Free
  Multicoloring}

We conclude this section by giving an overview of how to use
conflict-free multicoloring to compute a network decomposition. The
resulting decomposition algorithm bears some high-level similarities
to existing randomized graph decomposition algorithms (e.g.,
\cite{linial93,blelloch14,elkin16_decomp}). Assume that we have a
$q$-color conflict-free multicoloring algorithm for almost uniform
$n$-node hypergraphs for some $q=\poly\log n$ and assume that we need
to compute a $(\poly\log n,\poly\log n)$-decomposition of some graph
$G=(V,E)$. As a first step, each node $v\in V$ looks for a sequence of
$q+1$ consecutive radii $r_v,r_v+1,\dots,r_v+q$ such that all the
balls $B_{r_v+i}(v)$ for $i\in\set{0,\dots,q}$ have the same size up
to a factor $(1\pm \eps)$ for a given constant $\eps>0$. Using
standard ball growing arguments
\cite{awerbuch85,Awerbuch-Peleg1990,linial93}, there exists such a
radius of value $r_v=O(q\log(n)/\eps)$ and clearly in the \LOCAL model
such a radius can then also be found in $O(q\log(n)/\eps)$ rounds for
each node.

Each node now forms $q+1$ hyperedges for its balls
$B_{r_v}(v),\dots,B_{r_v+q}(v)$ and the reduction constructs
$O(\log(n)/\eps)$ hypergraphs such that in each of them all
hyperedges have the same size up to a $(1\pm O(\eps))$-factor and such
that all hyperedges of a given node $v$ are in the same
hypergraph. For each of these hypergraphs, we use the conflict-free
multicoloring oracle to compute a $q$-color conflict-free
multicoloring. Because each node has $q+1$ hyperedges and the nodes in
these hyperedges are conflict-free colored with $q$ colors, by the
pigeonhole principle, there is a color $x$ and two radii $r_v+a$ and
$r_v+b$ for $0\leq a< b\leq q$ such that in $B_{r_v+a}(v)$ and
$B_{r_v+b}(v)$, there is exactly one node $w$ colored with color
$x$. Clearly, for both balls, it has to be the same node. Node $v$
chooses this node $w$ as its ``cluster center'' and it chooses color
$x$ as its cluster color. Because node $w$ is within radius $r_v+a$ of
$v$ and there is no other node of color $x$ within radius
$r_v+b\leq r_v+a+1$ of $v$, whenever a neighbor $u$ of $v$ also
chooses color $x$, node $u$ also has to choose $w$ as its cluster
center. Hence, for every cluster color, any two nodes within the same
connected component have the same cluster center and are thus within
radius $O(q\log(n)/\eps)$ in graph $G$. As we assumed that
$q=\poly\log n$, this implies that the computed coloring directly
induces a $(\poly\log n,\poly\log n)$-decomposition.

\subsection{Weak Local Splitting is in \boldmath\polyseqloc}

We here only discuss how to design an \SLOCAL\ algorithm with polylog
locality to compute a weak local splitting for a given bipartite graph
$B=(U\dot{\cup}V,E_B)$ where each node in $U$ has degree $c\ln^2 n$
for a sufficiently large constant $c$. An algorithm for
$\lambda$-local splitting can then be obtained by using a simple
reduction, which is described in \Cref{lemma:weaksplithardness}.
Using \Cref{lemma:phaseReduction}, we can design a multi-phase \SLOCAL
algorithm to show that weak local splitting is in \polyseqloc. The
algorithm is based on first computing a
$(O(\log n),O(\log n))$-decomposition of the graph $G=(V,E)$, where
there is an edge between $u$ and $v$ in $V$ if and only if $u$ and $v$
have a common $U$-neighbor in $B$. It is shown in
\Cref{lemma:networkDecomp} that computing such a decomposition is in
\polyseqloc. We can use the network decomposition to compute weak
local splitting as follows. Each cluster locally computes a
red/blue-coloring of its nodes in $O(\log n)$ rounds. The
probabilistic method guarantees that each cluster $\calC$ can compute
such a coloring such that for every node $u$ of which $\calC$ contains
at least $d\ln n$ neighbors for a sufficiently large constant $d$, the
neighborhood $N(u)$ becomes bichromatic. The decomposition guarantees
that the neighborhood of each node $u\in U$ is partitioned among at
most $O(\log n)$ clusters. Because we assume that the minimum degree
in $U$ is at least $c\ln^2 n$ (for $c$ sufficiently large), for every
node $u\in U$, there is a cluster which contains at least $d\ln n$
neighbors of $u$. Hence, we get a weak local splitting of the whole
graph.


\section{Completeness of Conflict-Free Multicoloring}
\label{sec:conflictfree}

In the present section, we study the distributed complexity of conflict-free
multicoloring of hypergraphs (cf.\
\cite{even03_conflictfree,smorodinsky2013conflict} and
\Cref{def:conflictfree}). Recall that a $q$-color conflict-free
multicoloring of a hypergraph $H=(V,E)$ is an assignment of a nonempty
set $\phi(v)$ of colors from $[q]$ to each node $v\in V$ such that for
every hyperedge $e\in E$, there is a color $x$ such that there is
exactly one node in $e$ which has color $x$ in its set $\phi(v)$. Note
that in the special case of simple graphs, when each hypergraph
contains only a pair of nodes, and if only one color is allowed per
node, conflict-free coloring is equivalent to the standard definition
of proper graph coloring.

In the following, for a given constant $0<\eps<1$, we call a
hypergraph $H=(V,E)$ \emph{almost uniform} if there is an arbitrary
$k$ such that for each edge $e\in E$,
\mbox{$k\leq |e|\leq (1+\eps)k$}. In the following, we prove
\Cref{thm:conflictfree}.

\medskip 
\noindent\textbf{\Cref{thm:conflictfree} (restated).}
\emph{Conflict-free multicoloring with $\poly\log n$ colors in almost
  uniform hypergraphs with $\poly n$ hyperedges is
  \polyseqloc-complete. } 
\smallskip
\begin{proof}
  The proof follows directly from the statements of
  \Cref{lemma:conflictfreeHardness,lemma:CFcontained} which are proven
  next in \Cref{sec:CFhard,sec:CFcontained}.
\end{proof}

Before presenting the proofs of
\Cref{lemma:conflictfreeHardness,lemma:CFcontained}, we remark that
conflict-free multicoloring is trivial to solve using randomized
algorithms with even zero locality.

\begin{observation} There is a zero round randomized \LOCAL algorithm that in any almost uniform hypergraph with $\poly(n)$ hyperedges computes an $O(\log n)$-color conflict-free \multicoloring . 
\end{observation}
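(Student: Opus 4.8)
The plan is to just exhibit the obvious oblivious random coloring and amplify its per‑hyperedge success probability over $\Theta(\log n)$ color classes, using almost uniformity to keep the number of colors at $O(\log n)$. Let $\eps\in(0,1)$ be the fixed slack in the definition of \emph{almost uniform}, let $k$ be the parameter with $k\le |e|\le(1+\eps)k$ for every hyperedge $e$, and assume (e.g.\ because each node knows the sizes of the hyperedges it belongs to, or because $k$ is part of the input) that each node has a $(1+\eps)$‑approximation of $k$; put $\phi(v)=\{1\}$ for any node lying in no hyperedge. Fix $q=c\log n$ colors, $c$ a constant to be chosen at the end. Each node $v$ independently places each color $j\in[q]$ into $\phi(v)$ with probability $p_v:=1/k_v$ where $k_v:=\min\{|e|:v\in e\}\in[k,(1+\eps)k]$, and if the resulting set is empty it resets $\phi(v):=\{1\}$. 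This uses no communication, so it is a $0$‑round \LOCAL algorithm, and $\phi(v)\neq\emptyset$ always, so the output is always a valid $q$‑color multicoloring; it remains to show it is conflict‑free with high probability.

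The analysis proceeds in three steps. First, fix a hyperedge $e$ with $s:=|e|\in[k,(1+\eps)k]$ and a color $j\ge 2$; since each $p_v\in[1/((1+\eps)k),1/k]$ and $s\approx k$, a node of $e$ takes color $j$ with probability $\Theta(1/s)$, these choices are independent across the $s$ nodes of $e$, and hence the probability that \emph{exactly one} node of $e$ takes color $j$ (``color $j$ is a unique witness for $e$'') is at least a universal constant $c_0=c_0(\eps)>0$ — this is where almost uniformity is used: all hyperedges sit at a single scale $k$, so a single inclusion probability $\Theta(1/k)$ works for all of them. Second, the events ``color $j$ is a unique witness for $e$'' for $j=2,\dots,q$ depend on disjoint sets of coins and are therefore mutually independent, so the probability that $e$ has no unique witness at all is at most $(1-c_0)^{q-1}$. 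Third, choosing $c$ large enough that $(1-c_0)^{q-1}\le n^{-(C+2)}$, where $n^{C}$ bounds the number of hyperedges, and taking a union bound over all hyperedges, the coloring is conflict‑free with probability at least $1-n^{-2}$; note $q=c\log n=O(\log n)$ as required.

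The one step needing care — the anticipated ``obstacle'', though a small one — is that the emptiness fixup must not interfere with the independence argument in step two. This is handled by only ever using the colors $2,\dots,q$ as candidate witnesses: for $j\ge 2$, membership $j\in\phi(v)$ coincides exactly with the event that $v$'s coin for color $j$ came up (an empty coin outcome both triggers the reset to $\{1\}$ and already excludes $j$), so color $1$ simply absorbs the fixup and the coins for colors $2,\dots,q$ remain i.i.d.\ and untouched. Everything else is the routine ``$\Theta(1/s)$ inclusion probability over $s$ trials gives exactly one hit with constant probability'' estimate plus a union bound, with no concentration inequalities needed. The only genuinely extra-mathematical point is the mild modeling assumption that a $0$‑round node has a constant-factor estimate of the uniformity parameter $k$, which I would state explicitly.
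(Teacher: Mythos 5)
Your proposal is correct and takes essentially the same approach as the paper: include each of roughly $\log n$ colors with probability $\Theta(1/k)$ independently, show each hyperedge has a unique-witness color with constant probability per candidate color, and union-bound over the polynomially many hyperedges. The only (cosmetic) differences are that the paper uses the global $1/k$ rather than your per-node $1/k_v$, and reserves the \emph{last} color as the nonemptiness fallback rather than the first.
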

\begin{proof}
  Set $q=\Theta(\log n)$ and define a \multicoloring\
  $\phi:V\rightarrow 2^{[q]}\setminus \emptyset$ by including each
  color $c\in [q-1]$ in $\phi(v)$ with probability $\frac{1}{k}$. If
  $\phi(v)=\emptyset$, set $\phi(v)=\{q\}$. This is a conflict-free
  coloring, with high probability: for each hyperedge $e\in E$ and
  each color $c\in [q-1]$,
  $\Pr[|\{v\in e| c\in \phi(v)\}|=1] \geq
  |e|\frac{1}{k}\left(1-\frac{1}{k}\right)^{|e|-1} \geq 0.1$.
  Hence, the probability that no color $c\in [q-1]$ satisfies
  $|\{v\in e| c\in \phi(v)\}|=1$ is $0.9^{C\log n} = 1/\poly(n)$. A
  union bound over all hyperedges $e\in E$ completes the proof.
\end{proof}

\subsection{Conflict-Free Multicoloring is $\polyseqloc$-Hard}
\label{sec:CFhard}

We now first show that conflict-free multicoloring of almost uniform
hypergraphs with $\polylog n$ colors is \polyseqloc-hard. We show this
by showing that the problem of computing a
$(\polylog n, \polylog n)$-decomposition of a graph is
polylog-reducible to the conflict-free multicoloring problem.

\begin{lemma} \label{lemma:conflictfreeHardness}
  The problem of computing a $q$-color conflict-free multicoloring of
  an almost uniform hypergraph with $q=\poly\log n$ colors is
  \polyseqloc-hard.
\end{lemma}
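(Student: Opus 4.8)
The plan is to prove hardness by a polylog-cost reduction from the network decomposition problem, which is already known to be \polyseqloc-hard by \Cref{lemma:networkdecompositionhard}. Concretely, given a black-box conflict-free multicoloring oracle that uses $q=\poly\log n$ colors on almost uniform hypergraphs with $\poly n$ hyperedges, I would build a deterministic \LOCAL\ algorithm that computes a $\big(\poly\log n,\poly\log n\big)$-decomposition (\Cref{def:decomposition}) of an arbitrary $n$-node graph $G=(V,E)$; by the definition of polylog-reducibility and the \polyseqloc-hardness of network decomposition, this suffices to prove the lemma.

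The first ingredient is a ball-growing preprocessing step. Fix a small constant $\eps>0$. By standard ball-growing arguments (e.g., \cite{awerbuch85,Awerbuch-Peleg1990,linial93}), every node $v$ has a radius $r_v=O(q\log n/\eps)$ such that the $q+1$ balls $B_{r_v}(v),B_{r_v+1}(v),\dots,B_{r_v+q}(v)$ all have the same size up to a $(1\pm\eps)$ factor, and such an $r_v$ is computed locally in $O(q\log n/\eps)=\poly\log n$ rounds. Each node $v$ then contributes $q+1$ hyperedges, namely these balls. I would partition $V$ into $O(\log n/\eps)$ buckets according to $\lfloor\log_{1+\eps}|B_{r_v}(v)|\rfloor$, and for each bucket form one hypergraph on vertex set $V$ whose hyperedges are exactly the $q+1$ balls of the nodes in that bucket. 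By construction each such hypergraph is almost uniform (all its hyperedges have size within a $(1\pm O(\eps))$ factor), has at most $(q+1)n=\poly n$ hyperedges, and --- since every hyperedge is a radius-$O(q\log n/\eps)$ ball --- induces an overlay graph that is $\poly\log n$-simulatable. Feeding each of these $O(\log n/\eps)$ hypergraphs to the oracle, with disjoint color palettes, produces a conflict-free multicoloring using $O(q\log n/\eps)=\poly\log n$ colors in total.

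The clustering comes from a pigeonhole argument at each node. All $q+1$ balls of a node $v$ lie in the same hypergraph, so each ball $B_{r_v+i}(v)$ has a witness color $c_i$ carried (in the conflict-free sense of \Cref{def:conflictfree}) by exactly one of its nodes; with $q+1$ balls but only $q$ colors, two indices $a<b$ satisfy $c_a=c_b=:x$, and since $B_{r_v+a}(v)\subseteq B_{r_v+b}(v)$ the unique $x$-carrier $w$ of the smaller ball is also the unique $x$-carrier of the larger, so (replacing $b$ by $a+1$ if needed) $w$ is the unique $x$-carrier within radius $r_v+a+1$ of $v$ while $w\in B_{r_v+a}(v)$. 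Node $v$ adopts cluster color $x$ and cluster center $w$. I would then argue: (i) whenever two adjacent nodes pick the same cluster color they must pick the same cluster center, so that the equivalence classes ``same center and same color'' are clusters whose colors (the $x$-values) form a proper coloring of the cluster graph with $\poly\log n$ colors; and (ii) every node of a cluster lies within $O(q\log n/\eps)=\poly\log n$ hops of its center, so each cluster has polylogarithmic (strong, hence weak) diameter. Together these give the desired $\big(\poly\log n,\poly\log n\big)$-decomposition, and the whole reduction costs $\poly\log n$ rounds plus one layer of oracle calls on $\poly\log n$-simulatable overlays.

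The main obstacle I expect is exactly steps (i)--(ii): controlling the interaction between the \emph{per-node} ball-growing radii and the \emph{exact} uniqueness guarantee of the conflict-free witness. One has to use the stability of the ball sizes over the entire window $[r_v,r_v+q]$ to argue that, for adjacent nodes $u\sim v$ with the same cluster color, $u$'s chosen center falls inside the ball of $v$ in which $v$'s center is the unique $x$-carrier (and symmetrically), which is what forces the two centers to coincide and simultaneously bounds cluster diameters. A secondary, more routine point is to choose $\eps$ and the bucketing so that every constructed hypergraph is genuinely almost uniform and is a legitimate overlay graph of polylogarithmic simulation radius; this is straightforward once the parameters are fixed.
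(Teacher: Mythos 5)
Your reduction is, step for step, the paper's own proof of this lemma: ball-grow at each $v$ to a window $[r_v,r_v+q]$ of stable ball sizes, bucket by $\lfloor\log_{1+\eps}|B_{r_v}(v)|\rfloor$ into $O(\log n/\eps)$ almost-uniform hypergraphs, query the oracle with disjoint palettes, pigeonhole the $q+1$ witness colors into $q$ slots to select a cluster color $x$ and a center $w\in B_{r_v+a}(v)$ that is the unique $x$-carrier within radius $r_v+a+1$, and assemble a $(\poly\log n,\poly\log n)$-decomposition. The steps you flag as the ``main obstacle'' resolve more cleanly than you anticipate and need no further appeal to ball-size stability (that stability is used only to ensure all $q+1$ hyperedges of $v$ land in the same hypergraph and hence share a palette): if $v_1\sim v_2$ share cluster color $x$, set $t_i=\mathrm{dist}(v_i,\mathit{Center}(v_i))$ and assume WLOG $t_1\geq t_2$; then $\mathit{Center}(v_2)$ is an $x$-carrier within distance $t_2+1\leq t_1+1$ of $v_1$, and your own uniqueness property at $v_1$ (unique $x$-carrier within radius $t_1+1$, since $t_1\leq r_{v_1}+a_1$) forces $\mathit{Center}(v_1)=\mathit{Center}(v_2)$. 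As every node is within $R=O(q\log n/\eps)$ of its center, same-color components have weak diameter at most $2R$, giving the $(2R,q\ell)$-decomposition exactly as in the paper.
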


\begin{proof}
  Assume that for some given $0<\eps<1$ and $q=\log^{O(1)}n$, such
  that for every $k\leq n$, we have an oracle to compute
  a $q$-color conflict-free multicoloring of a given $n$-node
  hypergraph $H=(V_H,E_H)$ with polynomially many hyperedges and where
  for each hyperedge $e\in E_H$,
  $k\leq |e|\leq (1+\eps/3)^2k < (1+\eps)k$. We use
  $O(\log n/\eps)$ iterations of the $q$-color multicoloring oracle to compute a
  $(\poly\log n, \poly\log n)$-decomposition of a given graph
  $G=(V,E)$ in polylogarithmic deterministic time in the \LOCAL model.
  Since \Cref{lemma:networkdecompositionhard} shows that
  $(\poly\log n, \poly\log n)$-network decomposition is
  \polyseqloc-hard, we get that $q$-color multicoloring of almost
  uniform hypergraphs is also \polyseqloc-hard.

  \paragraph{Construction of the hypergraphs {\boldmath$H_1$, \dots, $H_\ell$}.}  %
	We first define $\ell=O(\log n/\eps)$ almost uniform hypergraphs
  \mbox{$H_1$, $H_2$, \dots, $H_\ell$} on the node set $V$.
	For each vertex $v$, let $B_r(v)$ denote the set of all vertices within distance $r$ of $v$ in graph $G$. Let $r_v$ be the smallest radius $r$ such that $\frac{|B_{r+q}(v)|}{|B_r(v)|}\leq 1+\eps/3$. Note that $r_v\leq \bigO(q\log n /\eps)$. This is because, otherwise, with every $q$ additive increase in the radius of the ball $B_r(v)$, its size would grow by a $(1+\eps/3)$ factor and this cannot happen more than $\bigO(\log n/\eps)$ many times. 
	Include $q+1$ hyperedges, each defined by one of the vertex sets $B_{r_v} (v)$, $B_{r_v+1} (v)$, \dots, $B_{r_v+q} (v)$, all in the hypergraph $H_i$ such that $i=\lfloor \log_{1=\eps/3} |B_{r_v}(v)|\rfloor$. Perform this for each vertex $v$.  
	
	Note that every
  node $v$ can perform this step and define its hyperedges in the
  \LOCAL model in $O\big(q\frac{\log n}{\eps}\big)$
  rounds.
	  Notice that each hypergraph $H_i$ is almost uniform because each hyperedge
$e\in H_i$ has size $(1+\eps/3)^i \leq |e| < (1+\eps/3)^2 \cdot
  (1+\eps/3)^{i}.$
  Furthermore, each hyperedge of each $H_i$ has radius at most
  $R=\bigO(q\log n/\eps)$ in $G$ and thus a round of communication on
  $H_i$ can be simulated in $\bigO(q\log n/\eps)$ rounds on $G$.

  \paragraph{Construction of network decomposition.} We make $\ell = \bigO(\log n/\eps)$ (parallel) calls to the
  $q$-color multicoloring oracle to compute a $q$-color conflict-free
  multicoloring for each hypergraph $H_i$, where the coloring of $H_i$
  for $i\in [\ell]$ uses colors in $[1+(i-1)q, iq]$. We claim that
  this provides a $(2R, q\ell)$-decomposition.

  Define the network decomposition as follows. For each vertex
  $v\in G$, we define a cluster center and a cluster color.  The
  cluster centers and cluster colors are 
  defined as follows. Consider the hyperedges corresponding to the hyperedges
  $B_{r_v} (v)$, $B_{r_v+1} (v)$, \dots, $B_{r_v+q} (v)$ in $H_i$.
  Associate each of these hyperedges $e$ with one color
  $c\in [1+(i-1)q, iq]$ such that exactly one vertex in $e$ has color
  $c$. Note that such a color exists by the definition of a conflict-free
  multicoloring. Since there are $q+1$ hyperedges, one corresponding
  to each ball, and they are associated with only $q$ colors in
  $[1+(i-1)q, iq]$, by the pigeonhole principle, there are two radii
  $r_1, r_2 \in [r_{v}, r_v+q]$, $r_1<r_2$, such that the hyperedges
  corresponding to $B_{r_1}(v)$ and $B_{r_2}(v)$ are associated with
  the same color $c\in [1+(i-1)q, iq]$. Therefore, there is a node
  $u \in B_{r_1}(v)$ that is colored with color $c$ and this is the
  only vertex in $B_{r_2}(v)$, and thus also in $B_{r_1+1}(v)$, that is
  colored with color $c$. Then, $v$ will be in a cluster of color $c$
  and the cluster-center $\mathit{Center}(v):=u$. Notice that when defining
  $t:=dist(v, \mathit{Center}(v))$, we have the following \emph{uniqueness}
  property: the node $\mathit{Center}(v)=u$ is the only node within distance $t+1$
  of $v$ that has color $c$.

  To prove that we get a $(2R, q\ell)$-network decomposition, we argue
  that for each two neighboring nodes $v_1$ and $v_2$ which are in clusters
  of the same color $c$, we have $\mathit{Center}(v_1)=\mathit{Center}(v_2)$. Let
  $t_1=dist(v_1, \mathit{Center}(v_1))$ and $t_2=dist(v_2, \mathit{Center}(v_2))$.
  Suppose that $t_1\geq t_2$. Then, $\mathit{Center}(t_2)$ is within distance
  $t_1+1$ of $v_1$. By the uniqueness property stated above,
  $\mathit{Center}(v_1)$ is the only node within distance $t_1+1$ of $v_1$ that
  has color $c$. Hence, $\mathit{Center}(v_1)=\mathit{Center}(v_2)$.
  We therefore get that any connected component of the same color has
  weak diameter at most $2R$, which concludes the proof.
\end{proof}

\subsection{Conflict-Free Multicoloring is in $\polyseqloc$}
\label{sec:CFcontained}

\begin{lemma} \label{lemma:CFcontained} 
  There is an \SLOCAL algorithm with locality $\poly\log(n)$ that
  finds an $\bigO(\log n)$-color conflict-free
  \multicoloring\footnote{For sufficiently large $k$,
    this algorithm can be modified to a coloring which assigns each
    node exactly one color.} in any almost uniform hypergraph with
  $\poly(n)$ hyperedges.
\end{lemma}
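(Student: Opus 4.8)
The plan is to derandomize, in the \SLOCAL\ model, the zero-round randomized algorithm described above. That algorithm sets $q=\Theta(\log n)$ and lets each node $v$ include each color $c\in[q-1]$ in $\phi(v)$ independently with probability $1/k$ (where $k$ is the uniformity parameter), outputting $\phi(v)=\{q\}$ as a fallback when this set is empty. The facts from its analysis that I rely on are: (i) for every hyperedge $e$ and color $c\in[q-1]$, almost-uniformity gives $\Pr\big[\,|\{w\in e : c\in\phi(w)\}|=1\,\big]\ge p_0$ for an absolute constant $p_0>0$ that does \emph{not} depend on $k$; (ii) calling $e$ \emph{bad} if no color witnesses it, $\Pr[B_e]\le(1-p_0)^{q-1}$; and (iii) $B_e$ depends only on the coin flips of the nodes in $e$, which form a clique in the graph $G_H$ that models the hypergraph (two nodes adjacent iff they share a hyperedge). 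Since there are only $\poly(n)$ hyperedges, choosing the constant in $q=\Theta(\log n)$ large enough yields $\E[N]=\sum_e\Pr[B_e]<1$, where $N$ counts the bad hyperedges.

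I would then apply the method of conditional expectations along the \SLOCAL\ processing order. When node $v$ is processed, it queries its $1$-hop neighborhood in $G_H$; because every hyperedge $e\ni v$ is a clique of $G_H$, this lets $v$ see all nodes of every such $e$, and each neighbor's stored state reveals whether it has already been processed and, if so, which colors it committed to. Node $v$ then chooses $\phi(v)\cap[q-1]\subseteq[q-1]$ (using the model's unbounded local computation) so as to minimize $\sum_{e\ni v}\Pr\big[B_e \mid \text{the commitments so far, including $v$'s choice}\big]$, where this probability is evaluated \emph{exactly} by treating the coins of the not-yet-committed nodes of $e$ as fresh i.i.d.\ $\mathrm{Bernoulli}(1/k)$ variables; for each color it involves only the number of not-yet-committed nodes of $e$ and the count $a_c=|\{w\in e : w\text{ has committed to }c\}|$, all of which $v$ knows. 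Finally $v$ records its choice in $S_v$, outputting $\{q\}$ if that subset is empty.

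For correctness, let $\Phi_t=\E\big[N\mid\text{commitments of the first $t$ processed nodes}\big]$, a deterministic quantity once the order and earlier choices are fixed. Then $\Phi_0=\E[N]<1$ and $\Phi_n=N$ is the actual number of bad hyperedges produced. Each step is non-increasing: fixing $v$'s coins alters only the terms $\Pr[B_e\mid\cdot]$ with $e\ni v$, and the minimum over $v$'s possible choices is at most their average, which equals the value before $v$ chose. Hence $N=\Phi_n\le\Phi_0<1$, so $N=0$ and $\phi$ is a conflict-free multicoloring using the $q=O(\log n)$ colors $[q]$. The locality is $O(1)$ in $G_H$, hence certainly $\poly\log n$, and since the argument never used anything about the order $p$, it holds for every $p$, as the \SLOCAL\ model requires. (If one prefers to stay within the paper's general toolkit, the same conditional-expectation step can instead be run cluster-by-cluster on top of an $(O(\log n),O(\log n))$-decomposition of $G_H$, which is in \polyseqloc\ by \Cref{thm:decomposition}; this is not needed here.)

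I do not expect a genuine obstacle, as this is a standard derandomization. The two points needing care are: that the conditional expectation of $N$ is a legitimate pessimistic estimator (true because each $B_e$ is a monotone function of the committed coins and the conditioning is computed exactly, so the min-at-most-average step is valid), and that it is computable from only $1$-hop information in $G_H$ (which uses that each hyperedge is a clique there and that a node learns the vertex sets of its incident hyperedges from its input). One should also verify fact (i) — that \emph{almost}-uniformity, not just exact uniformity, still gives a constant lower bound on the per-color witnessing probability uniformly in $k$ — so that $q=\Theta(\log n)$ colors really suffice; this is a one-line estimate on $f(j)=j\cdot\tfrac1k\big(1-\tfrac1k\big)^{j-1}$ over $j\in[k,(1+\eps)k]$, which is decreasing there and at $j=(1+\eps)k$ is bounded below by a constant depending only on $\eps$.
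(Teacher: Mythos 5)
Your proof is correct, but it takes a genuinely different route from the paper's. The paper builds an $O(\log n)$-phase \SLOCAL\ algorithm (and then invokes \Cref{lemma:phaseReduction}): in phase $i$ it runs a ball-growing process to find a radius $r=O(\log n)$ with $|E[B_{r+2}(v)]|/|E[B_r(v)]|\le 2$, then uses the probabilistic method \emph{existentially}, inside each ball, to assign a single fresh color $i$ so that a constant fraction of the local hyperedges get a unique-color witness; after $O(\log n)$ phases all hyperedges are witnessed. The resulting single-phase locality is $O(\log^2 n)$. You instead derandomize the zero-round random multicoloring directly, by the method of conditional expectations along the given \SLOCAL\ order, fixing \emph{all} of a node's $q-1$ color-coins at once. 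Since each hyperedge is a clique in the derived graph $G_H$, a node can read the exact commitments of all other members of its hyperedges with a single $1$-hop query, and the independence of per-color coins makes $\Pr[B_e\mid\cdot]$ a closed-form product, so the pessimistic estimator is computed exactly and the min-at-most-average step goes through for every order $p$. This yields locality $O(1)$ in $G_H$, which is strictly better than what the paper states, and it sidesteps both the ball-growing and \Cref{lemma:phaseReduction}. The tradeoff is stylistic: the paper's argument fits the probabilistic-method-plus-multi-phase toolkit it reuses elsewhere (e.g.\ for \Cref{lemma:seqlocalsplit}), whereas yours is a self-contained sequential derandomization. One small inaccuracy worth fixing: your parenthetical appeal to monotonicity of $B_e$ is unnecessary and slightly misplaced — since you compute the conditional expectation of $N$ exactly, you do not need a separate pessimistic-estimator monotonicity argument; the tower property alone gives $\min_S \Phi_t(S)\le \Phi_{t-1}$. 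Your final check on the per-color witnessing probability over $|e|\in[k,(1+\eps)k]$ is indeed the only analytic estimate needed and it holds uniformly in $k\ge 1$, so $q=\Theta(\log n)$ suffices.
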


\begin{proof}
  We describe an $\bigO(\log n)$-phase conflict-free \multicoloring\
  algorithm. This can be transferred into a single-phase algorithm
  using \Cref{lemma:phaseReduction}. Using
  \Cref{observation:memoryWriting}, we can also assume that when
  processing a node $u$, $u$ can write into the memory of nodes in its
  polylog-neighborhood. In each phase, we use one new color such that
  the number of hyperedges which do not have a unique color reduces by
  a constant factor. Consider the $i^{th}$ phase. Suppose
  that $v_1$, $v_2$, \dots, $v_n$ is the provided order and we are now
  working on $v_j$. Let $B_{r}(v)$ be the set of vertices within
  distance $r$ of $v$ and $E[B_{r}(v)]$ be the set of hyperedges with
  all their vertices in $B_{r}(v)$. We check the $\bigO(\log n)$
  neighborhood of $v_j$ to see if $v_j$ is \emph{processed} before in
  the $i^{th}$ phase.  Otherwise, we use a ball growing method to find
  a radius $r\leq R=\bigO(\log n)$ such that
  $\frac{|E[B_{r+2}(v_j)]|}{|E[B_{r}(v_j)]|} \leq 2$.  Then, $v_j$
  assigns color $i$ to some of the vertices in $B_{r}(v_j)$ such that
  a constant fraction of the hyperedges in $E[B_{r}(v)]$ have exactly
  one vertex with color $i$. Such a coloring exists, by a
  probabilistic method argument: coloring each $u\in B_{r}(v_j)$ with
  color $i$ with probability $\frac{1}{k}$ would provide such a
  coloring, with a positive probability. Then, all nodes in
  $B_{r+1}(v)$ are considered \emph{processed} for phase $i$; they
  will not be colored again in this phase.

Since $|E[B_{r}(v)]| \geq |E[B_{r+2}(v)]|/2$, this process removes a constant fraction of the edges incident on the newly processed nodes. Hence, at the end of the phase, at least a constant fraction of the hyperedges of this phase have received unique colors, i.e., having exactly one vertex with color $i$. Since per phase a constant fraction of the remaining hyperedges receive unique colors, $\bigO(\log n)$ phases suffice. At the end, vertices with no color are assigned a default color.
\end{proof}


\section{Completeness of Local Splitting}
\label{sec:localsplit}

In this section, we discuss the \emph{local splitting} problems
defined in \Cref{def:split,def:weaksplit} and we show that these
extremely rudimentary looking problems in some sense capture the core
of the difficulty in designing $\poly\log n$ round deterministic
algorithms in the \LOCAL model. As outlined in
\Cref{sec:LocalSplitSummary}, we reduce the conflict-free
multicoloring problem to the local splitting problems. For
completeness, we restate the definitions of $\lambda$-local splitting
and weak local splitting.

\medskip

\noindent\textbf{\Cref{def:split}} (\textbf{Local Splitting})\textbf{.}
Given is a bipartite graph $B=(U \dot{\cup} V, E_B)$ where
$E_B\subseteq U\times V$. For any $\lambda\in [0,1/2]$, we define a
$\lambda$-local splitting of $B$ to be a $2$-coloring of the nodes in
$V$ with colors red and blue such that each node $v$ has at least
$\lfloor\lambda\cdot d(v)\rfloor$ neighbors of each color.

\medskip

\noindent\textbf{\Cref{def:weaksplit}} (\textbf{Weak Local Splitting})\textbf{.}
  Given is a bipartite graph $B=(U \dot{\cup} V, E_B)$ where \mbox{$E_B\subseteq
  U\times V$}. We define a weak local
  splitting of $B$ to be a $2$-coloring of the nodes in $V$ with
  colors red and blue such that each node $v$ has at $1$ neighbor of
  each color.

\medskip

Note that if $\delta$ is the minimum degree of any node $v\in U$ and
we set $\lambda\geq 1/\delta$, any $\lambda$-local splitting is also a
weak local splitting. We will show that even for graphs where all
nodes $v\in U$ have degree $\delta/2<d(v)\leq \delta$, there is a
constant $c>0$ such that for any $\delta$ with
$c\ln n \leq \delta = \log^{O(1)} n$, the weak local splitting problem
is \polyseqloc-complete. We prove this by first reducing the problem
of computing a $1/\poly\log n$-local splitting to the weak local
splitting problem and by then reducing the conflict-free
multicoloring of the previous section to the problem of computing a
$1/\poly\log n$-local splitting of a given bipartite graph $B$. We
note that the weak local splitting problem can be seen as a
generalization of the weak $2$-coloring problem introduced and studied
in \cite{naor95}. A weak $2$-coloring of a graph $G$ is a $2$-coloring
of the nodes of $G$ such that each node has at least one neighbor of a
different color. If we define a hypergraph $H$ with the same set of
nodes as $G$ and where we add a hyperedge for each of the $n$
$1$-neighborhoods of $G$, a weak local splitting of the bipartite
graph corresponding to $H$ is exactly a weak $2$-coloring of $G$.
Using techniques from \cite{naor95,linial92,Kuhn2009WeakColoring}, the
weak $2$-coloring problem can be solved in $O(\log^*n)$ deterministic
rounds in the \LOCAL model. Hence, the weak local splitting problem
can be solved efficiently for some interesting special cases. We
however show that even in sparse bipartite graphs, the general case is
as hard as any \polyseqloc-problem.

Before proving the hardness of the local splitting problems, we point
out that both local splitting problems are trivially solvable without
communication when using randomization. 

\begin{observation}\label{obs:split}
  There are positive constants $c$ and $\eps<1/2\sqrt{c}$ such that
  there is a zero-round randomized distributed algorithm which,
  w.h.p., solves the $\big(1/2-\eps\sqrt{\ln(n)/\delta}\big)$-local splitting
  problem for every bipartite $n$-graph $B=(U\dot{\cup}V, E)$ in which
  each node in $U$ has degree at least $\delta\geq c\ln n$.
\end{observation}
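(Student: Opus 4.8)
The plan is to analyze the obvious randomized algorithm: every node $v\in V$ flips a private fair coin and colors itself red or blue accordingly. This uses zero rounds of communication (indeed zero bits), so the only thing to verify is that, for suitably chosen constants $c$ and $\eps$, the produced coloring is a $\lambda$-local splitting with $\lambda:=1/2-\eps\sqrt{\ln n/\delta}$, with high probability. Note that $\lambda\in[0,1/2]$ will follow from the choice of constants, since $\delta\geq c\ln n$ forces $\eps\sqrt{\ln n/\delta}\leq\eps/\sqrt c$.

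Fix a node $u\in U$ and set $d:=d(u)\geq\delta$. Let $X_u$ be the number of red neighbors of $u$; since the $d$ neighbors of $u$ are colored by independent fair coins, $X_u$ is a sum of $d$ independent $\mathrm{Bernoulli}(1/2)$ variables with mean $d/2$. If $|X_u-d/2|\leq\eps\sqrt{\ln n/\delta}\cdot d=(1/2-\lambda)d$, then $u$ has $X_u\geq\lambda d\geq\lfloor\lambda d\rfloor$ red neighbors and $d-X_u\geq\lambda d\geq\lfloor\lambda d\rfloor$ blue neighbors, so the splitting constraint at $u$ is met. By Hoeffding's inequality,
\[
\Prob{\, |X_u-d/2| \geq \eps\sqrt{\ln n/\delta}\cdot d \,}\;\leq\; 2\exp\!\bigl(-2\eps^2 (d/\delta)\ln n\bigr)\;\leq\; 2n^{-2\eps^2},
\]
where the last inequality uses $d\geq\delta$. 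Hence $u$'s constraint is violated with probability at most $2n^{-2\eps^2}$.

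It then remains to pick the constants and take a union bound over the at most $n$ nodes of $U$. Two opposing requirements must be reconciled: $\eps<\sqrt c/2$ (so that $\eps/\sqrt c<1/2$, hence $\lambda>0$ for every admissible $\delta$), and $2\eps^2>1$ (so that $n\cdot 2n^{-2\eps^2}=2n^{1-2\eps^2}=o(1)$). Both hold as soon as $c>2$; concretely one may take, e.g., $c=8$ and $\eps=1$, in which case the coloring fails to be a $\lambda$-local splitting with probability at most $2/n$, and taking $\eps$ closer to $\sqrt c/2$ (with $c$ correspondingly larger) boosts this to an arbitrarily large inverse polynomial. I do not expect any genuine obstacle here; the only mild subtlety is exactly this coupling of $c$ and $\eps$ forced by the two constraints above, which is why the statement ties $\eps$ to $\sqrt c$.
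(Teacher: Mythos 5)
Your proposal is correct and matches the paper's own (very terse) proof exactly: the paper simply observes that each node in $V$ flipping a fair coin and applying a Chernoff/Hoeffding bound plus a union bound over $U$ gives the claim. You supply the calculation that the paper omits, and your handling of the coupled constraints $2\eps^2>1$ and $\eps<\sqrt{c}/2$ is exactly the intended reason the statement ties $\eps$ to $\sqrt{c}$.
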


The randomized algorithm is trivial: Each node in $V$ is independently
colored red or blue with probability $1/2$. \Cref{obs:split}
then directly follows from Chernoff bounds and a union bound over all
nodes in $U$.

Specifically, the goal of this section is to prove
\Cref{thm:localsplit,thm:weaklocalsplit}, which we restate here for
completeness. The theorems directly follow from the technical lemmas
which appear in the next two subsections.

\medskip 
\noindent\textbf{\Cref{thm:localsplit} (restated).}
\emph{For $n$-node bipartite graphs $H=(U\dot{\cup} V, E)$ where all
  nodes in $U$ have degree at least $c\ln^2 n$ for a large enough
  constant $c$, the $\lambda$-local splitting problem for any
  $\lambda=\frac{1}{\poly\log n}$ is \polyseqloc-complete.}
\smallskip
\begin{proof}
  The claim directly follows from \Cref{lemma:localsplithardness,lemma:seqlocalsplit}.
\end{proof}

\medskip 
\noindent\textbf{\Cref{thm:weaklocalsplit} (restated).}
\emph{For $n$-node bipartite graphs $H=(U\dot{\cup} V, E)$ where all nodes in $U$
  have degree $\delta/2< d(u)\leq \delta$, for any $\delta$ such that
  $c\ln^2 n \leq \delta = \log^{O(1)} n$ for a sufficiently large
  constant $c$, the weak local splitting problem is
  \polyseqloc-complete.}
\smallskip
\begin{proof}
  The claim directly follows from
  \Cref{lemma:weaksplithardness,lemma:localsplithardness,lemma:seqlocalsplit}.
\end{proof}

\subsection{Weak Local Splitting is \boldmath\polyseqloc-hard}

As a part of the reduction for the local splitting problem, we need to
show that in hypergraphs of polylogarithmic rank, the conflict-free
multicoloring problem is in \polyloc. This is proven by the following
lemma.

\begin{lemma}\label{lemma:lowrank_cfcoloring}
  Let $H=(V,E)$ be an $n$-node hypergraph of rank at most
  $\kappa=\log^{O(1)}n$. There exists a
  $q=\log^{O(1)}n$ such that a $q$-color conflict-free
  multicoloring of $H$ can be computed in deterministic $\poly\log n$
  time in the \LOCAL model.
\end{lemma}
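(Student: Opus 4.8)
The plan is to give a \emph{zero}-round deterministic algorithm, sidestepping any iterative coloring scheme. The point is that the rank $\kappa$ — which is the only ``locality'' that ever needs to be broken inside a hyperedge — is polylogarithmic, and we are allowed polylogarithmically many colors and arbitrarily large color sets $\phi(v)$. The tool I would use is a \emph{$\kappa$-cover-free family}: a family $\{A_x\}_{x\in[N]}$ of subsets of $[q]$ such that for every choice of $\kappa+1$ distinct indices, $A_{x_0}\not\subseteq A_{x_1}\cup\cdots\cup A_{x_\kappa}$; here $N=n^{O(1)}$ is the globally known upper bound on the ID space. First I would recall, via a standard probabilistic argument — put each element of $[q]$ into each $A_x$ independently with probability $1/(\kappa+1)$, then union-bound over the at most $N^{\kappa+1}$ ``bad'' index tuples together with the $O(N^2)$ emptiness/coincidence events — that such a family with distinct nonempty sets exists as soon as $q=\Theta(\kappa^2\log N)$. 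Since $\kappa=\log^{O(1)}n$ and $N=n^{O(1)}$, this gives $q=\log^{O(1)}n$, as required. (Explicit constructions, e.g.\ Kautz--Singleton superimposed codes from Reed--Solomon codes, achieve the same asymptotics and could be used instead.)

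The algorithm is then: fix once and for all such a family $\{A_x\}$ — every node can compute the same canonical one by (unbounded) local computation from $N$ and the polylogarithmic bound $\kappa$ on the rank — and let node $v$ output $\phi(v):=A_{\mathrm{ID}(v)}$, using no communication. For correctness, let $e\in E$ be any hyperedge; since the rank is at most $\kappa$ we have $|e|\le\kappa$, so any fixed $v\in e$ has at most $\kappa-1$ other vertices $u\in e\setminus\{v\}$. Applying the cover-free property to $\mathrm{ID}(v)$ and the IDs of those other vertices (the property for $\le\kappa$ sets follows from the one for exactly $\kappa$) yields a color $c\in A_{\mathrm{ID}(v)}$ with $c\notin A_{\mathrm{ID}(u)}$ for every $u\in e\setminus\{v\}$; hence $|\{w\in e: c\in\phi(w)\}|=1$, which is exactly the conflict-free condition. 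As every $\phi(v)$ is nonempty, this is a valid $q$-color conflict-free multicoloring with $q=\log^{O(1)}n$.

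The only real content here is the parameter bookkeeping for the cover-free family, so there is no genuine obstacle in this route; what is worth stressing is why the ``obvious'' approach is harder, and this is the step I would expect to be the main trap. A natural iterative peeling — in each phase use a defective/weak coloring (à la \cite{Kuhn2009WeakColoring}) to split every hyperedge into smaller color classes, recursing on the rank — runs into the fact that the conflict graph (two vertices adjacent iff they lie in a common hyperedge) can have unbounded maximum degree, so the defective-coloring subroutine, whose palette size grows with the degree, would use far too many colors; making that approach work would require an extra argument bounding the effective degree. The cover-free construction avoids this issue entirely. (If instead a single color per node were desired in some restricted regime, the defective-coloring route becomes the natural one, but for the multicoloring statement of the lemma the above already suffices.)
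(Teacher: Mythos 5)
Your proof is correct, and it takes a genuinely different — and in fact cleaner — route from the paper. The paper argues iteratively: form the conflict \emph{multigraph} $G_i$ on $V$ (one parallel edge per shared hyperedge), run the one-round $(\Delta_i/2\kappa)$-defective $O(\kappa^2\log n)$-coloring of \cite{Kuhn2009WeakColoring} on $G_i$, discard every hyperedge with a witness color, and show the multigraph maximum degree halves; since $\Delta_1\le n^{\kappa-1}$, this terminates after $O(\kappa\log n)$ phases with a $\log^{O(1)}n$ total palette and $\poly\log n$ rounds. Your cover-free-family argument bypasses all of this and yields a zero-round algorithm at the same polylogarithmic color budget, which is strictly stronger in round complexity; the probabilistic existence bound $q=\Theta(\kappa^2\log N)$ and the derivation of the conflict-free property from the $\kappa$-cover-free property are both standard and correct (noting that the property for $\le\kappa$ covering sets follows from the one for exactly $\kappa$, as you say).

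One inaccuracy worth flagging is your explanation of why the ``obvious'' iterative route is a trap. You claim the defective-coloring subroutine's ``palette size grows with the degree,'' so an unbounded-degree conflict graph would blow up the color count. That is not how the primitive of \cite{Kuhn2009WeakColoring} is parameterized: it produces a $(\Delta/p)$-defective $O(p^2\log n)$-coloring for any chosen $p$, with a palette that depends only on $p$, not on $\Delta$. The paper exploits exactly this by taking the \emph{relative} defect $\Delta_i/(2\kappa)$ rather than a fixed absolute defect, so each phase uses only $O(\kappa^2\log n)$ colors while the degree halves; the unbounded degree only enters logarithmically, via the number of phases. So the iterative approach is in fact the one the paper uses, and it does go through — your cover-free construction is simply a simpler alternative, not a rescue from a dead end.
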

\begin{proof}
  The solution consists of $\ell$ phases, where in each phase, we
  remove some of the hyperedges from $H$. Let $H_i$ be the hypergraph
  before starting phase $i$, i.e., we have $H_1=H$. In each phase $i$,
  we color the nodes with colors from a new set $C_i$ of
  $c=O(\kappa^2\log n)$ colors and we afterwards remove all hyperedges
  from $H_i$ which contain exactly one node with color $x$ for some
  $x\in C_i$. The process ends when all hyperedges are removed. We
  show that this can be done such that the number of phases $\ell$ is
  polylogarithmic in $n$, implying the statement of the lemma.

  Let us now have a closer look at a specific phase $i$. We define
  a multigraph $G_i$ based on $H_i$. $G_i$ has the same node set as
  $H_i$ and the following edge set: We add one edge between every two
  nodes $u, v \in G_i$ for every hyperedge $e\in H_i$ that includes
  both $u$ and $v$. Hence, if the two nodes $u$ and $v$ share $\ell$ hyperedges,
  we include $\ell$ parallel edges between $u$ and $v$ in $G_i$. Let
  $\Delta_i$ be the maximum degree of $G_i$ (where the degree of a
  node is the number of its edges).

  In order to color the vertices of the hypergraph $H_i$ in phase $i$, we apply a
  distributed \emph{defective coloring} algorithm of
  \cite{Kuhn2009WeakColoring} to $G_i$. Given a graph $G$, a
  $d$-defective $c$-coloring of $G$ is a $c$-coloring of the nodes of
  $G$ such that every node has at most $d$ neighbors of the same
  color. In \cite{Kuhn2009WeakColoring}, it is shown that in an $n$-node graph
  with maximum degree $\Delta$, for any $p\geq 1$, one can compute a
  $(\Delta/p)$-defective $O(p^2\log n)$-coloring in a single
  communication round. From the construction in
  \cite{Kuhn2009WeakColoring}, it is straightforward to see that the
  algorithm can also directly be applied to multigraphs, where in a
  $d$-defective coloring of a multigraph, each node must be in at most
  $d$ monochromatic edges. 

  Using the algorithm from
  \cite{Kuhn2009WeakColoring}, we compute a
  $(\Delta_i/2\kappa)$-defective $O(\kappa^2\log n)$-coloring of the
  multigraph $G_i$. Now, each node of $H_i$ has one of
  $c=O(\kappa^2\log n)$ colors. As stated, to obtain the hypergraph
  $H_{i+1}$ for phase $i+1$, we now remove every
  hyperedge $e$ from $H_i$ for which there exists a color $x$ among these
  $O(\kappa^2\log n)$ colors such that exactly one node in $e$ has
  color $x$. Let $G_{i+1}$ be the multigraph which we obtain from the
  resulting hypergraph $H_{i+1}$ and as before, let $\Delta_{i+1}$ be
  the maximum degree of this multigraph. We next show that
  $\Delta_{i+1}\leq\Delta_i/2$. The claim of the lemma then directly
  follows because initially, each node can be in at most
  ${n-1\choose k-1}\leq n^{\kappa-1}$ hyperedges and thus the the
  maximum degree $\Delta_1$ of $G_1$ is at most quasi-polynomial in
  $n$. Therefore, the number of phases is at most
  $O(\log \Delta_1)=\log^{O(1)}n$.

  It remains to show that $\Delta_{i+1}\leq \Delta_i/2$. Consider a node
  $u$ and its incident hyperedges in $H_i$. Notice that a hyperedge
  $e\in H_i$ of node $u$ will remain for $H_{i+1}$ only if at least one other node
  $v\in e$ receives the same color as the color assigned to node
  $u$. In this case, the corresponding edge $\set{u,v}$ in $G_i$ is
  monochromatic. Since the coloring of $G_i$ has defect at most
  $\frac{\Delta_i}{2\kappa}$, we know that in $G_i$ there are at most
  $\frac{\Delta_i}{2\kappa}$ monochromatic edges incident to $u$. Hence,
  it follows $H_{i+1}$ can have at most $\frac{\Delta_i}{2\kappa}$
  hyperedges that contain node $u$. Given that each hyperedge of
  $H_{i+1}$, which is also a hyperedge of the original hypergraph $H$,
  has at most $\kappa$ nodes, we get that in $G_{i+1}$, node $v$ has
  degree at most $\Delta_i/2$. Thus, $\Delta_{i+1}\leq \Delta_i/2$. 
\end{proof}

We next show that there is a simple reduction from the $1/\poly\log
n$-local splitting problem to the weak local splitting problem.

\begin{lemma}\label{lemma:weaksplithardness}
  Let $\delta$, for $2\leq \delta=\log^{O(1)}n$, be an integer parameter and let
  $\lambda=1/\delta$. The $\lambda$-local splitting
  problem in $n$-node bipartite graphs $B=(U \dot{\cup} V, E)$ is
  polylog-reducible to the weak splitting problems in a bipartite
  graph $B'=(U'\dot{\cup}V, E')$ where each node $v$ in $U'$ has degree
  $\delta/2<d(v)\leq\delta$.
\end{lemma}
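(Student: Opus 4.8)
The goal is: given a bipartite graph $B = (U \dot\cup V, E)$ and $\lambda = 1/\delta$, solve $\lambda$-local splitting by a polylog-cost reduction to one or more instances of weak local splitting on bipartite graphs $B'$ whose $U'$-side degrees all lie in $(\delta/2, \delta]$. Let me think about what the reduction needs to deliver. A $\lambda$-local splitting asks each $v \in V$ to be red/blue so that every $u \in U$ sees at least $\lfloor \lambda\, d(u)\rfloor$ neighbors of each color. Equivalently, partition the neighborhood $N(u)$ into about $d(u)/\delta$ groups each of size in $(\delta/2, \delta]$ (this is the natural "chop the neighbor list into blocks of size $\Theta(\delta)$" operation, doable with zero communication since $u$ knows all of $N(u)$ and their IDs), create one new $U'$-node per group, and ask weak local splitting on the resulting bipartite graph $B'$. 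A weak local splitting of $B'$ gives, for each group, at least one red and one blue neighbor; summing over the $\lfloor d(u)/\delta \rfloor$ (or so) groups of a fixed $u$, node $u$ then sees at least $\lfloor d(u)/\delta\rfloor \ge \lfloor \lambda d(u)\rfloor$ neighbors of each color, which is exactly the $\lambda$-local splitting guarantee (up to adjusting constants / floors). So the construction of $B'$ is purely local: each $u \in U$ partitions $N(u)$ into consecutive blocks of size, say, between $\lceil\delta/2\rceil+1$ and $\delta$ — possible because $\delta \ge 2$, with the only subtlety being what to do when $d(u) < \delta/2$, in which case $\lfloor \lambda d(u)\rfloor = 0$ and $u$ imposes no constraint, so it simply contributes no $U'$-node (or, to be safe, we can pad; but no constraint is needed).

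The steps, in order: (i) For each $u \in U$, using only the local IDs of $N(u)$, deterministically partition $N(u)$ into blocks $N_1(u), N_2(u), \dots$ each of cardinality in $(\delta/2, \delta]$, discarding $u$ entirely if $d(u) \le \delta$ is too small to form even one block of size $> \delta/2$ with the leftover handled by merging into a previous block. Create a node $u_j \in U'$ for each block $N_j(u)$ with $N_{B'}(u_j) = N_j(u)$. Since the blocks partition $N(u)$, the new bipartite graph $B'$ has the same $V$-side and $E' = \bigcup_{u,j} \{u_j\} \times N_j(u)$; every $u_j$ has degree in $(\delta/2, \delta]$ as required. (ii) Observe $B'$ is a $1$-simulatable overlay structure relative to $B$ viewed through the bipartite incidence: the block assignment is computable by $u$ with $0$ rounds of communication in $B$ (each $u$ knows $N(u)$ and the IDs), so the reduction cost of constructing $B'$ and routing outputs back is $O(1)$. (iii) Call the weak-local-splitting oracle on $B'$, obtaining a red/blue coloring $c: V \to \{\text{red},\text{blue}\}$. (iv) Output $c$ as the answer to the original $\lambda$-local splitting instance. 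Correctness: fix $u \in U$ with $d(u) > \delta/2$ (else nothing to prove). It has $m_u \ge \lfloor d(u)/\delta \rfloor$ blocks (choosing block sizes $\le \delta$ forces at least $\lceil d(u)/\delta\rceil \ge \lfloor d(u)/\delta\rfloor$ of them); by the weak-splitting guarantee each block $N_j(u)$ contains $\ge 1$ red and $\ge 1$ blue vertex; these blocks are disjoint subsets of $N(u)$, so $u$ has $\ge m_u \ge \lfloor d(u)/\delta \rfloor = \lfloor \lambda d(u)\rfloor$ red neighbors and likewise $\ge \lfloor \lambda d(u)\rfloor$ blue neighbors in $B$.

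The main thing to be careful about — the "obstacle", though it is minor — is the arithmetic of block sizes: we need every block to have size strictly greater than $\delta/2$ and at most $\delta$, and simultaneously we need the number of blocks to be at least $\lfloor d(u)/\delta\rfloor = \lfloor \lambda d(u)\rfloor$. When $d(u)$ is not a multiple of $\delta$ there is a leftover of size $< \delta$; if that leftover is $\le \delta/2$ we must merge it into a neighboring block, which could temporarily push a block up to size $< 3\delta/2 > \delta$. The clean fix is to first form $\lfloor d(u)/\delta\rfloor$ blocks of size exactly between $\lfloor \delta/2\rfloor + 1$ and $\delta$ absorbing the remainder among them (possible whenever $d(u) > \delta/2$, since $d(u) \le \lfloor d(u)/\delta\rfloor \cdot \delta + (\delta - 1)$ and the slack $\delta - 1$ per block more than covers raising $\lfloor d(u)/\delta\rfloor$ blocks of size $> \delta/2$; a short counting check confirms feasibility as long as $\delta \ge 2$). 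Another minor point: the lemma as stated wants a single $B'$ with \emph{all} $U'$-degrees in $(\delta/2,\delta]$, and the construction above achieves exactly that in one shot, so no iteration or phase machinery is needed — this is genuinely a one-call, $O(1)$-cost reduction, and in particular polylog-reducibility is immediate. The whole argument is deterministic and uses no randomness, so it applies verbatim in both the deterministic and randomized settings.
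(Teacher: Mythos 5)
Your proposal is correct and follows essentially the same approach as the paper: partition each $N(u)$ (for $u\in U$ with $d(u)\geq\delta$) into blocks of size in $(\delta/2,\delta]$, introduce one $U'$-node per block, invoke the weak local splitting oracle on the resulting $B'$, and output the same red/blue coloring, giving $\lfloor\lambda d(u)\rfloor$ neighbors of each color via $k_u\geq\lfloor d(u)/\delta\rfloor$ disjoint blocks. You actually flag a block-size arithmetic subtlety (e.g.\ $d(u)=\delta+1$ with $\delta$ even cannot be fully partitioned into parts of size in $(\delta/2,\delta]$) that the paper glosses over by simply asserting feasibility; the clean resolution is what you in effect arrive at — form $\lfloor d(u)/\delta\rfloor$ blocks and discard the leftover neighbors rather than absorb them — and with that fix the argument goes through as in the paper.
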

\begin{proof}
  By using an oracle for the weak splitting problem, we need to
  deterministically solve the $\lambda$-local splitting problem on $B$
  in polylog time in the \LOCAL model. Note that we can w.l.o.g.\
  assume that all nodes $u\in U$ have degree $d(u)\leq \delta$ as for
  nodes $v$ in $U$ of degree $d(v)<\delta$, the condition on the
  neighboring colors is trivial (we then have
  $\lfloor\lambda\cdot d(v)\rfloor=0$) and we can thus remove
  such nodes from $U$.

  We transform the graph $B$ into a bipartite graph
  $B'=(U' \dot{\cup} V, E')$ as follows. Each node $v\in U$
  arbitrarily partitions its $d(v)\geq \delta$ neighbors $N(v)$ into
  parts $N_1(v),\dots,N_{k_v}(v)$ of size $\delta/2<N_i(v)\leq \delta$.
  Note that such a partition is always possible. If $N(v)$ is
  partitioned into $k_v$ parts, node $v$ is replaced by $k_v$ nodes
  $v_1,\dots,v_{k_v}$ in $U'$, where node $v_i$ is connected to
  $N_i(v)$. Note that when running a distributed algorithm on $B'$,
  node $v$ can simulate all nodes $v_1,\dots,v_{k_v}$. Clearly in
  $B'$, all nodes $U'$ have a degree in $(\delta/2,\delta]$. We can
  therefore run the weak local splitting oracle on $B'$ and get a
  coloring of $V$ such that each node in $U'$ has at least one red
  neighbor and at least one blue neighbor in $V$. This implies that
  each node $v\in U$ has at least $k_v\geq d(v)/\delta$ neighbors of
  each color in $V$ and we have therefore solved the $\lambda$-local
  splitting problem on $B$. The cost of the reduction is $O(1)$.
\end{proof}

We next prove that the $\lambda$-local splitting problem is
\polyseqloc-hard for any $\lambda=1/\poly\log n$.

\begin{lemma}\label{lemma:localsplithardness}
  For any $\lambda= 1/\log^{O(1)} n$, the problem of computing a
  $\lambda$-local split of an $n$-node bipartite $B=(U \dot{\cup} V,
  E)$ is \polyseqloc-hard.
\end{lemma}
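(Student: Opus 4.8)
The goal is to reduce conflict-free multicoloring of almost uniform hypergraphs — which is \polyseqloc-hard by \Cref{lemma:conflictfreeHardness} — to $\lambda$-local splitting for $\lambda = 1/\poly\log n$. So I assume an oracle that, given a bipartite graph $B=(U\dot\cup V,E)$ with all $U$-degrees at least $c\ln^2 n$, returns a red/blue coloring of $V$ in which every $u\in U$ has at least $\lfloor\lambda d(u)\rfloor$ neighbors of each color, and I must use it (together with a few deterministic polylog-time \LOCAL primitives) to produce a $q$-color conflict-free multicoloring of a given hypergraph $H=(V,E_H)$ with $q=\poly\log n$. Set $\delta := 1/\lambda = \poly\log n$.

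**Main construction (phased peeling on rank).** I would run $\ell = O(\delta\log n) = \poly\log n$ phases, each phase contributing one fresh block of colors. Maintain a ``current'' hypergraph $H^{(i)}$, initially $H$. In phase $i$: (1) First deal with the low-rank hyperedges: by \Cref{lemma:lowrank_cfcoloring}, the sub-hypergraph of $H^{(i)}$ consisting of all hyperedges of rank at most $\delta$ has a $\poly\log n$-color conflict-free multicoloring computable deterministically in $\poly\log n$ \LOCAL rounds; assign these colors (from a fresh palette) and delete all hyperedges of rank $\le\delta$ from $H^{(i)}$ — they are now permanently satisfied. (2) Now every surviving hyperedge has rank $>\delta$. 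Interpret $H^{(i)}$ as a bipartite graph $B^{(i)}$ in the natural way: $V$ on one side, one node per surviving hyperedge on the other side ($U$-side), adjacency = membership. Every $U$-node has degree $>\delta = 1/\lambda$. I need degree $\ge c\ln^2 n$ for the oracle; since $\delta \ge c\ln^2 n$ already holds under the theorem's hypothesis (all $U$-degrees $\ge c\ln^2 n$), and surviving hyperedges have rank $>\delta\ge c\ln^2 n$, this is satisfied — if the oracle needs a uniform lower bound I can also just require $\delta\ge c\ln^2 n$ from the start, which is fine as $\delta=\poly\log n$ is a free parameter. Call the $\lambda$-local splitting oracle on $B^{(i)}$; every surviving hyperedge $e$ now gets a red/blue split with at least $\lfloor\lambda|e|\rfloor\ge 1$ nodes of each color. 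Delete all blue $V$-nodes from $H^{(i)}$ to form $H^{(i+1)}$ — i.e., restrict each surviving hyperedge to its red nodes.

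**Why it works and terminates.** Correctness: because each surviving hyperedge kept at least one red node, a conflict-free multicoloring of $H^{(i+1)}$ (on the shrunken, red-only node sets) is also valid for $H^{(i)}$ — a color witnessing ``exactly one node'' in the red restriction of $e$ still witnesses it in $e$, since the other nodes of $e$ were deleted, not recolored. Composing over all phases gives a conflict-free multicoloring of $H$ using $\ell$ blocks of $\poly\log n$ colors each, i.e.\ $q = \poly\log n$ colors total. Termination: after phase $i$, every surviving hyperedge lost at least a $\lambda/2$-ish fraction of its nodes — more precisely, $e$ retains $|e| - \lceil(1-\lambda)|e|\rceil$... rather: it retains its red nodes, of which there are $|e|$ minus the blue count, and the blue count is at least $\lfloor\lambda|e|\rfloor$, so the red count is at most $(1-\lambda)|e| + 1 \le (1-\lambda/2)|e|$ for $|e|>\delta=1/\lambda$. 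Hence the maximum rank shrinks by a $(1-\Omega(\lambda))$ factor per phase; starting from rank $\le n$ it takes $O(\log n/\lambda) = O(\delta\log n) = \poly\log n$ phases to drive the rank below $\delta$, at which point step (1) of the next phase clears everything. Each phase costs $\poly\log n$ \LOCAL rounds (for \Cref{lemma:lowrank_cfcoloring} and the ball-free oracle simulation on the overlay graph $B^{(i)}$, whose edges have hop-length $O(1)$ in the natural encoding of $H$ as a graph, or are given directly), so the whole reduction is $\poly\log n$.

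**Anticipated obstacles.** Two technical points need care. First, the bipartite graphs $B^{(i)}$ fed to the oracle are overlay graphs on $H$; I must check they are $r$-simulatable with $r=\poly\log n$ — this is immediate since membership of a node in a hyperedge is $O(1)$-local in the standard graph encoding of $H$, and all hyperedges of $H$ have poly$(n)$ total count and radius absorbed into the construction, so oracle calls are cheap. Second, the $\lambda$-local splitting oracle only guarantees $\lfloor\lambda d(u)\rfloor$ per color, which is $\ge 1$ exactly when $d(u)\ge\delta=1/\lambda$; this is precisely why step (1) must peel off all rank-$\le\delta$ hyperedges first, and why the rank-shrinkage estimate uses $|e|>1/\lambda$. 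The genuinely delicate part is bookkeeping the color budget: each invocation of \Cref{lemma:lowrank_cfcoloring} uses some $q' = \poly\log n$ colors, each invocation uses a disjoint fresh palette, there are $\poly\log n$ phases, so the grand total is $\poly\log n$ — but one must make sure $q'$ does not itself depend on $n$ in a way that blows up when iterated (it does not: it depends only on the rank bound $\delta=\poly\log n$ and on $\log n$). I expect the rank-shrinkage argument and the disjoint-palette accounting to be the main places where the constants must be tracked honestly; everything else is a direct assembly of the stated lemmas.
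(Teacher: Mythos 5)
Your proposal is correct and follows essentially the same route as the paper's own proof: peel off low-rank hyperedges via \Cref{lemma:lowrank_cfcoloring}, convert the surviving high-rank hypergraph to a bipartite instance for the $\lambda$-local-splitting oracle, restrict to the red side, and iterate, with the rank shrinking geometrically. One small arithmetic slip: the chain $(1-\lambda)|e|+1\le(1-\lambda/2)|e|$ needs $|e|\ge 2\delta$ rather than $|e|>\delta$; the sharper bound $|e|-\lfloor\lambda|e|\rfloor\le(1-\tfrac{1}{2\delta})|e|$ used in the paper holds already for $|e|>\delta$ and gives the same conclusion.
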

\begin{proof}
  We reduce the the problem of computing a conflict-free
  multicoloring of a given hypergraph $H$ to the given local
  splitting problem. Hence, assume that we are given an $n$-node
  hypergraph $H=(V,E)$ with at most polynomially many hyperedges for which
  we want to compute a conflict-free multicoloring with $\poly\log n$
  colors by using a $\lambda$-local splitting oracle.

  The reduction consists of $\ell$ phases similar to the algorithm
  described in the proof of \Cref{lemma:lowrank_cfcoloring}. In each
  phase, we remove some of the hyperedges and some of the nodes. Let
  $H_i=(V_i,E_i)$ be the hypergraph before starting phase $i$, i.e.,
  we have $H_1=H$. In each phase $i$, we color the nodes with colors
  from a new set $C_i$ of $q=\log^{O(1)} n$ colors and we afterwards
  remove all hyperedges from $H_i$ which contain exactly one node with
  color $x$ for some $x\in C_i$. This will guarantee that all
  remaining hyperedges are large and we can then use the local
  splitting oracle to also remove some nodes. As in
  \Cref{lemma:lowrank_cfcoloring}, the process ends when all
  hyperedges are removed. The goal of each phase is to reduce the rank
  of the hypergraph by a factor at least $1-\lambda/2$. Let $R_i$ be
  the maximum hyperedge size (i.e., the rank) of $H_i$. Note that we
  have $R_1\leq n$. We thus need to show that for all $i$,
  $R_{i+1}\leq (1-1/\lambda)R_i$. Note that this implies that the
  reduction requires
  $\ell O\big(\frac{\log n}{\lambda}\big)=\log^{O(1)} n$ phases and we
  thus compute a conflict-free multicoloring of $H$ with at most
  $q\ell=\poly\log n$ colors as required.

  Let us now consider a single phase $i$ of the reduction. We define
  $\delta:=1/\lambda$ and we define the set
  \mbox{$L_i := \set{e\in E_i : |e| \leq \delta}$} to be the set hyperedges of graph
  $H_i$ of size at most $\delta$. Let $H_i'=(V,L_i)$ be the
  sub-hypergraph of $H_i$ which only contains the edges in $L_i$. We
  then compute a $q$-color conflict-free multicoloring of $H_i'$ for
  some $q=\log^{O(1)} n$. Lemma \ref{lemma:lowrank_cfcoloring}
  guarantees that we can do this deterministically in $\poly\log n$
  rounds in the \LOCAL model. This makes sure that for all hyperedges
  $e$ in $L_i$, there is a color $x$ so that exactly one node in $e$
  has color $x$. Note that if $R_i\leq \delta$, all hyperedges of
  $H_i$ are in $L_i$ and are therefore done. In the following, we thus
  assume that $R_i>\delta$. After removing all hyperedges in $L_i$,
  the resulting graph $H_i''=(V_i,E_i\setminus L_i)$ has only
  hyperedges of size larger than $\delta$ and it remains to compute a
  conflict-free multicoloring of $H_i''$.

  Let $B_i:=(U_i\dot{\cup} V_i, E_{B_i})$ be the bipartite graph which
  is obtained from $H_i''$ in the following natural way. The left side $U_i$
  contains a node for every hyperedge of $H_i''$, whereas the right
  side consists of the nodes $V_i$ of $H_i''$. The node $u\in U_i$
  corresponding to some hyperedge $e\in E_i\setminus L_i$ is connected
  to all the nodes in $V_i$ which are contained in $e$. In the
  following, let $d_i(u)$ be the degree of a node $u\in U_i$ in the
  bipartite graph $B_i$. Note that for all $u\in U_i$, we have
  $d_i(u)> \delta$. Using the $\lambda$-local splitting oracle, we now
  compute a $\lambda$-local splitting of the bipartite graph
  $B_i$. Note that because we assumed that $H$ has only polynomially
  many hyperedges, the bipartite graph $B_i$ also has at most
  polynomially many nodes and we can therefore efficiently simulate
  graph $B_i$ on the network graph $H$. This assigns colors red and
  blue to the nodes in $V_i$ such that every node $u\in U_i$ has at
  least
  $\lfloor\lambda d_i(u)\rfloor=\lfloor d_i(u)/\delta\rfloor \geq 1$
  neighbors of each color. Let $V_{i,R}$ be the set of red nodes. We
  define $H_{i+1}$ to be the sub-hypergraph of $H_i''$ which is
  induced by only the red nodes $V_{i,R}$. That is, for each hyperedge
  of $e$ of $H_i''$, the hypergraph $H_{i+1}$ contains a hyperedge
  consisting of the nodes $e\cap V_{i,R}$. Because each node in the
  bipartite graph $B_i$ has at least one red neighbor, these
  hyperedges are non-empty and therefore a conflict-free
  multicoloring of $H_{i+1}$ directly implies a conflict-free
  multicoloring of $H_i''$ (by potentially adding one additional
  color to the blue nodes to make sure that every node has at least
  one color). Because in $B_i$, every node $u\in U_i$ has at least
  $\lfloor\lambda d_i(u)\rfloor$ blue neighbors, the maximum hyperedge
  size of $H_{i+1}$ is upper bounded by
  \[
    R_{i+1}  \leq  R_i -\lfloor \lambda R_i\rfloor 
    = R_i -\left\lfloor\frac{R_i}{\delta}\right\rfloor
    \stackrel{(R_i>\delta)}{\leq}
    \left(1-\frac{1}{2\delta}\right)\cdot R_i.
  \]
  This concludes the proof. 
\end{proof}

\subsection{Local Splitting is in $\polyseqloc$}

We next present a deterministic algorithm in the \SLOCAL model with
locality $\poly\log n$ that solves the $\lambda$-local splitting
problem on bipartite graphs $B=(U\dot{\cup} V, E)$, where the minimum
degree of nodes in $U$ is $\Omega(\log^2 n)$. While the problem is
shown to be $\polyseqloc$-hard even for $\lambda=1/\poly\log n$ in
\Cref{lemma:localsplithardness}, our \SLOCAL algorithm achieves a much
better split and even works for values of $\lambda$ which are close to
$1/2$. Our algorithm also directly shows that for the given graphs,
the weak local splitting problem is in \SLOCAL.

\begin{lemma}\label{lemma:seqlocalsplit}
  Let $c$ and $d$ be sufficiently large positive constants.  Then, for
  the family of $n$-node bipartite graphs $B=(U\dot{\cup}V,E$ where
  every node in $U$ has degree at least $\delta\geq c\ln^2n$, the
  $\lambda$-local splitting problem is in \polyseqloc\ for any
  $\lambda\leq 1/2 - d\cdot\frac{\ln n}{\sqrt{\delta}}$.
\end{lemma}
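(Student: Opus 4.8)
The plan is to turn the single global splitting requirement into many \emph{independent local balancing tasks}---one per cluster of a network decomposition---and to solve each of those by the method of conditional expectations, which is sequential in nature and thus native to the \SLOCAL model. Since a valid $\lambda$-local splitting is also a valid $\lambda'$-local splitting for every $\lambda'\le\lambda$, it suffices to treat the largest admissible value $\lambda=\frac12-d\cdot\frac{\ln n}{\sqrt{\delta}}$; the hypothesis $\delta\ge c\ln^2 n$ with $c$ large is exactly what guarantees $\lambda\ge 0$, so the target problem is well defined. First I would form the conflict graph $G=(V,E_G)$ of $B$, where $v\sim v'$ iff $v$ and $v'$ share a neighbor in $U$; $G$ has at most $n$ vertices and is $2$-simulatable on $B$, so with a factor-$2$ loss in locality I can run on $B$ the \SLOCAL algorithm of \Cref{lemma:networkDecomp} (the membership direction of \Cref{thm:decomposition}) and obtain, with locality $\poly\log n$, an $(O(\log n),O(\log n))$-decomposition of $G$: clusters of diameter $O(\log n)$ together with a proper coloring of the cluster graph by $c(n)=O(\log n)$ colors. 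The crucial structural point is that every $N_B(u)$, $u\in U$, is a clique of $G$, so any two clusters meeting $N_B(u)$ are adjacent in the cluster graph and hence colored differently; therefore $N_B(u)$ is spread over at most $c(n)=O(\log n)$ clusters.

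In a second phase---merged back into a single phase via \Cref{lemma:phaseReduction}---each $v\in V$ reads its cluster $\calC$ and, one further hop, the $U$-neighborhoods of the members of $\calC$; this needs locality $O(\log n)$ on $B$. It then computes, by a deterministic rule that depends only on $\calC$, those $U$-neighborhoods, and the identifiers, a red/blue coloring of $V\cap\calC$, and outputs its own color. The rule is the method of conditional expectations for the potential $\Phi=\sum_{u}\Pr[\mathrm{bad}_u]$, where the sum ranges over $u\in U$ with $s_u:=|N_B(u)\cap\calC|\ge 1$ and $\mathrm{bad}_u$ is the event that a uniformly random $2$-coloring of $N_B(u)\cap\calC$ deviates from an even split by more than $\sqrt{C\,s_u\ln n}$ (for a small absolute constant $C$). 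Whenever that threshold is meaningful, $\Pr[\mathrm{bad}_u]\le 2n^{-2C}$ by a Chernoff bound, and it is $0$ otherwise; as there are at most $|U|\le n$ terms, $\Phi<1$ initially. Fixing the colors of the nodes of $V\cap\calC$ in identifier order while never increasing $\Phi$ yields a coloring with $\Phi<1$, i.e.\ one satisfying
\[
\Bigl|\,\#\mathrm{red}\bigl(N_B(u)\cap\calC\bigr)-\frac{s_u}{2}\,\Bigr|\ \le\ \sqrt{C\,s_u\ln n}\qquad\text{for every $u$ meeting }\calC .
\]
Each $v\in V$ lies in exactly one cluster and the rule is order-oblivious, so this is a consistent, legal \SLOCAL computation (using that \SLOCAL nodes may evaluate the relevant binomial tails exactly).

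It remains to sum up. Fix $u\in U$; let $\calC_1,\dots,\calC_k$ be the clusters meeting $N_B(u)$, so $k\le c(n)=O(\log n)$, and write $s_j=|N_B(u)\cap\calC_j|$ with $\sum_j s_j=d(u)\ge\delta$. Adding the per-cluster bounds and using concavity of the square root,
\[
\Bigl|\,\#\mathrm{red}\bigl(N_B(u)\bigr)-\frac{d(u)}{2}\,\Bigr|\ \le\ \sum_{j=1}^{k}\sqrt{C\,s_j\ln n}\ \le\ \sqrt{C\ln n}\cdot\sqrt{k\,d(u)}\ =\ O\!\bigl(\sqrt{d(u)}\,\ln n\bigr).
\]
On the other hand $\frac12-\lambda\ge d\ln n/\sqrt{\delta}$ and $d(u)\ge\delta$ give $\bigl(\frac12-\lambda\bigr)d(u)\ge d\ln n\sqrt{d(u)}$, so for $d$ a sufficiently large constant the deviation above is at most $\bigl(\frac12-\lambda\bigr)d(u)$. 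Hence $\#\mathrm{red}(N_B(u))\in\bigl[\lambda d(u),(1-\lambda)d(u)\bigr]$, which gives both $\#\mathrm{red}(N_B(u))\ge\lfloor\lambda d(u)\rfloor$ and $\#\mathrm{blue}(N_B(u))=d(u)-\#\mathrm{red}(N_B(u))\ge\lfloor\lambda d(u)\rfloor$, i.e.\ a valid $\lambda$-local splitting (cf.\ \Cref{def:split}). The overall locality is $\poly\log n$ for the decomposition plus $O(\log n)$ for the coloring, still $\poly\log n$ after \Cref{lemma:phaseReduction}. I expect the real work to be the derandomization inside a cluster---running conditional expectations as a genuine order-oblivious \SLOCAL step---together with the accounting that the weak per-cluster accuracy $\sqrt{s_u\ln n}$ piles up, over the $O(\log n)$ clusters that can touch a fixed $u$, to only $O(\sqrt{d(u)}\,\ln n)$, a slack that $\lambda\le\frac12-d\ln n/\sqrt{\delta}$ can afford.
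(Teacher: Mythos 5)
Your proof is correct and follows essentially the same route as the paper's: form the conflict graph on $V$, compute an $(O(\log n),O(\log n))$-decomposition of it (so each $N_B(u)$ meets at most $O(\log n)$ clusters), color each cluster internally so that every $u$ sees an $O(\sqrt{s_u\ln n})$ imbalance within that cluster, and sum over the $O(\log n)$ clusters via Cauchy--Schwarz. The only cosmetic difference is that you spell out conditional expectations as the concrete in-cluster derandomizer, whereas the paper simply appeals to the probabilistic method plus the \SLOCAL model's unbounded local computation to assert such a coloring can be found; the asymptotics and structure are identical.
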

\begin{proof}
  We saw in \Cref{lemma:phaseReduction} that we can transfer any
  deterministic $k$-phase \SLOCAL algorithm into a deterministic
  single-phase \SLOCAL algorithm, while incurring only a $k\log^2 n$
  factor increase in the complexity. Leveraging this point, here we
  provide a $2$-phase algorithm $\calA$ where each phase has
  locality no more than $O(\poly\log n)$.

  Let us assume that we are given a bipartite graph
  $B=(U\dot{\cup}V, E_B)$, where every node $u$ in $U$ has degree
  $d(u)\geq \delta\geq c\ln^2n$. Based on graph $B$, we define a graph
  $G=(V,E_G)$ which contains a node for each ``right-side'' node $v\in
  V$ of $B$. Two nodes $\set{v,v'}\in V$ are connected by an edge in
  $G$ if and only if $v$ and $v'$ have a common neighbor in $U$ in
  graph $B$.

  In the first phase of Algorithms $\calA$, we compute a
  $(O(\log n),O(\log n))$-decomposition of the graph $G$. Such a
  decomposition can be computed with $\poly\log n$ locality by
  \Cref{lemma:networkDecomp}. Recall that this partitions the nodes $V$ into
  clusters which are colored with $O(\log n)$ colors. Because two
  clusters with the same color cannot be neighbors, for every cluster
  color, every node $u\in U$ on the ``left side'' of the bipartite graph
  $B$ can only have neighbors from one cluster per color. For every node $u\in
  U$, the set of neighbors $N(u)$ therefore belong to at most $O(\log
  n)$ different clusters.

  For each cluster $\calC\subseteq V$, we now consider the induced
  bipartite graph $B_\calC=(U_{\calC}\dot{\cup}V_\calC,E_{\calC})$ consisting of
  all nodes $V_\calC=\calC$ and all nodes $U_{\calC}$ in $U$ which
  have at least one neighbor in $\calC$. In an internal computation
  within each cluster, the nodes of all clusters are colored
  independently. Note that such a computation within a cluster can be
  done with locality $O(\log n)$ as each cluster has diameter
  $O(\log n)$.

  To see how each cluster is colored, we consider the properties of a
  random coloring of a cluster $\calC$, where each node in $\calC$ is
  independently colored red or blue with probability $1/2$. Let
  $u\in U_{\calC}$ be a node of $B_{\calC}$ and let $N_{\calC}(u)$ be
  the neighbors of $u$ in $B_{\calC}$. By applying a standard Chernoff
  bound, with probability $1-1/(2n)$, the absolute difference between
  the number of red and blue nodes in $N_{\calC}(u)$ can be upper
  bounded by a term of order
  $O(\sqrt{|N_{\calC}(u)|\log n} + \log n)$. A union bound over all
  nodes in $U_{\calC}$ implies that there exists a red/blue coloring
  of the nodes in $\calC$ such that for all $u\in U_{\calC}$, the
  absolute difference in the number of red and blue nodes among the
  nodes in $N_{\calC}(u)$ is at most
  $\alpha(\sqrt{|N_{\calC}(u)|\log n}+ \log n)$ for some constant
  $\alpha>0$. In one \SLOCAL-phase with locality $O(\log n)$, such a
  red/blue-coloring can be computed for every cluster.

  Recall that for each node $u\in U$, the neighborhood $N(u)$ is
  partitioned among at most $O(\log n)$ different clusters. Assume
  that the set $N(u)$ is partitioned among $k_u$ clusters and that it
  is partitioned into sets of sizes $n_{u,1},\dots,n_{u,k_u}$. By
  combining the red/blue-colorings of all the clusters, we therefore
  obtain a red/blue-coloring of the whole set $V$ such that for every
  node $u\in U$, the absolute difference between the number of red and
  blue nodes in $N(u)$ is upper bounded by
  \begin{eqnarray*}
    \alpha\cdot\sum_{i=1}^{k_u} \big(\sqrt{n_{u,i}\ln n} + \ln n\big)
    & \leq &
             \alpha\cdot\left(\sqrt{k_u\sum_{i=1}^{k_u}n_{u,i}\ln n} +
             k_u\ln n\right)\\
    & = & \alpha\cdot\left(\sqrt{k_u |N(u)|\ln n} + k_u\ln n\right).
  \end{eqnarray*}
  The inequality in the first line follows because for any integer $k\geq 1$ and any
  $x_1,\dots,x_k\geq 0$, it holds that
  $\sum_{i=1}^{k}\sqrt{x_i}\leq \sqrt{k}\sqrt{\sum_i^kx_i}$, which
  follows from the Cauchy-Schwarz inequality. The claim of the lemma
  now follows by using that $k_u=O(\log n)$ and by setting the constants
  $c$ and $d$ in the lemma statement large enough.  
\end{proof}


\section{Approximating Covering and Packing Integer Linear Programs}
\label{sec:ILP}

In this section, we explain \SLOCAL algorithms with complexity $O(\poly(\log n/\eps))$ for computing $(1+\eps)$ approximations of covering and packing Integer Linear Programs (ILP). In conjunction with \Cref{thm:randomizedequality}, this implies that the same approximation can be achieved using randomized \LOCAL algorithms with complexity $O(\poly(\log n/\eps))$. Furthermore, if one can deterministically solve one of the problems shown to be \polyseqloc-complete in the previous sections---for instance, local splitting, hypergraph conflict-free multi-coloring, or network decomposition---in $\poly\log n$ rounds of the \LOCAL model, then we would get $O(\poly(\log n/\eps))$ round deterministic algorithms in the \LOCAL model for $(1+\eps)$ approximation of covering and packing ILPs.\\

The formulation of covering and packing ILPs, which are duals of each other, is as follows:

\begin{figure*}[h!]
\centering
\begin{minipage}{0.43\textwidth}
\begin{mdframed}[hidealllines=true, backgroundcolor=gray!20]
\vspace{-0.5cm}
\begin{eqnarray*}
\textit{\textbf{Covering ILP:}}\\
\min &\mathbf{c}^{T} \mathbf{x}\\ 
\textit{subject to } &A\mathbf{x}\geq \mathbf{b} \\ 
&\mathbf{x}\in \mathbb{N}_{\geq 0}^{n_1} \\
&A\geq 0, \mathbf{c}\geq 0, \mathbf{b}\geq 0 
\end{eqnarray*}
\end{mdframed}
\end{minipage}
\hspace{1cm}
\begin{minipage}{0.43\textwidth}
\begin{mdframed}[hidealllines=true, backgroundcolor=gray!20]
\vspace{-0.5cm}
\begin{eqnarray*}
\textit{\textbf{Packing ILP:}}\\
\max &\mathbf{b}^{T}\mathbf{y}\\ 
\textit{subject to } &A^{T}\mathbf{y}\leq \mathbf{c} \\ 
&\mathbf{y}\in \mathbb{N}_{\geq 0}^{n_2} \\
&A\geq 0, \mathbf{c}\geq 0, \mathbf{b}\geq 0 
\end{eqnarray*}
\end{mdframed}
\end{minipage}
\end{figure*}

\noindent We imagine these LPs are represented via bipartite graphs $H=(V, E)$, where $V=L\cup R$ and $E\subseteq L\times R$ as in \cite{papa93,bartal97,nearsighted}. There is one vertex $\ell\in L$, $|L|=n_1$, representing each variable and one vertex $r\in R$, $|R|=n_2$, representing each linear constraint. The edges of the bipartite graph are such that each variable vertex related to $x_i$ (or $y_j$) is connected to all linear constraint vertices that have a non-zero coefficient for $x_i$ (respectively $y_j$). Various classic optimization problems can be easily viewed in this framework, with no more than an $O(1)$ factor loss in the locality. This includes covering ILPs such as minimum dominating set, set cover, and vertex cover and packing ILPs such as maximum independent set and maximum matching. For instance, for maximum independent set in a graph $G=(V, E)$, we have one variable vertex per each node of $G$, and one constraint vertex per each edge $e=(v, u)\in E$, which can be simulated by one of its endpoints, say the one with the larger ID.

In the following, we provide simple deterministic \SLOCAL\ algorithms with locality $O(\poly(\frac{\log n}{\eps}))$ for covering and packing ILPs. For simplicity, instead of presenting the algorithms in the general framework, we explain the algorithms for two concrete sample problems, maximum independent set and minimum dominating set. It is easy to see how these algorithms can be extended to the related general cases of packing and covering ILPs, respectively. The resulting time complexity will be polylogarithmic in $n$, $1/\eps$, and in the ratio between the largest and smallest weight and coefficient.

\subsection{Sample Packing Problem: Approximating Maximum Independent Set}
\label{sec:toolAlgorithmDesign}
\begin{theorem}
There is a deterministic algorithm with complexity $O(\poly(\log n/\eps))$ in the \SLOCAL\ model that computes a $(1+\eps)$-approximation of the maximum independent set.
\end{theorem}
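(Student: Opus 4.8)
I would prove this by exhibiting an \SLOCAL\ algorithm of locality $r=\poly(\log n/\eps)$ that processes the vertices in the given order and, upon reaching a vertex $v$ not yet assigned to a cluster, carves out a low-diameter cluster $C_v\ni v$, marks its vertices as processed, and --- using that local computation is unbounded in the \SLOCAL\ model --- adds a \emph{maximum} independent set of the induced subgraph $G[C_v]$ (computed by brute force) to the output; a vertex that, when later considered, is adjacent across a cluster boundary to an already-chosen vertex simply stays out of the set. The clusters are vertex-disjoint and of diameter $O(r)$, so the locality is $\poly(\log n/\eps)$ and the output is a valid independent set. For the quality guarantee, restricting an optimum $\mathrm{OPT}$ to a single $C_v$ is a feasible independent set of $G[C_v]$, hence $\sum_v\alpha(G[C_v])\ge|\mathrm{OPT}|$; and the boundary cleanup deletes at most one vertex per edge running between two clusters, so the final set has size at least $\sum_v\alpha(G[C_v])-(\#\text{ inter-cluster edges})\ge |\mathrm{OPT}|-(\#\text{ inter-cluster edges})$. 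It therefore suffices to carve the clusters so that the total number of inter-cluster edges is at most $\eps\cdot|\mathrm{OPT}|$ while keeping every cluster radius in $\poly(\log n/\eps)$.

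The clusters are grown by a region-growing procedure around $v$: enlarge the ball $B_\rho(v)$ one hop at a time and stop at the first radius $\rho$ at which the number of edges leaving $B_\rho(v)$ is at most an $\eps$-fraction of the quantity the cluster is ``credited'' with above, namely $\alpha(G[B_\rho(v)])$ (for the general packing/covering ILP reduction, one instead credits the fractional optimum restricted to $B_\rho(v)$, which is where ``polynomially bounded weights'' enters: it keeps the relevant fractional mass in any ball between $1/\poly(n)$ and $\poly(n)$). Two observations should drive termination within $\poly(\log n/\eps)$ hops. First, a ball of large diameter already has a large independence number, since $\mathrm{diam}(H)\le 2\alpha(H)$ for every graph $H$ (take a shortest path realizing the diameter and keep every other vertex); so the credited quantity cannot remain small for long. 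Second, in the regime where many edges keep crossing the frontier, the shells just outside consecutive balls are pairwise at distance $\ge 2$ and hence contribute independently to the independence number of slightly larger balls; a heavy frontier forces these shells to be substantial --- handling high-back-degree frontier vertices via a $\Delta$-based Tur\'an/Caro--Wei bound, and, when $\Delta$ is itself large, by noting the ball is then dense, of small diameter, and already covered by the first observation. Combining the two, the credited quantity grows by a constant factor every $O(\log n/\eps)$ hops, and since it is at most $\poly(n)$ the procedure stops after $\poly(\log n/\eps)$ hops. Summing the stopping condition over all clusters bounds the total number of inter-cluster edges by $\eps\sum_v\alpha(G[C_v])$; since the computed independent set (of size $\sum_v\alpha(G[C_v])$ minus the cleanup deletions) can be no larger than $|\mathrm{OPT}|$, we get $\sum_v\alpha(G[C_v])\le|\mathrm{OPT}|/(1-\eps)$, so the total loss is $O(\eps)\cdot|\mathrm{OPT}|$; rescaling $\eps$ finishes the proof. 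The same template then handles arbitrary packing ILPs verbatim (solve each cluster to integral optimality and shave off conflicting units on the $\le\eps\cdot\mathrm{OPT}$ worth of constraints crossing boundaries) and, dually, covering ILPs such as minimum dominating set, where one instead over-satisfies the few boundary constraints from within each cluster.

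The main obstacle is precisely the termination-and-loss analysis of the region-growing step: one has to argue rigorously that a persistently heavy frontier forces the credited quantity (independence number, or fractional optimum) to grow geometrically over a bounded number of hops, carefully juggling the frontier's edge count, its vertex count, the maximum degree, and the $\mathrm{diam}\le 2\alpha$ bound, and in particular dealing with both extremes --- very dense, low-diameter regions on one hand and high-degree frontier vertices on the other. Everything else (the \SLOCAL\ implementation and locality accounting, disjointness of clusters, feasibility of the output, and the arithmetic of the value loss) is routine.
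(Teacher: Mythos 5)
Your proposal takes a genuinely different route from the paper, and the step you yourself flag as the ``main obstacle'' is in fact a gap: the claimed fact that a persistently heavy frontier forces $\alpha(G[B_\rho(v)])$ to grow \emph{geometrically} in $\rho$ is false. Consider the ``path of bicliques'' $V=V_1\cup\cdots\cup V_m$ with each $|V_i|=k$, each $V_i$ independent, and $V_i$ joined completely to $V_{i+1}$ (so $n=mk$). Starting from $v\in V_1$, the ball $B_{2j}(v)$ has $\alpha=(j+1)k$ (take every other layer) while the number of edges leaving $B_\rho(v)$ stays $\Theta(k^2)$ for all $\rho<m$. Your stopping rule ``edges leaving $\le\eps\cdot\alpha$'' is therefore only reached at $\rho=\Theta(k/\eps)$, which for $k=m=\sqrt n$ is $\Theta(\sqrt n)$ --- not $\poly(\log n/\eps)$. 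Here the shells are themselves independent sets of size $k$, max degree is $2k=\Theta(\sqrt n)$, and the ball is not dense, so neither the Caro--Wei branch nor the ``dense/small-diameter'' branch of your case analysis kicks in; $\alpha$ simply grows linearly, not geometrically, while the cut stays heavy.

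The paper's proof avoids this entirely by using a different stopping criterion and a different charging scheme: grow the ball until $\alpha(G[B_{r+1}(v)])\le(1+\eps)\cdot\alpha(G[B_r(v)])$. This is a ratio test on $\alpha$ itself, so termination in $r=O(\log n/\eps)$ hops is immediate (otherwise $\alpha$ would exceed $n$), and no combinatorial argument about frontiers is needed. On the above example this condition already holds at $r=2$ since $\alpha(B_3)=\alpha(B_2)=2k$. The paper then adds a maximum IS of $G[B_r(v)]$ but removes all of $B_{r+1}(v)$, turning the shell $B_{r+1}\setminus B_r$ into a buffer so clusters are never adjacent and no boundary cleanup is needed. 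Correctness is a direct per-cluster charge: the vertices of $I^*$ removed at step $i$ all lie in $B_{r+1}(v_i)$, so their number is at most $\alpha(B_{r+1}(v_i))\le(1+\eps)\alpha(B_r(v_i))$, while the algorithm contributes $\alpha(B_r(v_i))$; summing gives the $(1+\eps)$ guarantee without any global edge accounting. If you want to salvage your route, the cleanest fix is to replace your edge-count stopping rule with the paper's $\alpha$-ratio rule and drop the cross-cluster-edge bookkeeping in favor of the buffer shell.
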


\begin{proof}We use a simple ball growing argument. Suppose that $v_1$, $v_2$, \dots, $v_n$ is the ordering of the vertices provided to the \SLOCAL algorithm. 

Let $\alpha(H)$ denote the independence number of graph $H$, i.e., its maximum independent set size. We begin with an empty global independent set. We start with some node $v_1$ and find a radius $r$ such that $\alpha(G[B_{r+1}(v)]) \leq (1+\epsilon)\cdot\alpha(G[B_{r}(v)])$. Notice that $r\leq R=\bigO(\log n/\eps)$. Compute a maximum independent set of $B_{r}(v)$, add it to the global independent set, and remove $B_{r+1}(v)$ from the graph. This clearly has locality $O(\log n/\epsilon)$. Furthermore, it provides a $(1+\epsilon)$ approximation of the maximum independent set. The reason is as follows: we can decompose the optimal maximum independent set $I^*$ into $n$ (potentially empty) subsets $I_1$, \dots, $I_n$, each being the vertices of $I^*$ which are removed when processing node $v_i$. Then, the computed independent set when processing $v_i$ has size at least $|I_i|/(1+\eps)$. Thus, overall, the computed independent set has size at least $|I^*|/(1+\eps)$.
\end{proof}
\begin{corollary}There is a randomized algorithm with complexity $O(\poly(\log n/\eps))$ in the \LOCAL\ model that computes a $(1+\eps)$-approximation of the maximum independent set, with high probability.
\end{corollary}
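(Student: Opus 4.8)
The plan is to obtain the corollary directly by combining the preceding theorem with \Cref{thm:randomizedequality}. The theorem provides a deterministic \SLOCAL\ algorithm $\calA$ computing a $(1+\eps)$-approximate maximum independent set with locality $O(\poly(\log n/\eps))$. Note first that ``output an independent set whose size is at least $\alpha(G)/(1+\eps)$'' is a legitimate distributed graph problem in the sense of \Cref{def:graphproblem}: whether an output vector is admissible depends only on the topology of $G$, and the framework places no restriction on how ``global'' such a correctness guarantee may be. In particular, when $1/\eps=\poly\log n$, this problem lies in $\polyseqloc$.

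Next I would invoke the fact noted right after \Cref{thm:randomizedequality}, namely $\polyseqloc\subseteq\polyrandloc{1/\poly(n)}$; it follows from the trivial inclusion $\polyseqloc\subseteq\polyrandseqloc{0}$ (a deterministic algorithm is a zero-error randomized one) together with $\polyrandseqloc{0}\subseteq\polyrandloc{1/n^c}$, which is \Cref{thm:randomizedequality} applied with $\eps(n)=0$. This yields a randomized \LOCAL\ algorithm of complexity $\poly\log n$ which, with probability at least $1-1/n^c$ for any desired constant $c>0$, outputs an independent set of size at least $\alpha(G)/(1+\eps)$; an error probability of $1/n^c$ is exactly what ``with high probability'' means, which is the statement of the corollary. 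For values of $\eps$ not of the form $1/\poly\log n$ one need not pass through the named class at all: the simulation behind \Cref{thm:randomizedequality} --- compute an $(O(\log n),O(\log n))$-decomposition via the randomized $O(\log^2 n)$-round algorithm, then run $\calA$ along the induced low-diameter ordering exactly as in the proof of \Cref{lemma:orderingHard} --- turns an \SLOCAL\ algorithm of locality $r$ into a \LOCAL\ algorithm of complexity $O(r\cdot\poly\log n)$, and for $r=O(\poly(\log n/\eps))$ this is again $O(\poly(\log n/\eps))$.

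There is essentially no real obstacle here; the only point worth a word is the bookkeeping of where randomness enters. In the simulation, the execution of $\calA$ itself is deterministic and error-free, so the sole source of error is the failure of the network-decomposition subroutine, whose failure probability is $1/n^c$; conditioned on a successful decomposition, the output is a genuine $(1+\eps)$-approximation with certainty. This is already accounted for in the statement of \Cref{thm:randomizedequality}, so citing that theorem together with the preceding one suffices, and no separate probabilistic analysis is needed.
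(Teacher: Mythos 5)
Your proof is correct and takes exactly the paper's intended route: the corollary is stated without a separate proof block precisely because it is meant to follow from the preceding \SLOCAL\ theorem combined with \Cref{thm:randomizedequality}, as the paper also explains at the start of \Cref{sec:ILP}. Your extra care about the two minor points --- that the global approximation guarantee is a legitimate distributed graph problem under \Cref{def:graphproblem}, and that for $\eps$ outside the range $1/\poly\log n$ one should invoke the decomposition-based simulation directly rather than the named class --- is sound and, if anything, slightly more careful than the paper's one-line remark.
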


We remark that, to the best of our knowledge, this is the first algorithm providing this high probability approximation for maximum independent set. Prior to our work, it was known how to randomly compute an independent set whose size is \emph{in expectation} a $(1+\epsilon)$ approximation of maximum independent set\cite{podc16_BA}. However, we are not aware of a method for transforming that algorithm to a high probability approximation guarantee, and indeed, due to the nature of the \LOCAL model, such a transformation does not seem feasible, or at least is not straightforward.

\subsection{Sample Covering Problem: Approximating Minimum Dominating Set}
\label{sec:toolAlgorithmDesign2}
\begin{theorem}
There is a deterministic algorithm with complexity $O(\poly(\log n/\eps))$ in the \SLOCAL\ model that computes a $(1+\eps)$-approximation of the minimum dominating set.
\end{theorem}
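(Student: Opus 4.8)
The plan is to imitate the ball-growing argument used just above for maximum independent set, but with an extra layer of slack to compensate for the fact that, unlike independence, domination is not a hereditary property. Let $\gamma(H)$ denote the minimum dominating set size of a graph $H$, and let $v_1,\dots,v_n$ be the order provided to the \SLOCAL\ algorithm. We maintain a partial dominating set $X$ (initially empty) and call a vertex \emph{covered} once it is dominated by some vertex already in $X$. When $v_i$ is processed, if it is already covered we do nothing; otherwise we grow a ball around $v_i$ \emph{inside the subgraph induced by the currently uncovered vertices}: writing $\hat B_s(v_i)$ for the set of vertices reachable from $v_i$ along a path of length $\le s$ that uses only uncovered vertices, we take the smallest radius $r$ with $\gamma\big(G[\hat B_r(v_i)]\big)\le(1+\eps)\,\gamma\big(G[\hat B_{r-2}(v_i)]\big)$. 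Since $\gamma(G[\hat B_s(v_i)])$ is non-decreasing in $s$, at least $1$, and at most $n$, such an $r=\bigO(\log n/\eps)$ always exists (otherwise the domination number would grow by a factor $(1+\eps)$ every two steps for more than $\log_{1+\eps}n$ steps). Node $v_i$ then adds to $X$ a minimum-size vertex set of $G$ that dominates $\hat B_{r-1}(v_i)$ and marks $\hat B_{r-1}(v_i)$, together with everything this set dominates, as covered. Every such step only inspects an $\bigO(\log n/\eps)$-neighborhood and performs unbounded local computation, so the locality is $\bigO(\log n/\eps)$; if convenient, the ball-growing and the covering can be run as two phases and merged via \Cref{lemma:phaseReduction}.

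Correctness is immediate: whenever $v_i$ is processed it is either already covered or becomes covered (as $v_i\in\hat B_{r-1}(v_i)$), so at the end $X$ dominates all of $V$. For the approximation guarantee, fix an optimal dominating set $D^{*}$. The structural point is that the balls $\hat B_{r_i-1}(v_i)$ that get marked covered over the acting steps $i$ are pairwise disjoint: once such a ball is covered, every later acting center, and the entire uncovered ball it grows, avoids it. Hence the sets $D^{*}\cap\hat B_{r_i-1}(v_i)$ form a partition of a subset of $D^{*}$, and it suffices to show that the set $C_i$ added at an acting step has $|C_i|\le(1+\eps)\,|D^{*}\cap\hat B_{r_i-1}(v_i)|$; summing over acting steps then yields $|X|\le(1+\eps)|D^{*}|$ (up to re-scaling $\eps$ by a constant). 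By the choice of $r_i$ we have $|C_i|=\gamma(G[\hat B_{r_i-1}(v_i)])\le\gamma(G[\hat B_{r_i}(v_i)])\le(1+\eps)\,\gamma(G[\hat B_{r_i-2}(v_i)])$, so everything reduces to the single inequality $\gamma(G[\hat B_{r_i-2}(v_i)])\le|D^{*}\cap\hat B_{r_i-1}(v_i)|$.

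The main obstacle is exactly this last inequality. The natural witness is $D^{*}\cap\hat B_{r_i-1}(v_i)$: for an uncovered vertex $w$ within distance $r_i-2$ of $v_i$, a $D^{*}$-dominator of $w$ is either $w$ itself or a neighbor of $w$, and if that neighbor is still uncovered it lies in $\hat B_{r_i-1}(v_i)$. The difficulty is a boundary effect: $w$ might be dominated in $D^{*}$ only through vertices that have already been covered (and hence are absent from the uncovered induced subgraph), so $D^{*}\cap\hat B_{r_i-1}(v_i)$ need not dominate $w$ inside $G[\hat B_{r_i-1}(v_i)]$. Handling this is the heart of the proof: one argues that such vertices $w$ can be charged instead against the $D^{*}$-mass of the adjacent, already-processed region — using that the set chosen in that earlier region is itself dominating, and that the two-layer slack in the ball-growing leaves room for a second $(1+\eps)$ factor so that the overall loss is only $(1+O(\eps))$. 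Once the charging is made rigorous, the same scheme transfers to general covering integer linear programs with polynomially bounded coefficients, with ``minimum dominating set of a ball'' replaced by ``optimal integral solution of the sub-program induced by a ball'' and ``covered'' replaced by ``constraint satisfied''.
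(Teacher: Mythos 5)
Your proposal follows the paper's high-level strategy (a sequential ball-growing scheme with a $(1+\eps)$-geometric stopping rule over $O(\log n/\eps)$ scales), but it deviates in the one structural choice that makes the paper's charging argument go through, and you yourself flag the resulting step as unresolved. That step is a genuine gap, not a detail.

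The paper does not grow the ball inside the uncovered induced subgraph and does not reason about $\gamma$ of induced subgraphs. It works with the quantity $g(v,r)$, defined as the size of the smallest set of vertices \emph{in the one-larger ball} $B_{r+1}(v)$ that dominates $B_{r}(v)$, and it inserts a two-layer buffer between the ``central ball'' and the removed region: it finds $r$ with $g(v,r+2)\le(1+\eps)\,g(v,r)$, adds a smallest set inside $B_{r+3}(v)$ dominating $B_{r+2}(v)$, and removes $B_{r+2}(v)$, while calling only $B_{r}(v)$ the central ball $V_i$. This buffer buys exactly the two things your proof needs and does not have. First, the central balls of different acting steps are at $G$-distance at least $3$, so no single vertex of $D^{*}$ can dominate vertices of two different $V_i$, and the sets $D_i:=\{d\in D^{*}: d$ dominates some vertex of $V_i\}$ are pairwise disjoint. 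Second, every $d\in D_i$ lies within $G$-distance $1$ of $V_i$, hence inside $B_{r_i+1}(v_i)$, and $D_i$ dominates $V_i$; since $g(v_i,r_i)$ is allowed to pick its dominators from $B_{r_i+1}(v_i)$ rather than only from $V_i$, the set $D_i$ is a legal witness and $g(v_i,r_i)\le|D_i|$ holds with no boundary trouble at all. Combining, $|C_i|=g(v_i,r_i+2)\le(1+\eps)g(v_i,r_i)\le(1+\eps)|D_i|$ and summing gives $(1+\eps)|D^{*}|$. (Vertices that land only in a buffer $B_{r_i+2}(v_i)\setminus B_{r_i}(v_i)$ are covered but never charged against $D^{*}$; the $(1+\eps)$ slack in $g$ is precisely what pays for dominating them.)

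In your version the witness $D^{*}\cap\hat B_{r_i-1}(v_i)$ must dominate $\hat B_{r_i-2}(v_i)$ inside an induced subgraph on uncovered vertices, and as you note it need not: a vertex near the boundary may be dominated in $D^{*}$ only through already-covered vertices. This is not a cosmetic obstacle. Your balls $\hat B_{s}(v_i)$ are defined via paths through uncovered vertices, so disjointness of the balls is only disjointness, not $G$-distance $\ge 3$; the optimal dominators of a ball can sit in the covered strip right outside it, and they can simultaneously be the optimal dominators of the \emph{previous} strip, so the charging you sketch (``charge such $w$ against the $D^{*}$-mass of the adjacent, already-processed region'') double-counts unless you build in separation. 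Replacing your two-step ratio test on $\gamma(G[\hat B_{r}])/\gamma(G[\hat B_{r-2}])$ by the paper's test on $g$ would not by itself fix this, because the definition of $\hat B_{s}$ via the uncovered subgraph still only yields disjointness (not distance $\geq 3$) between the regions you charge; the paper sidesteps this by growing and removing a larger ball measured in the original graph $G$ and charging only against the well-separated inner ball. So the missing ingredient is the explicit interior/exterior buffer plus defining $g$ over a strictly larger ball than the one it must dominate; with those two changes your argument collapses to the paper's, but as written the key inequality $\gamma(G[\hat B_{r_i-2}(v_i)])\le|D^{*}\cap\hat B_{r_i-1}(v_i)|$ is simply not established.
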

\begin{proof} Again, we use a simple ball growing argument. Suppose that $v_1$, $v_2$, \dots, $v_n$ is the ordering of the vertices provided to the \SLOCAL algorithm. 

For a node $v$, let $g(v, r)$ be the size of the smallest set of vertices in $B_{r+1}(v)$ that dominates $B_{r}(v)$. We begin with an empty global dominating set. We start with some node $v_1$ and find a radius $r$ such that $g(v, r+2) \leq (1+\epsilon)\cdot g(v,r)$. Notice that $r\leq R=\bigO(\log n/\eps)$. Compute a smallest set in $B_{r+3}(v)$ that dominates $B_{r+2}(v)$, add it to the global dominating set, and remove $B_{r+2}(v)$ from the graph. Call $B_{r}(v)$ the \emph{central ball} of this step. This clearly has locality $O(\log n/\epsilon)$. Furthermore, it provides a $(1+\epsilon)$ approximation of the minimum dominating set. The reason is as follows: construct node sets $V_1$, $V_2$, \dots, $V_n$ and add each vertex $v\in V$ to the subset $V_i$ such that $v$ was in the \emph{central ball} $B_{r}(v_i)$ when processing vertex $v_i$. Notice that some vertices $v$ will be in none of the sets $V_i$. On the other hand, each two sets $V_i$ and $V_j$ have distance at least $3$. Hence, no node can dominate vertices from two or more of these sets. Consider the optimal minimum dominating set $D^*$ and partition it into $n$ disjoint (potentially empty)  subsets $D_1$, \dots, $D_n$, each being the set of vertices of $D^*$ that dominate $V_i$. Then, the computed dominating set when processing $v_i$ has size at most $|D_i|(1+\eps)$. Thus, overall, the computed dominating set has size at most $|D^*|(1+\eps)$.
\end{proof}


\section{On The Power of the Sequential LOCAL Model}\label{appsec:inPSLOCAL}
\label{sec:powerOfSequentialModel}

As mentioned before,  the \SLOCAL model is quite powerful, thanks to the fact that vertices are processed in a sequential order and that each vertex $v$ has a local state $S_v$ to record the information it gathered. Because of this, the model is clearly stronger than the standard \LOCAL model. In fact, a priori, the \SLOCAL model might look too strong to be of any interest: in particular, it can easily solve all the classic problems of interest---e.g., maximal independent set, $(\Delta+1)$-vertex coloring, $(2\Delta-1)$-edge coloring, or maximal matching---with locality just $O(1)$. 

In this section, we show that, perhaps surprisingly, the (randomized) \SLOCAL model is not much more powerful than the randomized \LOCAL\ model, when we are concerned with polylogarithmic locality. Furthermore, as we prove in \Cref{lemma:phaseReduction}, even if we allow the \SLOCAL algorithm to use a polylogarithmic number of phases and process the vertices sequentially for a polylogarithmic number of iterations, the power does not change significantly.

\subsection{Random Sequential vs. Random Distributed Local Algorithms}

\noindent\textbf{\Cref{thm:randomizedequality} (restated).} \emph{$\polyrandseqloc{\eps} \subseteq \polyrandloc{\eps+1/poly(n)}.$}

\begin{proof}
Given a randomized \SLOCAL algorithm $\mathcal{A} \in \polyrandseqloc{\eps}$ with locality $r=\poly\log n$, we explain a randomized \LOCAL algorithm $\mathcal{B}\in \polyrandloc{\eps+1/poly(n)}$ with locality $\poly\log n$ that simulates $\mathcal{A}$. The first step in algorithm $\mathcal{B}$ is to compute an $(\bigO(\log n), \bigO( \log n))$-network decomposition of the graph $G^{r+1}$, using the randomized algorithm of Linial and Saks\cite{linial93} in $O(r\log^2 n)$ time. This network decomposition partitions the vertices of $G$ into clusters $X_1$, $X_2$, \dots, $X_\eta$ such that it satisfies the following two properties with probability at least $1-1/\poly(n)$: 
\begin{itemize}
\item[(1)] any two vertices of each cluster have distance at most $\bigO(r\log n)$ in $G$, and 
\item[(2)] each cluster $X_i$ is assigned a color in a color set $\{1, 2, \dots, Q\}$ for a $Q=\bigO(\log n)$ such that any two clusters of the same color have distance at least $r+1$ in $G$.
\end{itemize}
	
To simulate the \SLOCAL algorithm $\mathcal{A}$, we use this network decomposition to generate an ordering $\pi$ of vertices as in \Cref{observation:lowDiameterNetworkDecomp}
, this will be the order on which we assume $\mathcal{A}$ operates. 

The algorithm $\mathcal{B}$ now uses this order $\pi$ to simulate $\mathcal{A}$. Algorithm  $\mathcal{B}$ works in $Q=\bigO(\log n)$ phases, each taking $\bigO(r\log n)$ rounds. In the $i^{th}$ phase, each vertex $v_{\ell}$ in a cluster $X_j$ with color $i$ first gathers all the information in the $r$-neighborhood of the cluster $X_j$. Then, node $i$ locally simulates the algorithm $\mathcal{A}$ for all the nodes in $X_j$, according to the order $\pi$. For each node $u$ in $X_j$, to determine the output of $u$, the simulation will need to know the state $S_w$ of nodes $w$ which appear before $u$ and are within distance $r$ of $u$. If $w$ has color $i'< i$, this state is written in the local memory of $S_w$ when simulating phase $i'$ and thus $u$ knows it, as it has gathered the information in the $r$-hop neighborhood of $X_i$. If $w$ has color $i$, then node $u$ simulated node $w$ before and thus knows $S_w$. Notice that nodes of different clusters of the same color $i$ can perform this process in parallel as their computations do not influence each other (because of the way $\pi$ is defined). 
\end{proof}

The lemma easily generalizes to show that $\randseqloc{\eps}(t^{\bigO(1)}(n)) \subseteq \randloc{\eps+1/poly(n)}(t^{\bigO(1)}(n))$, for any function $t(n)\geq \log n$. 

\subsection{Multi-Phase versus Single-Phase Sequential Local Algorithms}
We call \SLOCAL\ algorithms as defined in \Cref{sec:SLOCAL} \emph{single-phase} \SLOCAL algorithms because they process each node only once. If we allow an algorithm to run through the nodes $k$ times, we call it a \emph{$k$-phase} \SLOCAL algorithm. We next prove that having multiple phases does not increase the power significantly. In particular, the set of problems which can be solved with polylogarithmic locality in the \SLOCAL\ model does not change if we allow $k$ phases as long as $k$ is polylogarithmic.

\noindent{\Cref{lemma:phaseReduction} (restated).
\emph{Any  $k$-phase \SLOCAL\ algorithm $\mathcal{A}$ with locality $r_i$ in phase $i=1,\ldots,k$ can be transformed into a single-phase \SLOCAL\ algorithm $\mathcal{B}$ with locality $r_1+2\sum_{i=2}^kr_i$.}
\medskip
\begin{proof}
We prove that a $k$-phase algorithm $\mathcal{A}$ with locality $r_i$ in phase $i$ can be transferred into a single phase algorithm $\mathcal{B}$ with locality $R:=\sum_{i=1}^k r_i$ if we assume that node $u$ in algorithm $\mathcal{B}$ can write into the memory of nodes in $B_{R-r_1}(u)$. Then the claim follows with \Cref{observation:memoryWriting}.
\smallskip

We explain how to transform a two phase \SLOCAL\ algorithm $\mathcal{A}'$ with locality $r_1$ in the first phase and $r_2$ in the second phase  into a single phase \SLOCAL\ algorithm $\mathcal{B}'$ with locality $r_1+r_2$. Then the aforementioned transformation of $\mathcal{A}$ into $\mathcal{B}$ can be deduced with an inductive argument.

To construct algorithm $\mathcal{B'}$ we need to see that the output in phase two of node $u$ in algorithm $\mathcal{A}'$ only depends on the output of the first phase of all nodes in $B_{r_2}(u)$ and the output of the second phase of nodes in $B_{r_2}(u)$ that have  been processed in the second phase before $u$. 

\noindent\paragraph{Algorithm {\boldmath$\mathcal{B}'$}:} Assume nodes in $\mathcal{B}'$ are processed according to order $\pi$. Whenever it is $u$'s turn in $\mathcal{B}'$, it collects its neighborhood $B_{r_1+r_2}(u)$, $u$ simulates the first phase of algorithm $\mathcal{A}'$ for all nodes in $B_{r_2}(u)$ and writes the output into the memories of the nodes in $B_{r_2}(u)$. In this simulation $u$ takes into account that some nodes in this ball might already have computed their output because they were handled before $u$ or because some other node wrote their output into their memory. In particular, all nodes which are processed before $u$ in order $\pi$ have already computed their output for phase two. Note that this simulation might use different orders for the two phases of $\mathcal{A}'$.

Then $u$ has all the information to compute its output after two phases, i.e., the phase one output and memory content of nodes in $B_{r_2}(u)$ and the phase two output of nodes in $B_{r_2}(u)$ of the nodes that are ordered before $u$ in $\pi$. 
\end{proof}

\hide{
\medskip
\noindent\textbf{\Cref{lemma:phaseReduction} (restated).}
\emph{Any  $k$-phase \SLOCAL\ algorithm $\mathcal{A}$ with locality $r$ can be transformed into a single-phase \SLOCAL\ algorithm $\mathcal{B}$ with locality $\bigO(kr\log^2 n)$.}
\medskip

\begin{proof}
  We first explain a two-stage algorithm $\mathcal{B}'$ and we then
  show how these two stages can be run together to produce a
  single-phase \SLOCAL algorithm $\mathcal{B}$. 

  \smallskip

  \noindent\textbf{Algorithm {\boldmath$\mathcal{B}'$}:}
  We use the first stage of $\mathcal{B}'$ to compute an
  $(\bigO(\log n), \bigO(\log n))$-decomposition of $G^{r+1}$, using
  the sequential network decomposition algorithm (cf.
  \Cref{alg:APLS}). We will see at the end of the proof that this can 
  be done in a single phase in the \SLOCAL model. This decomposition implies an
  $\bigO(\log n)$-diameter ordering $\pi$ of $G^{r+1}$ via
  \Cref{observation:lowDiameterNetworkDecomp}.

  Assume that in each of the phases of the given $k$-phase \SLOCAL
  algorithm $\calA$, the order $\pi$ is given to $\calA$.
  The second stage of $\mathcal{B}'$ is defined as follows. Consider a
  \LOCAL algorithm with $k$ epochs, each having 
  $\bigO(r\log^2 n)$ rounds. Each epoch is used to simulate one phase
  of $\mathcal{A}$, similarly to what we did in the proof of
  \Cref{lemma:orderingHard}. Altogether, these $k$ epochs define a
  \LOCAL algorithm with locality $L=\bigO(kr\log^2 n)$ that simulates
  $\mathcal{A}$. Clearly, this \LOCAL algorithm can be simulated in
  one phase with locality $\bigO(kr\log^2 n)$ in the \SLOCAL algorithm
  $\mathcal{B}'$. This completes the description of the two-stage
  algorithm $\mathcal{B}'$.

  
  \smallskip
 
  \noindent\paragraph{Algorithm {\boldmath$\mathcal{B}$}:} Now, we explain how
  the two stages of $\mathcal{B}'$ can be performed simultaneously in
  a single-phase \SLOCAL  algorithm $\mathcal{B}$. In particular, we
  need to show that
  we can simultaneously compute an
  $(\bigO(\log n),\bigO(\log n))$-decomposition of $G^{r+1}$ and
  simulate an $L$-round \LOCAL\ algorithm. We define $T:=(r+1)(\log_2
  n+1)$, which is the largest radius of any ball which can be
  considered in the ball growing process of the sequential
  decomposition algorithm. We let $R:=L + \ell T$. We describe
  the algorithm's behavior when processing a node $u$ and assume that
  $u$ can write into the memory of nodes in $B_{R}(u)$ (cf.
  \Cref{observation:memoryWriting}).

  The sequential decomposition algorithm (cf.\ \Cref{alg:APLS})
  consists of $\ell=O(\log n)$
  iterations, where in iteration $i$, block $i$ (i.e., the
  set of clusters of color $i$) is computed. In the original
  algorithm, the iterations are done sequentially, while in our
  implementation of the algorithm in the \SLOCAL model, we have to
  start an iteration $i$ before completely finishing the previous
  iterations (we need to locally finish the complete algorithm
  around a node $u$ when processing node $u$). In the following, we say that a node $u$ has
  completed iteration $i$ if it is either in a cluster of color $\leq
  i$ or if it is adjacent to some already computed cluster of color
  $i$. When growing a ball of color $i$, we just
  have to make sure that the ball growing starts at an uncolord node
  which has completed iteration $i-1$ and that it only visits
  uncolored nodes which have completed iteration $i-1$.

  Recall that $T$ is an upper bound on the radius of any ball
  considered in one of the ball growing processes. Hence as soon as
  all nodes within distance $T$ of a node $u$ have completed iteration
  $i-1$ and if $u$ is still uncolored and is not adjacent to a node of
  color $i$, we can safely start growing a cluster of color $i$ from
  node $u$. In order to complete iteration $i$ of the sequential
  decomposition algorithm at all nodes in a set $S$, it is therefore
  sufficient if all nodes within distance at most $T$ of $S$ have
  completed iteration $i-1$.  At the end, we need to finish all $\ell$
  iterations of the sequential algorithms at all nodes in $B_L(u)$. We
  can do this after finishing the first $\ell-1$ iterations for all
  nodes in $B_{L+T}(u)$. By a simple inductive argument, we see that
  for all $i\in \set{0,\dots,\ell-1}$, we need to finish iteration
  $\ell-i$ at all nodes in $B_{L+iT}(u)$. Hence, after reading the
  state of $B_{L+\ell T}(u)$, node $u$ can locally perform iteration
  $1$ for all nodes in $B_{L+(\ell-1)T}(u)$. This yields enough
  information to perform iteration $2$ for all nodes in
  $B_{L+(\ell-2)T}(u)$, and so on. Node $u$ therefore only needs to
  read the ball of radius $R=L+\ell T$ and it can afterwards finish
  the network decomposition algorithm for all nodes in its
  $R$-neighborhood. This is sufficient to then also run the $L$-round
  \LOCAL algorithm on top of it.
\end{proof}
}

\section{Low Diameter Ordering \& Network Decomposition are in {\boldmath$\polyseqloc$}}
\subsection{Network Decomposition via Sequential Ball Growing}
\label{alg:APLS}

In this section, we review the centralized sequential $\big(O(\log n), O(\log n)\big)$-decomposition algorithm, which is contributed to Linial and Saks\cite{linial93} and Awerbuch and Peleg\cite{Awerbuch-Peleg1990}. 

Recall from \Cref{def:decomposition} that a weak \emph{$\big(d(n),c(n)\big)$-decomposition} of an
  $n$-node graph $G=(V,E)$ is a partition of $V$ into clusters such that each cluster has weak diameter at most $d(n)$ and the
  cluster graph is properly colored with colors $1,\dots,c(n)$. We refer to the vertices of the clusters of each color $i$ as block $i$ and denote them by $V_i$. Thus, this decomposition partitions $V$ into \emph{blocks} $V_1,\ldots,V_{c(n)}$. 

The sequential algorithm of \cite{linial93, Awerbuch-Peleg1990} constructs the decomposition one block at a time. We describe one block of the construction, show that it produces non-adjacent clusters each with weak diameter $d(n)=O(\log n)$, and argue that it removes a constant fraction of the nodes. Thus, after $O(\log n)$ blocks, all nodes are removed and thus we have a $\big(O(\log n), O(\log n)\big)$-decomposition.

\paragraph{Construction of one block:} Let $G_i=G[V\setminus \big(V_1\cup\ldots \cup V_{i-1}\big)]$ be the subgraph of $G$ left after removing the vertices of blocks $V_1$ to $V_{i-1}$. We construct the clusters of the block $V_i$, one at a time. During this process, we will discard some vertices of $G_i$, once they are processed, and thus $G_i$ is gradually shrinking.

Repeat the following process until $G_i$ is empty: Pick an arbitrary vertex $v \in G_i$ and start the following ball growing process on $G_i$: Find the smallest radius $r^*$ such that
\begin{align}
\label{eqn:ballGrowing}
\frac{|B^{G_i}_{r^*+1}(v)|}{|B^{G_i}_{r^*}(v)|}\leq 2.
\end{align}
Note that $r^*\leq \log_2 n$, because otherwise we would have $|B^{G_i}_{\log_2 n+1}(v)| >n$, which would be a contradiction with the graph having only $n$ vertices. Add nodes of $B^{G_i}_{r^*}(v)$ as one cluster of $V_i$, and then remove nodes $B^{G_i}_{r^*+1}(v)$ from $G_i$.

\begin{lemma}The sequential ball growing algorithm of Linial and Saks\cite{linial93} and Awerbuch and Peleg\cite{Awerbuch-Peleg1990} described above computes an $\big(O(\log n), O(\log n)\big)$-decomposition. 
\end{lemma}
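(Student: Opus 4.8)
The plan is to verify the three defining properties of a weak $\big(O(\log n),O(\log n)\big)$-decomposition for the clustering produced by the algorithm: each cluster has weak diameter $O(\log n)$ in $G$, coloring each cluster by the index of its block is a proper coloring of the cluster graph, and there are only $O(\log n)$ blocks. The diameter bound is immediate from the construction: when a cluster $C=B^{G_i}_{r^*}(v)$ is created, the stopping rule forces $r^*\le \log_2 n$ (as already noted in the text, since $|B^{G_i}_{\log_2 n+1}(v)|>n$ would be impossible), and every node of $C$ is within distance $r^*$ of $v$ in the current residual graph; since that residual graph is an induced subgraph of $G$, graph distances only shrink, so any two nodes of $C$ are at distance at most $2r^*\le 2\log_2 n$ in $G$. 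Hence $d(n)=O(\log n)$, and in fact this also bounds the strong diameter.

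For the proper coloring, assign each cluster the index $i$ of the block it belongs to; I would then show that two clusters of the same block $i$ are non-adjacent in $G$. The point is that right after a cluster $C=B^{G_i}_{r^*}(v)$ is formed, the whole ball $B^{G_i}_{r^*+1}(v)$ --- the cluster plus a one-hop buffer layer --- is deleted from the residual graph. Take any node $w$ that survives this deletion, so $d_{G_i}(v,w)\ge r^*+2$. If $w$ had a $G$-neighbor $u\in C$, then $u$ and $w$ are both present in the residual graph when $C$ is formed, so the edge $\{u,w\}$ lies in it, giving $d_{G_i}(v,w)\le d_{G_i}(v,u)+1\le r^*+1$, a contradiction. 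Thus no still-available node --- in particular no node that later joins another cluster of block $i$ --- is $G$-adjacent to $C$; nodes deleted inside earlier clusters of block $i$ are not placed in any cluster of block $i$ and so cause no conflict. This yields the proper coloring.

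For the number of blocks, I would track $n_i:=|V(G_i)|$, the number of vertices not yet assigned to blocks $1,\dots,i-1$. During block $i$, every vertex of $G_i$ is eventually either put into some cluster of $V_i$ or deleted as a buffer node, and the buffer nodes of block $i$ are exactly the vertices of $G_{i+1}$, so $n_i=|V_i|+n_{i+1}$. Each cluster-creation step puts $|B^{G_i}_{r^*}(v)|$ nodes into $V_i$ and removes $|B^{G_i}_{r^*+1}(v)|\le 2|B^{G_i}_{r^*}(v)|$ nodes in total, hence produces at most $|B^{G_i}_{r^*}(v)|$ buffer nodes; summing over all steps of block $i$ yields $n_{i+1}\le |V_i|=n_i-n_{i+1}$, i.e. $n_{i+1}\le n_i/2$. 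Since $n_1=n$, after $\lceil\log_2 n\rceil+1$ blocks no vertex remains, so $c(n)=O(\log n)$.

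I expect the only real care needed is in the proper-coloring step: one must make sure the right adjacency notion is used ($G$-adjacency, not residual-graph adjacency) and that the one-hop buffer genuinely separates $C$ from everything that could end up in another same-block cluster --- both the still-available nodes and the nodes that will become buffer nodes later in block $i$. The rest is a routine unrolling of the ball-growing invariant $|B^{G_i}_{r^*+1}(v)|\le 2|B^{G_i}_{r^*}(v)|$ together with the bound $r^*\le\log_2 n$.
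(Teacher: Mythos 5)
Your proposal is correct and follows the paper's own argument: the diameter bound from $r^*\le\log_2 n$, the one-hop buffer $B^{G_i}_{r^*+1}(v)\setminus B^{G_i}_{r^*}(v)$ to keep same-block clusters non-adjacent, and the $|B^{G_i}_{r^*+1}(v)|\le 2|B^{G_i}_{r^*}(v)|$ invariant to show each block halves the remaining vertices. You simply unpack each step more explicitly than the paper's terse sketch; the only slightly off phrasing is that nodes removed during an earlier cluster-creation step of block $i$ may well lie in a cluster of block $i$ (namely that earlier cluster), but your ``no surviving node is $G$-adjacent to $C$'' claim, applied to each cluster in order of creation, already covers all pairs, so no gap results.
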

\begin{proof}
It is easy to see that due to condition \Cref{eqn:ballGrowing}, each block removes at least a constant fraction of the unclustered nodes. Hence, $O(\log n)$ blocks suffice. 

In each block $V_i$, each cluster has weak diameter at most $2r^*\leq 2\log_2 n$, because it was found as a ball of radius at most $r^*$ around some node $v$. Furthermore, no two clusters of the same block are adjacent because when constructing the first cluster, its boundary nodes are removed from the graph but not added to the cluster, that is, we remove $B^{G_i}_{r^*+1}(v)$ but define only $B^{G_i}_{r^*}(v)$ to be a cluster.
\end{proof}

\subsection{Low Diameter Ordering \& Network Decomposition are in {\boldmath$\polyseqloc$}}
Now, we adapt the deterministic sequential algorithm of the previous subsection to the \SLOCAL model. This allows us to compute a network decomposition, and also a low-diameter ordering, in $\poly\log n$ rounds of the \SLOCAL model.
\begin{lemma}
\label{lemma:networkDecomp}
Computing a $\big(O(\log n), O(\log n)\big)$-decomposition of a given $n$-node graph is in $\polyseqloc$.
\end{lemma}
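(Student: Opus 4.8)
The plan is to adapt the centralized sequential ball-growing decomposition of Linial and Saks (recalled as \Cref{alg:APLS}) to the \SLOCAL model, by first realizing it as an $O(\log n)$-phase \SLOCAL algorithm with $O(\log n)$ locality per phase and then invoking \Cref{observation:memoryWriting} and \Cref{lemma:phaseReduction} to collapse it into a single-phase algorithm with $\poly\log n$ locality. Concretely, I would use one phase per block of the decomposition: phase $i$ builds the block $V_i$ out of the vertices not yet assigned to any earlier block. Each node $v$ keeps in $S_v$ a status flag recording whether it is still ``unassigned'', or ``assigned to block $j$, lying in the cluster centered at $w$'', or ``removed as a boundary vertex during phase $i$'' (the last meaning $v$ is eligible again from phase $i+1$ on). Whenever a cluster is grown from a center $v$, node $v$ writes the appropriate flag into the memory of every vertex of the grown ball and of its boundary shell; by \Cref{observation:memoryWriting} this memory-writing costs only a constant factor in locality.

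In detail, during phase $i$ the adversary hands the algorithm an order $p$; when a vertex $v$ is processed it reads its radius-$(\log_2 n+1)$ neighborhood in $G$ and checks whether it is still ``in $G_i$'', i.e., unassigned and not yet boundary-removed in phase $i$. If not, $v$ does nothing, since its status already records the outcome. If $v$ is still in $G_i$, it runs the ball-growing step of \Cref{alg:APLS} restricted to the vertices currently in $G_i$: it finds the smallest $r^*$ with $|B^{G_i}_{r^*+1}(v)|\le 2|B^{G_i}_{r^*}(v)|$ (still $r^*\le\log_2 n$ by the counting argument of \Cref{alg:APLS}), declares $B^{G_i}_{r^*}(v)$ a new cluster of block $i$ centered at $v$, and marks the vertices of $B^{G_i}_{r^*+1}(v)\setminus B^{G_i}_{r^*}(v)$ as boundary-removed in phase $i$. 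Because $G_i$ is a subgraph of $G$, we have $B^{G_i}_r(v)\subseteq B^{G}_r(v)$, so all of this lies inside $v$'s radius-$(\log_2 n+1)$ ball of $G$; hence phase $i$ has locality $O(\log n)$.

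The correctness follows the centralized analysis, with a little extra care for the arbitrary processing order. Every vertex that is in $G_i$ at the start of phase $i$ is, by the end of the phase, either clustered into $V_i$ or marked boundary-removed: each vertex is eventually processed, and a processed vertex is either already out of $G_i$ or becomes a center whose ball contains it. Two clusters of the same block are non-adjacent in $G$: if $w$ lies in a cluster $C$ centered at $v$, and $w'$ lies in a block-$i$ cluster formed \emph{after} $C$, with $\{w,w'\}\in E(G)$, then at the instant $C$ was formed $w'$ was neither clustered nor boundary-removed (it belongs to a later cluster), hence $w'$ was still in $G_i$, so $\{w,w'\}\in E(G_i)$ at that moment and $w\in B^{G_i}_{r^*}(v)$ would force $w'\in B^{G_i}_{r^*+1}(v)$, i.e.\ $w'$ would have been removed --- a contradiction. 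Since $|B^{G_i}_{r^*}(v)|\ge|B^{G_i}_{r^*+1}(v)|/2$ at each step, at least half of the vertices that leave $G_i$ during phase $i$ get clustered, so $|V(G_{i+1})|\le|V(G_i)|/2$ and $\ell=O(\log n)$ phases exhaust all vertices; each cluster has weak diameter $2r^*=O(\log n)$, and coloring every cluster by its block index yields a proper coloring of the cluster graph with $\ell=O(\log n)$ colors. Hence the phases jointly produce an $(O(\log n),O(\log n))$-decomposition.

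Finally, assembling: the construction above is an $\ell=O(\log n)$-phase \SLOCAL algorithm (with memory-writing) whose phase $i$ has locality $r_i=O(\log n)$. By \Cref{observation:memoryWriting} each phase can be turned into a memory-writing-free one with locality still $O(\log n)$, and then by \Cref{lemma:phaseReduction} the whole algorithm becomes single-phase with locality $r_1+2\sum_{i=2}^{\ell}r_i=O(\log^2 n)=\poly\log n$, so the decomposition problem is in \polyseqloc. I expect the main obstacle to be exactly this coupling between the inherently sequential ball-growing inside a block and the adversarial order of the \SLOCAL model: one must verify that the shared per-vertex status, read and written within a radius-$O(\log n)$ ball, faithfully reproduces the centralized execution of \Cref{alg:APLS}, and that the three invariants --- bounded radius $r^*\le\log_2 n$, non-adjacency of same-block clusters, and constant-fraction shrinkage --- survive no matter in which order the cluster centers happen to be visited.
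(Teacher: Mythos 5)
Your proof is correct and takes essentially the same route as the paper's own (alternative) argument: translate the sequential Linial--Saks ball-growing decomposition into an $O(\log n)$-phase \SLOCAL algorithm with memory-writing, then collapse to a single phase via \Cref{observation:memoryWriting} and \Cref{lemma:phaseReduction}. Your write-up fills in details that the paper leaves implicit, notably the verification that ball-growing, non-adjacency of same-block clusters, and the constant-fraction shrinkage all survive under an adversarial processing order within each phase.
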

\begin{proof}
The proof of \Cref{lemma:phaseReduction} shows how a $\big(\bigO(\log n), \bigO(\log n)\big)$-decomposition  can be computed in a single phase.
%
\end{proof}
Alternatively to the above proof and if one assumes that nodes can write into other nodes' memory (cf. \Cref{observation:memoryWriting}), the deterministic sequential $\big(\bigO(\log n), \bigO(\log n)\big)$-decomposition algorithm from the previous section directly translates into an  \SLOCAL algorithm  with $\bigO(\log n)$ phases, which then can be transferred into a single-phase \SLOCAL algorithm with polylogarithmic locality with \Cref{lemma:phaseReduction}.

\begin{lemma} 
  The problem of computing a $\poly\log n$-diameter ordering is in \polyseqloc.
\end{lemma}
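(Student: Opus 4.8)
The plan is to obtain this membership essentially for free from what has already been established, namely that computing an $\big(O(\log n),O(\log n)\big)$-decomposition is in $\polyseqloc$ (\Cref{lemma:networkDecomp}) together with \Cref{observation:lowDiameterNetworkDecomp}. Concretely, I would take the single-phase \SLOCAL\ algorithm of \Cref{lemma:networkDecomp}, which computes a weak $\big(O(\log n),O(\log n)\big)$-decomposition with locality $\poly\log n$, and simply augment it so that when a node $v$ is processed it records in $S_v$ and outputs the label $\ell_v := (q_v, \mathsf{ID}_v)$, where $q_v$ is the color of $v$'s cluster. Since $\mathsf{ID}_v$ is part of $v$'s input and $q_v$ is produced by the decomposition algorithm, this post-processing reads no new part of the state, so the locality of the augmented \SLOCAL\ algorithm is still $\poly\log n$.

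It then remains to verify that the output $\ell = (\ell_v)_{v\in V}$ is a $\poly\log n$-diameter ordering in the sense of \Cref{def:ordering}. The labels are unique because the IDs are. For the diameter bound, I would argue as in \Cref{observation:lowDiameterNetworkDecomp}: consider any path $P=(u_0,\dots,u_k)$ along which the labels are increasing in lexicographic order. Then the cluster colors are non-decreasing along $P$, so $P$ decomposes into at most $c(n)=O(\log n)$ maximal subpaths of constant color. Any two adjacent nodes of the same color must lie in the same cluster, because adjacent clusters receive distinct colors in the decomposition; hence each constant-color subpath lies inside a single cluster of weak diameter $d(n)=O(\log n)$ in $G$, and consecutive clusters along $P$ are adjacent. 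Therefore the two endpoints of $P$, and in fact any two nodes of $P$, are within distance $O(d(n)\cdot c(n))=O(\log^2 n)$ in $G$. Thus the output is an $O(\log^2 n)$-diameter ordering, which is in particular a $\poly\log n$-diameter ordering; and since any $d$-diameter ordering is also a $d'$-diameter ordering for every $d'\ge d$, the very same construction solves the problem for every polylogarithmic target $d(n)\ge c\log^2 n$.

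I do not expect a genuine obstacle here: all the real work is already carried by \Cref{lemma:networkDecomp} (whose proof, in turn, relies on the sequential ball-growing decomposition and the single-phase reduction of \Cref{lemma:phaseReduction}). The reduction from low-diameter ordering to network decomposition is a zero-cost, purely local relabeling step. The only point requiring a little care is the bookkeeping claim that emitting $\ell_v$ as output does not increase the locality, which is immediate since $\ell_v$ depends only on $v$'s own input and on information the decomposition algorithm already computes for $v$.
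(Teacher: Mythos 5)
Your proof is correct and follows the same route as the paper: invoke \Cref{lemma:networkDecomp} to compute an $\big(O(\log n),O(\log n)\big)$-decomposition in \SLOCAL with polylogarithmic locality, then convert it to an $O(\log^2 n)$-diameter ordering via \Cref{observation:lowDiameterNetworkDecomp} by labeling each node with $(q_v,\mathsf{ID}_v)$. You merely spell out the details of the observation, which the paper leaves terse, but there is no substantive difference.
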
 
\begin{proof} 
  The result follows with \Cref{lemma:networkDecomp} and
  \Cref{observation:lowDiameterNetworkDecomp}.
\end{proof}


\newpage
\bibliographystyle{alpha}
\bibliography{references}

\end{document}